\documentclass[pra,twocolumn,showpacs,preprintnumbers,amsmath,amssymb,superscriptaddress]{revtex4}

\usepackage{bm}
\usepackage{graphicx}
\usepackage{amsbsy}
\usepackage{amsmath}
\usepackage{amsfonts}
\usepackage{amsthm}

\begin{document}

\theoremstyle{plain}
\newtheorem{theorem}{Theorem}
\newtheorem{lemma}[theorem]{Lemma}
\newtheorem{corollary}[theorem]{Corollary}
\newtheorem{conjecture}[theorem]{Conjecture}
\newtheorem{proposition}[theorem]{Proposition}

\theoremstyle{definition}
\newtheorem{definition}{Definition}

\theoremstyle{remark}
\newtheorem*{remark}{Remark}
\newtheorem{example}{Example}

\title{All Maximally Entangled Four Qubits States}   

\author{Gilad Gour}\email{gour@math.ucalgary.ca}
\affiliation{Institute for Quantum Information Science and 
Department of Mathematics and Statistics,
University of Calgary, 2500 University Drive NW,
Calgary, Alberta, Canada T2N 1N4} 
\author{Nolan R. Wallach}\email{nwallach@ucsd.edu}
\affiliation{Department of Mathematics, University of California/San Diego, 
        La Jolla, California 92093-0112}

\date{\today}

\begin{abstract}
We find an operational interpretation for the 4-tangle as a type of residual entanglement, somewhat similar to the interpretation of the 3-tangle. Using this remarkable interpretation, we are able to find the class of maximally entangled four-qubits states which is characterized by four real parameters. The states in the class are maximally entangled in the sense that their average bipartite entanglement with respect to all possible bi-partite cuts is maximal.  We show that while all the states in the class maximize the average tangle, there are only few states in the class that maximize the average Tsillas or Renyi $\alpha$-entropy of entanglement. Quite remarkably, we find that up to local unitaries, there exists two unique states, one maximizing the average $\alpha$-Tsallis entropy of entanglement for \emph{all} $\alpha\geq 2$, while the other maximizing it for all $0<\alpha\leq 2$ (including the von-Neumann case of $\alpha=1$). Furthermore, among the maximally entangled four qubits states, there are only 3 maximally entangled states that have the property that for 2, out of the 3 bipartite cuts consisting of 2-qubits verses 2-qubits, the entanglement is 2 ebits and for the remaining bipartite cut the entanglement between the two groups of two qubits is 1ebit. The unique 3 maximally entangled states are the 3 cluster states that are related by a swap operator. We also show that the cluster states are the only states (up to local unitaries) that maximize the average $\alpha$-Renyi entropy of entanglement for \emph{all} $\alpha\geq 2$.
\end{abstract}  

\pacs{03.67.Mn, 03.67.Hk, 03.65.Ud}

\maketitle

\section{Introduction}

Entanglement lies at the heart of quantum physics.
It was clear immediately after the discovery of quantum mechanics that entanglement is not ``one but rather the characteristic trait of quantum mechanics, the one that enforces its entire departure from classical lines of thought"~\cite{Sch35}. Nevertheless, it was not until recently, that entanglement, besides of being interesting from a fundamental point of view, was also recognized as a valuable resource for two-party communication tasks such as teleportation~\cite{Ben93} and superdense coding~\cite{Ben92}. 
With the emergence of quantum information science in recent years, much effort has been given to the study of bipartite entanglement~\cite{Hor09}; in particular, to its characterization, manipulation and quantification~\cite{Ple07}. It was realized that maximally entangled states are the most desirable resources for many quantum information processing (QIP) tasks.
While two-party entanglement was very well studied, entanglement in multi-party systems is far less understood, and even the identification of maximally entangled states in multi-party systems is a highly non-trivial task. 

The understanding of highly entangled multi-qubit states is crucial for the implementation of many QIP tasks in quantum networks.
Highly entangled multi-qubit states, such as the cluster states or graph states, are the key resource of one-way or measurement based quantum computer~\cite{Bri01}, and as such raised enormous interest in the QIP community. Even experimental realizations of one-way quantum computing with four-qubit cluster states has been demonstrated successfully~\cite{Wal05}.
Highly entangled multi-qubit states are also the key ingredients of various quantum error correction codes and quantum communication protocols~\cite{Cle99,Sch99,Gou07}.
However, unlike bipartite entanglement, very little is known about the characterization of entanglement in multi-qubits systems.

The complexity in the characterization of entanglement in multi-partite systems can already be seen in the fact that for 3 qubits
there are essentially two types of genuine 3-partite entanglement and even the notion of maximally entangled states is 
not unique~\cite{Vid00}.
One can think of both the GHZ-state, $|\text{GHZ}\rangle=(|000\rangle+|111\rangle)/\sqrt{2}$, and the W-state 
$|W\rangle=(|001\rangle+|010\rangle+|100\rangle)/\sqrt{3}$ as two types of maximally entangled states, that are not related by stochastic
local operations and classical communication (SLOCC). Nevertheless, in the case of 3-qubits, one can single out the GHZ-state as the unique
maximally entangled state for the following two reasons. First, the GHZ class (i.e. the set of states that can be obtained from the GHZ state by SLOCC)
is dense in the space of 3-qubits and therefore the W-class is of measure zero.
This means that by LOCC it is possible to convert the GHZ state to a state that is arbitrarily close to the W-state, but the W-state can not be converted (not even by SLOCC)
to a state that is close to the GHZ-state. Second, the GHZ-state is the only 3-qubit state with the property that the bipartite entanglement between any one qubit and the other two qubits is maximal; that is, the reduced density matrix obtained after the tracing out of any two qubits is proportional to the identity. 

Similarly, for n-qubits one can define maximally entangled states as states with the property that the reduced density matrix obtained after the tracing out of any $k$ qubits, with 
$n/2\leq k\leq n-1$, is proportional to the identity. For example, the codeword states of the 5-qubits error correcting codes are maximally entangled~\cite{Ben96}. However, as we show below,
for four qubits such states do not exist. It is also known~\cite{Rains} that maximally entangled states exist for $n=6$ and do not exist for $n\geq 8$. To the authors knowledge, the case of $n=7$ is unknown.

In this paper we find an operational interpretation of the 4-tangle which enable us to 
characterize all maximally entangled four-qubits states.
We define a state to be maximally entangled if its average bipartite entanglement with respect to all possible bi-partite cuts is maximal (e.g. see~\cite{Love06,Sco04} and references therein). More precisely, we divide the four qubits into two groups, each consisting of two qubits, and calculate the pure bipartite entanglement between the two groups of qubits. We then find the class of all states that maximize the average entanglement
of the $3={4 \choose 2}/2$ such (inequivalent) bi-partite cuts. We find that when we take the measure of bipartite entanglement to be the tangle, there is a 4-real parameter class of states $\mathcal{M}$ that maximize the average tangle. 
However, when we take the measure to be the entropy of entanglement, or the Tsallis and Renyi $\alpha-$entropy of entanglement, we get that up to local unitary there are only two states that maximize the average $\alpha-$entropy of entanglement. Quite remarkably, we find that up to local unitaries the state
\begin{equation}
|L\rangle=\frac{1}{\sqrt{3}}\left[u_0+\omega u_1+\omega^2 u_2\right]\;,
\label{statel}
\end{equation}
where $\omega=e^{i2\pi/3}$,
\begin{align}
& u_0\equiv|\phi^{+}\rangle|\phi^{+}\rangle\;\;,\;\;u_1\equiv|\phi^{-}\rangle|\phi^{-}\rangle\nonumber\\
& u_2\equiv|\psi^{+}\rangle|\psi^{+}\rangle\;,\;\;
u_3\equiv|\psi^{-}\rangle|\psi^{-}\rangle\nonumber
\end{align}
and $|\phi^{\pm}\rangle=(|00\rangle\pm|11\rangle)/\sqrt{2}$ and $|\psi^{\pm}\rangle=(|01\rangle\pm|10\rangle)/\sqrt{2}$,
is the only state that maximize the average Tsallis $\alpha-$entropy of entanglement for \emph{all} $\alpha> 2$ (see Fig.~1).
Interesting properties of the state $|L\rangle$ have been discussed in~\cite{Love06,Jens}.

On the other hand, we show that the state
\begin{equation}
|M\rangle  =\frac{i}{\sqrt{2}}u_0+\frac{1}{\sqrt{6}}\left[u_1+u_2+u_3\right]
\label{entropystates}
\end{equation}
is the only state that maximize the Tsallis $\alpha-$entropy of entanglement for all $0<\alpha< 2$.
Ten years ago the state $|M\rangle$  was conjectured to maximize the entropy of entanglement ~\cite{HS}.
More recently, it was proved that locally it is indeed maximally entangled~\cite{BH}. In Fig.~1 we draw a graph of
the average Tsallis $\alpha$-entropy of entanglement as a function of $\alpha$ for the states $|M\rangle$ and $|L\rangle$.

\begin{figure}[tp]
\includegraphics[scale=.48]{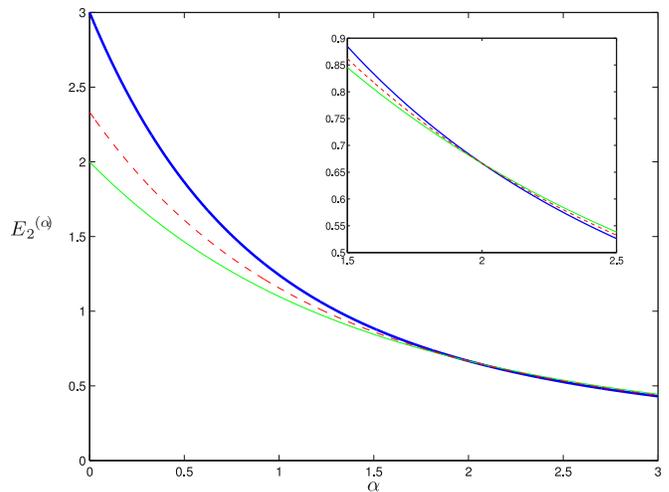}
\caption{A graph of the average Tsallis $\alpha$-entropy of entanglement as a function of $\alpha$. The blue line corresponds to the state $|M\rangle$, the green line to the state $|L\rangle$, and the dashed red line to the cluster states. Like the cluster states, the graph for any maximally entangled state in $\mathcal{M}$ is between the blue and green lines.}
\end{figure}  

In addition, among the maximally entangled four qubits states, we identify 
3 ultimate maximally entangled states that have the property that for 2, out of the 3 bipartite cuts, the entanglement is 2 ebits and for the last bipartite cut the entanglement between the groups of two qubits is 1ebit. The unique 3 maximally entangled states are the 3 cluster states that are related by a swap operator (but not by SLOCC):
\begin{align}
& |C_1\rangle=\frac{1}{2}\left(|0000\rangle+|1100\rangle+|0011\rangle-|1111\rangle\right)\label{C1}\\
& |C_2\rangle=\frac{1}{2}\left(|0000\rangle+|0110\rangle+|1001\rangle-|1111\rangle\right)\label{C2}\\
& |C_3\rangle=\frac{1}{2}\left(|0000\rangle+|1010\rangle+|0101\rangle-|1111\rangle\right)\label{C3}
\end{align}
We show that these cluster states are the only states that maximize the Renyi $\alpha-$entropy of entanglement 
for all $\alpha\geq 2$.

This paper is organized as follows, in section (II) we discuss the generic class of four qubits states, consisting
of an uncountable number of SLOCC-inequivalent classes. In section (III) we find an operational interpretation of the 
4-tangle and discover a four real parameter class
of all four qubits states that maximize the average tangle. We then use this result in section (IV) to find maximally
entangled states with respect to other measures of entanglement, such as the Tsallis and Renyi $\alpha-$entropy
of entanglement. In section (V) we discuss more maximally entangled four qubits states. We end in section (VI)
with a summary, conclusions and a discussion on the extension of the results presented here in higher dimensions.

\section{Uncountable number of four qubits SLOCC-inequivalent classes}

In~\cite{Ver02} it was argued that 4-qubits pure states can be classified into nine groups of states.
One of these nine groups is called the generic class as with the action of SLOCC it is dense in the space of 4-qubits 
$\mathcal{H}_4\equiv \mathbb{C}^2\otimes\mathbb{C}^2\otimes\mathbb{C}^2\otimes\mathbb{C}^2$.
The generic class is given by
$$
\mathcal{A}\equiv\Big\{z_0u_0+z_1u_1+z_2u_2+z_3u_3\Big|\;z_0,z_1,z_2,z_3\in\mathbb{C}\Big\}\;.
$$ 

In~\cite{Ver02,W} it has been shown that all the states that are connected to the class  $\mathcal{A}$ by SLOCC
form a dense set of states. That is, the class of states $G\mathcal{A}$, where 
$G\equiv \text{SL}(2,\mathbb{C})\otimes\text{SL}(2,\mathbb{C})\otimes\text{SL}(2,\mathbb{C})\otimes\text{SL}(2,\mathbb{C})$, is dense in $\mathcal{H}_4$. In the following we discuss several properties of the generic class that will be 
very useful for our theorems in the next sections. 

For $k=0,1,2,3$, we denote by $|k\rangle\rangle\equiv|ij\rangle$, with $i,j=0,1$, a state of two qubits, such that $ij$ is the binary representation of $k$. Hence, any $|\psi\rangle\in\mathcal{H}_4$ can be written as
\begin{equation}
|\psi\rangle=\sum_{k=0}^{3}\sum_{k'=0}^{3}T_{kk'}|k\rangle\rangle|k'\rangle\rangle\;,
\label{T}
\end{equation}
where $\{|k\rangle\rangle\}$ is the computational basis of qubits 1 and 2, and $\{|k'\rangle\rangle\}$ is the computational
basis of qubits 3 and 4. With these notations we define the following four quantities:
\begin{definition}
Let $|\psi\rangle\in\mathcal{H}_4$. Then,
$$
\mathcal{E}_m\left(|\psi\rangle\right)\equiv\left\{ 
                  \begin{array}{rll}
			\text{Det}\left[T_{\psi}\right]\;\;\;\;\;\;\;\;\;\;\;\;\;\;\;\;&\text{if }m=0 \\
			\text{Tr}\left[\left(T_{\psi}JT_{\psi}^{T}J\right)^m\right] &\text{if }m=1,2,3
			
		\end{array}\right.
$$
where $T_{\psi}$ is the $4\times 4$ matrix whose components $T_{kk'}$ are defined in Eq.~(\ref{T}), and
$$
J\equiv\left[
\begin{array}
[c]{cccc}%
0 & 0 & 0 & 1\\
0 & 0 & -1 & 0\\
0 & -1 & 0 & 0\\
1 & 0 & 0 & 0
\end{array}
\right]
$$
\end{definition}
The four polynomials defined above take a simple form on $\mathcal{A}$. If $|\psi\rangle=\sum_{j=0}^{3}z_ju_j$ then
\begin{equation}
\mathcal{E}_m(|\psi\rangle)=\left\{ 
                  \begin{array}{rll}
		           z_0z_1z_2z_3 \;\;\;\;\;\;\;\;\;\;\;\;\;\;\;\;\;\;\;\;\;\;\;\;&\text{if }m=0 \\
			       z_{0}^{2m}+z_{1}^{2m}+z_{2}^{2m}+z_{3}^{2m} &\text{if }m=1,2,3
			
		\end{array}\right.\nonumber
\end{equation}
In~\cite{Ver02,W} it has be shown that these four polinomials are invariant under the action of the group 
$G\equiv\text{SL}(2,\mathbb{C})\otimes\text{SL}(2,\mathbb{C})\otimes\text{SL}(2,\mathbb{C})\otimes\text{SL}(2,\mathbb{C})$.
That is, if $g\in G$ and $|\psi\rangle\in\mathcal{H}_4$ then $\mathcal{E}_m(g|\psi\rangle)=\mathcal{E}_m(|\psi\rangle)$ for
all $m=0,1,2,3$. Other polynomials that corresponds to true `tangles' have been considered for example in~\cite{Jens}.
As discussed in~\cite{Ver02,Ver03}, one of the consequences of this property is that
the four functions $f_m\equiv\big |\mathcal{E}_m\big |^{1/m}$ $(m=1,2,3)$ and $f_0\equiv\sqrt{|\mathcal{E}_{0}|}$ are entanglement monotones~\footnote{Note that $f_0$ is the G-concurrence~\cite{Gour} between qubits (1,2) and (3,4)}. However, here we show that this property implies that almost all the states in $\mathcal{A}$ are not related
by SLOCC, which means that $\mathcal{A}$ contains an uncountable number of SLOCC inequivalent classes of states.
\begin{proposition}(c.f.~\cite{W,Kly02,Ver03})
Let $|\psi\rangle$ and $|\psi'\rangle$ be two normalized states in $\mathcal{A}$. Then, 
the transformation $|\psi\rangle\rightarrow |\psi'\rangle$ can be achieved by SLOCC only if
$f_{m}\left(|\psi\rangle\right)=f_{m}\left(|\psi'\rangle\right)$ for all $m=0,1,2,3$.
\end{proposition}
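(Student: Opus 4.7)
The plan is to convert the SLOCC hypothesis into an exact algebraic relation $g|\psi\rangle=\lambda|\psi'\rangle$ with $g\in G$ and a residual scalar $\lambda\in\mathbb{C}^*$, then use the $G$-invariance together with the homogeneity of the $\mathcal{E}_m$'s to read off how the $f_m$'s transform, and finally force $|\lambda|=1$ from the fact that both states are normalized and sit inside $\mathcal{A}$.

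First I would translate the SLOCC hypothesis into the existence of $A_k\in\mathrm{GL}(2,\mathbb{C})$ and $c\in\mathbb{C}^*$ with $(A_1\otimes A_2\otimes A_3\otimes A_4)|\psi\rangle=c|\psi'\rangle$. Factoring each $A_k=\sqrt{\det A_k}\,\tilde g_k$ with $\tilde g_k\in\mathrm{SL}(2,\mathbb{C})$ absorbs every determinant into a single scalar $\mu=\prod_k\sqrt{\det A_k}$, giving $g|\psi\rangle=\lambda|\psi'\rangle$ with $g:=\tilde g_1\otimes\tilde g_2\otimes\tilde g_3\otimes\tilde g_4\in G$ and $\lambda:=c/\mu$. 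Recalling that $\mathcal{E}_m$ is $G$-invariant and homogeneous of degree $d_0=4$ and $d_m=2m$ for $m\ge 1$, one obtains
\begin{equation*}
\mathcal{E}_m(|\psi\rangle)=\mathcal{E}_m(g|\psi\rangle)=\lambda^{d_m}\,\mathcal{E}_m(|\psi'\rangle).
\end{equation*}
Taking absolute values and then the $m$-th (resp.\ square) root yields the uniform scaling
\begin{equation*}
f_m(|\psi\rangle)=|\lambda|^2\, f_m(|\psi'\rangle),\qquad m=0,1,2,3,
\end{equation*}
so that the claim reduces to showing $|\lambda|=1$.

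To pin down $|\lambda|$, I would observe that $g|\psi\rangle=\lambda|\psi'\rangle$ is proportional to an element of $\mathcal{A}$ and hence lies in $\mathcal{A}$. The structural input from \cite{Ver02,W,Kly02,Ver03}---that the $G$-orbit of a state in $\mathcal{A}$ meets $\mathcal{A}$ precisely in the orbit of its coefficient vector $(z_0,z_1,z_2,z_3)$ under the Weyl group of type $D_4$, acting by signed permutations---implies $g|\psi\rangle=\sum_i\epsilon_i z_{\pi(i)}u_i$ for some signs $\epsilon_i\in\{\pm 1\}$ and permutation $\pi$. Matching coefficients with $\lambda\sum_i z'_i u_i$ and summing squared moduli gives $|\lambda|^2\sum_i|z'_i|^2=\sum_i|z_{\pi(i)}|^2=\sum_i|z_i|^2$; since both states are normalized, this forces $|\lambda|^2=1$, yielding $f_m(|\psi\rangle)=f_m(|\psi'\rangle)$ for all $m$.

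The main obstacle is the structural step invoking the Weyl-group description of how $G$-orbits meet $\mathcal{A}$; this is nontrivial but is established in the cited references, and is essentially the statement that $\mathcal{A}$ is a Cartan subspace for the $G$-action with finite residual symmetry. A self-contained alternative would rely on the entanglement-monotone property of $f_m$ (recalled just above the proposition) together with the reversibility of invertible local operators; that route is subtler, because one has to extend the single SLOCC branch to a local POVM and track the success probabilities in both the forward and reverse directions, whereas the normal-form route eliminates the normalization scalar in one step.
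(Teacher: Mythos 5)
Your argument is correct, and its opening moves---absorbing the determinants of the local operators into a single scalar so that $g|\psi\rangle=\lambda|\psi'\rangle$ with $g\in G$, then exploiting $G$-invariance together with the degree-2 homogeneity of each $f_m$ to obtain $f_m(\psi)=|\lambda|^2 f_m(\psi')$---are exactly the (largely implicit) setup of the paper's own proof. The genuine difference lies in the lemma used to force $|\lambda|=1$. The paper invokes the Kempf--Ness theorem (its Appendix A): every $\psi\in\mathcal{A}$ is a critical vector of the $G$-action, so $\|g\psi\|\geq\|\psi\|=1$, giving $f_m(g\psi/\|g\psi\|)=f_m(\psi)/\|g\psi\|^2\leq f_m(\psi)$; applying the same inequality to $g^{-1}$ and $\psi'$ (equivalently, using the reversibility of an invertible SLOCC map between states of $\mathcal{A}$) yields the reverse inequality and hence equality. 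You instead call on the Cartan-subspace/little-Weyl-group description of $G\psi\cap\mathcal{A}$ as the set of signed permutations of the coefficient vector, from which $\|g\psi\|=\|\psi\|$ is immediate. Both routes work, but your structural input is strictly stronger than what the normalization step needs: the paper uses only the inequality $\|g\psi\|\geq\|\psi\|$ (the easy half of Kempf--Ness), whereas the full Weyl-orbit statement is essentially the sharpest form of the SLOCC classification on $\mathcal{A}$, and in the paper's exposition the signed-permutation criterion appears as a \emph{consequence} of this proposition rather than an ingredient of its proof. Two details you should make explicit if you keep your route: the local operators $A_k$ may be taken invertible because every normalized state of $\mathcal{A}$ has maximally mixed one-qubit marginals (Theorem 1), so preservation of local ranks forces $A_k\in\mathrm{GL}(2,\mathbb{C})$; and the Weyl-orbit statement applies here because all $G$-orbits through $\mathcal{A}$ are closed (again by Kempf--Ness), which is the hypothesis under which that restriction-type theorem holds.
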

\begin{proof}
Let $\psi\in\mathcal{A}$ be a normalized state and let $g\in G$. Then,
$$
f_m\left(\frac{g\psi}{\|g\psi\|}\right)=\frac{f_m(g\psi)}{\|g\psi\|^2}=\frac{f_m(\psi)}{\|g\psi\|^2}\leq f_m(\psi)\;,
$$
where in the last inequality we used the fact that for $\psi\in\mathcal{A}$ and $g\in G$,
$\|g\psi\|\geq\|\psi\|=1$ (see Appendix~\ref{KempfNess} for the Kempf-Ness theorem~\cite{KN}).
\end{proof}

In the definition above $\mathcal{E}_m$ is defined only for $m\leq 3$. The absolute value of the polynomials with higher values of $m$ are also entanglement monotones, but they are in the algebra generated by these four polynomials and therefore do not contain any additional information about the entanglement of the states. 

The proposition above implies that the class $\mathcal{A}$ contains an uncountable number of states that are not connected by SLOCC transformation.
More precisely, if $|\psi\rangle=\sum_{j=0}^{3}z_ju_j$ and $|\psi'\rangle=\sum_{j=0}^{3}z_j'u_j$, and there is no permutation $\sigma$ 
such that $z_j= \pm z_{\sigma (j)}'$ for $j=0,1,2,3$ with an even number of $-$ signs, then the transformation 
$|\psi\rangle\rightarrow |\psi'\rangle$ can {\it not} be achieved by SLOCC. Moreover, from the proposition below it follows that on $\mathcal{A}$ if two states are connected
by SLOCC operation, then the transformation must be a local unitary.
\begin{proposition}~\cite{W,Ver03,Kly02}\label{localunitary1}
Let $|\psi\rangle,|\psi'\rangle\in\mathcal{A}$. Then, 
the transformation $|\psi\rangle\rightarrow |\psi'\rangle$ can be achieved by a local unitary 
$U\in\text{SU}(2)\otimes \text{SU}(2)\otimes\text{SU}(2)\otimes\text{SU}(2)$
if and only if 
$\mathcal{E}_{m}\left(|\psi\rangle\right)=\mathcal{E}_{m}\left(|\psi'\rangle\right)$.
\end{proposition}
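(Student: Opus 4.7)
The ``only if'' direction is immediate, since $\text{SU}(2)^{\otimes 4}\subset G$ and the $G$-invariance of each $\mathcal{E}_m$ already established above covers it.

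For the converse, write $|\psi\rangle=\sum_j z_j u_j$ and $|\psi'\rangle=\sum_j z_j' u_j$. The first step is algebraic: the equalities $\mathcal{E}_m(|\psi\rangle)=\mathcal{E}_m(|\psi'\rangle)$ for $m=1,2,3$ identify the first three power sums of the multisets $\{z_j^2\}$ and $\{z_j'^2\}$, while $\mathcal{E}_0^2$ supplies the product $z_0^2 z_1^2 z_2^2 z_3^2$. Newton's identities then determine all four elementary symmetric polynomials in the squared coefficients, so the two multisets coincide. Consequently there exist $\sigma\in S_4$ and $\epsilon\in\{\pm 1\}^4$ with $z_j'=\epsilon_j z_{\sigma(j)}$, and matching $\mathcal{E}_0$ itself (not just its square) forces $\prod_j\epsilon_j=+1$.

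The substantive step is to realize every such signed permutation by a local unitary preserving $\mathcal{A}$. I would expand an arbitrary 4-qubit state in the Bell basis of the cuts $(1,2)$ and $(3,4)$ as $|\psi\rangle=\sum_{a,b}S_{ab}|B_a\rangle|B_b\rangle$, so that $\mathcal{A}$ is the subspace where $S$ is diagonal with $S_{jj}=z_j$. Via the standard surjective double cover $\text{SU}(2)\times\text{SU}(2)\to\text{SO}(4)$, a local unitary $U_1\otimes U_2$ induces a rotation $V\in \text{SO}(4)$ of the Bell basis on qubits $(1,2)$, with an analogous $W\in\text{SO}(4)$ on $(3,4)$, and $S$ transforms as $S\mapsto VSW^{T}$. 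Taking $V=PD_1$ and $W=PD_2$, with $P$ the permutation matrix associated to $\sigma$ and $D_1,D_2$ diagonal sign matrices whose product implements $\epsilon$, one computes directly that $VSW^{T}$ is again diagonal with precisely the required entries $z_j'$. The $\text{SO}(4)$ determinant conditions force $\det D_1=\det D_2=\text{sgn}(\sigma)$, which is consistent exactly because $\prod_j\epsilon_j=(\det D_1)(\det D_2)=+1$; lifting $V,W$ back through the double cover produces the required element of $\text{SU}(2)^{\otimes 4}$.

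The main obstacle I expect is the bookkeeping in this final step: one must verify that the full order-$192$ group $\{(\sigma,\epsilon):\prod_j\epsilon_j=+1\}$ is covered, with no elements lost to parity mismatches in the double cover or to the diagonality constraint. The $\text{SO}(4)$ picture handles this transparently, but as a sanity check I would also exhibit explicit generators and compute their action on $\{u_0,u_1,u_2,u_3\}$: for instance $\sigma_z\otimes I\otimes\sigma_z\otimes I$ realizes the Klein-four element $(01)(23)$, $\sigma_x\otimes I\otimes\sigma_x\otimes I$ realizes $(02)(13)$, $H^{\otimes 4}$ realizes the transposition $(12)$ (together generating a copy of $S_4$), and appropriate single-qubit phase operators acting on complementary qubits produce the allowed sign flips.
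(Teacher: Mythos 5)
The paper never proves this proposition --- it is imported wholesale from the cited references --- so your argument is necessarily a self-contained route of your own, and its main line is sound. The Newton's-identities step correctly recovers the multiset $\{z_j^2\}$ from $\mathcal{E}_1,\mathcal{E}_2,\mathcal{E}_3$ and $\mathcal{E}_0^2$, the sign bookkeeping via $\mathcal{E}_0$ works (with the small remark that when $\mathcal{E}_0=0$ the sign attached to a vanishing coordinate is free, which is exactly what rescues the degenerate case), and realizing the order-$192$ group of evenly-signed permutations through the surjection $\text{SU}(2)\times\text{SU}(2)\to\text{SO}(4)$, with $V=PD_1$, $W=PD_2$ and the determinant constraint $\det D_1=\det D_2=\text{sgn}(\sigma)$, does cover every required element.

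Two points need repair before this is a proof. First, the statement that ``$U_1\otimes U_2$ induces a rotation $V\in\text{SO}(4)$ of the Bell basis'' is false in the plain Bell basis: $e^{i\theta\sigma_z}\otimes I$ sends $|\phi^+\rangle$ to $\cos\theta\,|\phi^+\rangle+i\sin\theta\,|\phi^-\rangle$, which is not a real rotation. The correspondence with $\text{SO}(4)$ holds in the \emph{magic} basis (Bell states decorated with factors of $i$); since the change of basis is diagonal with entries in $\{\pm 1,\pm i\}$ and the same twist appears on both pairs of qubits, the effect on the diagonal coefficients of $S$ is only a fixed even sign pattern, so the conclusion survives --- but the step must be run in the magic basis for the determinant argument to make sense. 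Second, and more concretely, your sanity-check generators do \emph{not} generate $S_4$: the double transpositions $(01)(23)$ and $(02)(13)$ generate the normal Klein four-group $V$, and adjoining the single transposition $(12)$ gives only $V\rtimes\langle(12)\rangle$, a dihedral group of order $8$ that contains no $3$-cycles. You need one further generator outside this Sylow $2$-subgroup --- e.g.\ the bare transposition $(01)$, obtained by lifting $PD$ with $P$ the $(01)$ permutation matrix and $D=\mathrm{diag}(-1,1,1,1)$ (so $PD\in\text{SO}(4)$) on both pairs of qubits. The $\text{SO}(4)$ argument itself already yields the full group, so this error lives only in the cross-check, but as written that cross-check would fail to certify any $3$-cycle or single transposition.
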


For the purpose of this work, we generalize the proposition above to include all local unitaries; that is, not only those in
$\text{SU}(2)\otimes \text{SU}(2)\otimes\text{SU}(2)\otimes\text{SU}(2)$. 

\begin{proposition}\label{localunitary}
Set
$$ 
f_4 = |\mathcal{E}_{1}^{2} - \mathcal{E}_0|^2,\;f_5 = |\mathcal{E}_{1}^{2} - \mathcal{E}_2|^2,\;
f_6 = |\mathcal{E}_{1}^{3} - \mathcal{E}_3|^2\;.
$$
Let $|\psi\rangle,|\psi'\rangle\in\mathcal{A}$. Then, 
the transformation $|\psi\rangle\rightarrow |\psi'\rangle$ can be achieved by a local unitary 
$U\in\text{U}(2)\otimes \text{U}(2)\otimes\text{U}(2)\otimes\text{U}(2)$
if and only if 
$f_{m}\left(|\psi\rangle\right)=f_{m}\left(|\psi'\rangle\right)$ for all integers $0\leq m\leq 6$.
\end{proposition}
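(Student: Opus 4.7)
My plan is to reduce $U(2)^{\otimes 4}$-equivalence on $\mathcal{A}$ to the $SU(2)^{\otimes 4}$ statement of Proposition~\ref{localunitary1} by peeling off an overall phase. Since $U(2)=U(1)\cdot SU(2)$ in each tensor factor and the four $U(1)$ determinantal phases act on a four-qubit state only through their sum, every $U\in U(2)^{\otimes 4}$ can be written $U=e^{i\phi}V$ with $V\in SU(2)^{\otimes 4}$. Consequently, $|\psi\rangle\to|\psi'\rangle$ under $U(2)^{\otimes 4}$ is equivalent to the existence of $\phi\in\mathbb{R}$ such that $e^{-i\phi}|\psi'\rangle$ and $|\psi\rangle$ are $SU(2)^{\otimes 4}$-equivalent, which by Proposition~\ref{localunitary1} becomes the finite system
\[
\mathcal{E}_m(|\psi'\rangle)=e^{id_m\phi}\,\mathcal{E}_m(|\psi\rangle),\quad m=0,1,2,3,
\]
with $d_0=4$, $d_1=2$, $d_2=4$, $d_3=6$ the homogeneity degrees of the four polynomials in the state.

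The forward direction is then immediate: $|\mathcal{E}_m|$ is unchanged under the overall phase, giving equality of $f_0,f_1,f_2,f_3$; and in each of $\mathcal{E}_1^2-\mathcal{E}_0$, $\mathcal{E}_1^2-\mathcal{E}_2$, $\mathcal{E}_1^3-\mathcal{E}_3$ the two terms have matching total degree ($2d_1=d_0=d_2=4$ and $3d_1=d_3=6$), so the phase factors out of the difference and is killed by $|\cdot|^2$, giving equality of $f_4,f_5,f_6$. For the converse, setting $\zeta_m=\mathcal{E}_m(|\psi'\rangle)/\mathcal{E}_m(|\psi\rangle)$, equality of $f_0,\ldots,f_3$ yields $|\zeta_m|=1$, and existence of the required $\phi$ reduces to the three relations $\zeta_0=\zeta_1^2$, $\zeta_2=\zeta_1^2$, $\zeta_3=\zeta_1^3$ (for then $\zeta_1=e^{2i\phi}$ solves the whole system). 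Expanding
\[
|\mathcal{E}_1^2-\mathcal{E}_0|^2=|\mathcal{E}_1|^4+|\mathcal{E}_0|^2-2\,\text{Re}\bigl(\mathcal{E}_1^2\overline{\mathcal{E}_0}\bigr)
\]
and using the matched magnitudes, equality of $f_4$ translates to the phase constraint
\[
\text{Re}\bigl[\zeta_1^2\bar\zeta_0\,\mathcal{E}_1(|\psi\rangle)^2\overline{\mathcal{E}_0(|\psi\rangle)}\bigr]=\text{Re}\bigl[\mathcal{E}_1(|\psi\rangle)^2\overline{\mathcal{E}_0(|\psi\rangle)}\bigr],
\]
with the analogous identities coming from $f_5$ and $f_6$ for the unit-modulus quantities $\zeta_1^2\bar\zeta_2$ and $\zeta_1^3\bar\zeta_3$.

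The main obstacle is the sign ambiguity inherent in each such constraint: $\text{Re}(\epsilon w)=\text{Re}(w)$ with $|\epsilon|=1$ admits the two unit-modulus solutions $\epsilon=1$ and $\epsilon=e^{-2i\arg w}$, so each of $f_4,f_5,f_6$ alone leaves one discrete branch. The task is to show that the three constraints, taken together with the explicit coordinate description $\mathcal{E}_0=z_0z_1z_2z_3$ and $\mathcal{E}_m=\sum_j z_j^{2m}$ on $\mathcal{A}$, force the trivial branch to be selected. One fixes $\phi$ by $\zeta_1=e^{2i\phi}$ (possible since $|\zeta_1|=1$) and then uses the algebraic independence of $\mathcal{E}_0,\mathcal{E}_1,\mathcal{E}_2,\mathcal{E}_3$ on $\mathcal{A}$ (the latter three are the power sums in $z_j^2$ and $\mathcal{E}_0^2$ is the elementary symmetric $e_4$ in $z_j^2$, via Newton's identities) to show that any non-trivial branch for one of the constraints is inconsistent with the combined compatibility required of $\zeta_0,\zeta_2,\zeta_3$ under a single global phase. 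The degenerate strata where some $\mathcal{E}_m(|\psi\rangle)$ vanishes are dispatched by reverting to the explicit $z_j$-coordinates. Once $\phi$ is produced, Proposition~\ref{localunitary1} applied to $|\psi\rangle$ and $e^{-i\phi}|\psi'\rangle$ completes the argument.
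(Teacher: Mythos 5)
Your reduction of $U(2)^{\otimes 4}$-equivalence to Proposition~\ref{localunitary1} by peeling off a single global phase, together with the degree bookkeeping $d_0=d_2=2d_1=4$ and $d_3=3d_1=6$, is exactly the paper's route, and the forward direction is complete and correct. The problem is the converse, and you have in fact put your finger on precisely the step the paper's own proof glosses over: the paper asserts outright that $f_4(\psi)=f_4(\psi')$ forces $a=b^2$ (and likewise $c=b^2$, $d=b^3$), whereas, as you note, each condition of the form $\mathrm{Re}(\epsilon w)=\mathrm{Re}(w)$ with $|\epsilon|=1$ has the second solution $\epsilon=e^{-2i\arg w}$. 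But your proposal only announces a plan for killing the wrong branch (``algebraic independence \dots forces the trivial branch''; ``degenerate strata \dots dispatched by reverting to $z_j$-coordinates'') without executing it, and the plan cannot be executed as described. On the stratum $\mathcal{E}_1=0$ --- which is not a negligible corner but contains the entire maximally entangled class $\mathcal{M}$ --- the quantities $f_4,f_5,f_6$ collapse to $|\mathcal{E}_0|^2$, $|\mathcal{E}_2|^2$, $|\mathcal{E}_3|^2$, duplicating $f_0,f_2,f_3$ and carrying no relative-phase information whatsoever, so nothing selects a branch.

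Concretely, take $\psi=\tfrac12\bigl(u_0+iu_1+e^{i\pi/8}u_2+ie^{i\pi/8}u_3\bigr)$ and $\psi'=\tfrac12\bigl(u_0+iu_1+e^{3i\pi/8}u_2+ie^{3i\pi/8}u_3\bigr)$, both normalized states of the form~(\ref{last}). One computes $\mathcal{E}_1=\mathcal{E}_3=0$ for both, $\mathcal{E}_0(\psi)=-e^{i\pi/4}/16$, $\mathcal{E}_0(\psi')=-e^{3i\pi/4}/16$, $\mathcal{E}_2(\psi)=(1+i)/8$, $\mathcal{E}_2(\psi')=(1-i)/8$, so all seven invariants $f_0,\dots,f_6$ agree. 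Yet matching $\mathcal{E}_0$ requires $e^{4i\phi}=i$ while matching $\mathcal{E}_2$ requires $e^{4i\phi}=-i$, so no global phase exists; since $\mathcal{E}_m(e^{i\phi}V\psi)=e^{id_m\phi}\mathcal{E}_m(\psi)$ for $V\in SU(2)^{\otimes4}$ is the easy invariance direction, these two states are genuinely not $U(2)^{\otimes4}$-equivalent. So the ``if'' direction cannot be proved from $f_0,\dots,f_6$ alone on this stratum; closing your argument would require either additional invariants recording the relative phases of $\mathcal{E}_0,\mathcal{E}_2,\mathcal{E}_3$ when $\mathcal{E}_1=0$, or a genericity restriction. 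Identifying the soft spot is real progress over the paper's one-line assertion, but your sketch does not repair it.
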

\begin{proof}
Note that the first 4 conditions (i.e. $m=0,1,2,3$) imply that
\begin{align}
& \mathcal{E}_0(\psi) = a \mathcal{E}_0(\psi')\;,\;
\mathcal{E}_1(\psi) = b \mathcal{E}_1(\psi')\nonumber\\
& \mathcal{E}_2(\psi) = c \mathcal{E}_2(\psi')\;,\;
 \mathcal{E}_3(\psi) = d \mathcal{E}_3(\psi')\;,\nonumber
\end{align}
with $|a|=|b|=|c|=|d|=1$. 
$f_4(\psi)=f_4(\psi')$ implies that $a = b^2$. Similarly, the condition on $f_5$ implies $c=b^2$, and the condition
on $f_6$ implies $d = b^3$. Now, write $b = r^2$. We therefore have
\begin{align}
& \mathcal{E}_0(\psi) = \mathcal{E}_0(r\psi')\;,\;
\mathcal{E}_1(\psi) =  \mathcal{E}_1(r\psi')\nonumber\\
& \mathcal{E}_2(\psi) =  \mathcal{E}_2(r\psi')\;,\;
\mathcal{E}_3(\psi) =  \mathcal{E}_3(r\psi')\;.\nonumber
\end{align}
Thus, from proposition~\ref{localunitary1}, $|\psi\rangle$ and $r|\psi'\rangle$ are related by a local unitary in 
$\text{SU}(2)\otimes \text{SU}(2)\otimes\text{SU}(2)\otimes\text{SU}(2)$. The argument clearly can run backwards.
\end{proof}

As we show now, among the 4 entanglement monotones $f_m$ $(m=0,1,2,3)$, the 4-qubits entanglement monotone $f_1$ 
is the only one that is invariant under any permutation of the 4-qubits. In fact, we find that this monotone is the 4-tangle.

\subsection{The monotone $f_1\equiv |\mathcal{E}_1|$ and the 4-tangle}
Given a bipartite state $|\psi^{AB}\rangle\in\mathbb{C}^{n}\otimes\mathbb{C}^{m}$, the tangle (or the square
of the I-concurrence) is defined by
\begin{equation}\label{deftangle}
\tau_{AB}\equiv\tau\left(|\psi^{AB}\rangle\right)= S_{L}\left(\rho_{r}\right)= 2\left(1-\text{Tr}\rho_{r}^{2}\right)\;,
\end{equation}
where $\rho_{r}=\text{Tr}_{B}|\psi^{AB}\rangle\langle\psi^{AB}|$ is the reduced density matrix, and $S_{L}$ is the linear
entropy. 

For two qubits the tangle can be expressed as the square of the concurrence; that is, for
$|\varphi\rangle\in\mathbb{C}^{2}\otimes\mathbb{C}^{2}$ the tangle is
$$
\tau_{AB}=|\langle\varphi|\tilde{\varphi}\rangle|^2\;\;\text{where}\;\;|\tilde{\varphi}\rangle\equiv 
\sigma_y\otimes\sigma_y|\varphi^{*}\rangle\;,
$$
and $\sigma_y$ is the second Pauli matrix. Note that the basis is chosen such that 
$\sigma_y=\left(\begin{array}[c]{cc}
0 & i\\
-i & 0
\end{array}\right)
$, and if in this basis $|\varphi\rangle=\sum_{i,j}a_{ij}|ij\rangle$ then $|\varphi^{*}\rangle=\sum_{i,j}a_{ij}^{*}|ij\rangle$.

For mixed two qubits state $\rho^{AB}$, the tangle is defined in terms of the convex roof extension:
$$
\tau_{AB}\equiv\tau(\rho^{AB})\equiv\min\sum_{i}p_i\tau(|\psi_i\rangle)\;,
$$ 
where the minimum is taken over all the decompositions of the form $\rho^{AB}=\sum_{i}p_i|\psi_i\rangle\langle\psi_i|$.

In~\cite{CKW} it was shown that one can extend the definition of the 2-qubits tangle to 3-qubits. Given a 3-qubits
pure state $|\psi\rangle\in\mathbb{C}^2\otimes\mathbb{C}^2\otimes\mathbb{C}^2$ the 3-tangle is defined by
$$
\tau_{ABC}\equiv\tau\left(|\psi\rangle\right)\equiv \tau_{A(BC)}-\tau_{AB}-\tau_{AC}\;,
$$
where $\tau_{AB}\equiv\tau(\rho^{AB})$ (with $\rho^{AB}\equiv\text{Tr}_C|\psi\rangle\langle\psi|$), and
$\tau_{A(BC)}$ is the tangle between the qubit system A and two qubits system BC. In~\cite{CKW} it was
shown that the 3-tangle is non-negative and its square root has been proved to be an entanglement monotone
in~\cite{Ver03}. It was also shown~\cite{CKW} that it is symmetric
under permutations of the three qubits A, B, and C. From its definition, the 3-tangle can be interpreted as 
the residual entanglement between A and BC, that can not be accounted for by the entanglements of A and B,
and A and C, separately.

The Wong-Christensen 4-tangle~\cite{Uh00} is defined similarly. Let 
$|\psi\rangle\in\mathcal{H}_4\equiv\mathbb{C}^2\otimes\mathbb{C}^2\otimes\mathbb{C}^2\otimes\mathbb{C}^2$, the 4-tangle is 
defined by~\cite{Uh00}
$$
\tau_{ABCD}\equiv |\langle\psi|
\sigma_y\otimes\sigma_y\otimes\sigma_y\otimes\sigma_y|\psi^{*}\rangle|^2\;.
$$
In~\cite{Uh00} the 4-tangle was shown to be an entanglement monotone and invariant under permutations.
In the next section we will see that like the 3-tangle, the above 4-tangle can also be interpreted as a type
of residual entanglement. Moreover, as we show now, the square of the monotone $f_1$ is the Wong-Christensen 4-tangle.
\begin{proposition}\label{4tangle}
Let $|\psi\rangle\in\mathcal{H}_4$. Then,
$$
\tau_{ABCD}(|\psi\rangle)=|\mathcal{E}_1(|\psi\rangle)|^2\;.
$$
\end{proposition}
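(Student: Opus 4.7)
The plan is to unfold the Wong–Christensen 4-tangle directly in the tensor $T_\psi$ representation and match the answer to $\text{Tr}(T_\psi J T_\psi^{T} J)$ up to complex conjugation (which is irrelevant once we take $|\cdot|^2$).

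First I would compute the matrix of $\sigma_y\otimes\sigma_y$ acting on a pair of qubits in the doubled computational basis $\{|0\rangle\rangle,|1\rangle\rangle,|2\rangle\rangle,|3\rangle\rangle\} = \{|00\rangle,|01\rangle,|10\rangle,|11\rangle\}$. Using the convention $\sigma_y|0\rangle=-i|1\rangle$, $\sigma_y|1\rangle=i|0\rangle$ fixed by the excerpt, one gets $\sigma_y\otimes\sigma_y = -J$, where $J$ is the anti-diagonal symmetric matrix defined in Definition~1. Consequently
\[
\sigma_y^{\otimes 4}\;=\;(\sigma_y\otimes\sigma_y)_{12}\otimes(\sigma_y\otimes\sigma_y)_{34}\;=\;J\otimes J,
\]
where the two minus signs cancel. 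This is the key observation that ties the $\sigma_y$'s of the 4-tangle to the matrix $J$ in $\mathcal{E}_1$.

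Next I would expand $|\psi\rangle = \sum_{k,k'}T_{kk'}|k\rangle\rangle|k'\rangle\rangle$ and compute
\[
\langle\psi|\,\sigma_y^{\otimes 4}\,|\psi^{*}\rangle \;=\; \sum_{k,k',l,l'} T^{*}_{kk'}\,J_{kl}\,J_{k'l'}\,T^{*}_{ll'}.
\]
Summing over $k$ first uses $J=J^{T}$ to produce $(JT^{*})_{lk'}$; then summing over $k'$ produces $(JT^{*}J)_{ll'}$; and the remaining sum over $l,l'$ recognises the pairing $\sum_{l,l'}(JT^{*}J)_{ll'}(T^{*})_{ll'} = \text{Tr}\!\left(JT^{*}J(T^{*})^{T}\right)$. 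Thus $\langle\psi|\sigma_y^{\otimes 4}|\psi^{*}\rangle = \text{Tr}(JT^{*}JT^{\dagger})$.

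Finally I would observe that this expression is exactly $\overline{\text{Tr}(T_\psi J T_\psi^{T} J)} = \overline{\mathcal{E}_1(|\psi\rangle)}$, using cyclicity of the trace and the symmetry of $J$. Taking modulus squared gives $\tau_{ABCD}(|\psi\rangle) = |\mathcal{E}_1(|\psi\rangle)|^2$, as desired. I do not anticipate a real obstacle here: the only subtle point is keeping track of signs when identifying $\sigma_y\otimes\sigma_y$ with $\pm J$ (the sign of $J$ is convention-dependent, but it squares out), and one must be careful that the $\psi^*$ on the right-hand side of the 4-tangle contributes $T^*$ rather than $T$, which is precisely what produces the complex conjugate and hence the modulus inside $|\mathcal{E}_1|^2$.
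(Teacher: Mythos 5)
Your proof is correct, and it takes a genuinely different route from the paper's. The paper proves the identity by writing $|\psi\rangle=\sqrt{p_0}|0\rangle|\phi^0\rangle+\sqrt{p_1}|1\rangle|\phi^1\rangle$ (a Schmidt-type split of qubit $A$ against $BCD$) and verifying by brute force that both $\mathcal{E}_1(|\psi\rangle)$ and $\langle\tilde{\psi}|\psi\rangle$ reduce to the same explicit eight-term polynomial in the amplitudes $a^{(j)}_i$, namely Eq.~(\ref{formula}). You instead stay in the $2{+}2$ picture in which $\mathcal{E}_1$ is defined, observe that with the paper's sign convention $\sigma_y\otimes\sigma_y=-J$ so that $\sigma_y^{\otimes 4}=J\otimes J$, and then a short index contraction gives $\langle\psi|\sigma_y^{\otimes 4}|\psi^{*}\rangle=\operatorname{Tr}\bigl(JT^{*}JT^{\dagger}\bigr)=\overline{\mathcal{E}_1(|\psi\rangle)}$, after which the modulus squared finishes the argument. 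I checked the sign bookkeeping: with $\sigma_y|0\rangle=-i|1\rangle$, $\sigma_y|1\rangle=i|0\rangle$ one indeed gets $\sigma_y\otimes\sigma_y=-J$ on the ordered basis $\{|00\rangle,|01\rangle,|10\rangle,|11\rangle\}$, the two minus signs cancel in the fourfold product, and the cyclicity-plus-$J=J^{T}$ step correctly identifies the result as the complex conjugate of $\operatorname{Tr}(T_\psi J T_\psi^{T}J)$. Your version is cleaner and arguably more conceptual, since it exhibits the 4-tangle as literally the same bilinear form that defines $\mathcal{E}_1$; what the paper's component-level computation buys in exchange is the explicit formula~(\ref{formula}), which is recycled verbatim in the proof of Theorem~\ref{ub} to match the 4-tangle against the discriminant sum $\mathfrak{D}$, so if you adopted your proof you would still need to derive that expression separately for the later theorem.
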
 
The above proposition follows directly from the fact that there is a single $SL(2,\mathbb{C})^{\otimes 4}$ invariant
polynomial with homogeneous degree 2~\cite{Luq03}. Since both the 4-tangle and $|\mathcal{E}_1|^2$ have these properties
they must be equal. In the proof below we show this equivalence by a direct calculation.
\begin{proof}
Denote $|\psi\rangle=\sqrt{p_0}|0\rangle|\phi^0\rangle+\sqrt{p_1}|1\rangle|\phi^1\rangle$, where 
$|\phi^j\rangle\in\mathbb{C}^2\otimes\mathbb{C}^2\otimes\mathbb{C}^2$ with $j=0,1$ are three qubits orthonormal states.
Also, denote 
$$
|\phi^j\rangle=\sum_{i\in\{0,1\}^{3}}a^{(j)}_i|i\rangle
$$ 
for $j=0,1$.
With this notations it is straightforward to show that both $\mathcal{E}_1(|\psi\rangle)$ and $\langle\tilde{\psi}|\psi\rangle$
equals 
\begin{align}
& 2\sqrt{p_0p_1}  \Big(a_{000}^{(0)}a_{111}^{(1)}+a_{110}^{(0)}a_{001}^{(1)}
	+a_{101}^{(0)}a_{010}^{(1)}+a_{011}^{(0)}a_{100}^{(1)}\nonumber\\
	 & -a_{111}^{(0)}a_{000}^{(1)}-a_{001}^{(0)}a_{110}^{(1)}
	-a_{010}^{(0)}a_{101}^{(1)}-a_{100}^{(0)}a_{011}^{(1)}\Big).\label{formula}
\end{align}
This completes the proof.
\end{proof}

\section{Optimizing the average tangle}

As a measure for pure bipartite entanglement we first take the tangle or the square of the I-concurrence (see Eq.(\ref{deftangle})). 
Now, in four qubits there are 4 bipartite cuts consisting of one-qubit verses the rest 3 quibts and  3 bi-partite cuts
consisting of 2-qubits verses the rest 2 qubits. Denoting the four qubits by A, B, C, and D, we define
\begin{align}
\tau_1 & \equiv\frac{1}{4}\left(\tau_{A(BCD)}+\tau_{B(ACD)}+\tau_{C(ABD)}+\tau_{D(ABC)}\right)\label{t1}\\
\tau_2 & \equiv\frac{1}{3}\left(\tau_{(AB)(CD)}+\tau_{(AC)(BD)}+\tau_{(AD)(BC)}\right)\;,\label{t2}
\end{align}
where $\tau_{A(BCD)}$, for example, is the tangle between qubit A and qubits B,C,D. Similarly,
$\tau_{(AB)(CD)}$, for example, is the tangle between qubits A,B and qubits C,D. 
Note that the maximum possible value for $\tau_1$ is $1$ and the maximum possible value for $\tau_2$ is $3/2$ 
(since a maximally entangled $4\times 4$ bipartite state has tangle $3/2$). 
However, it is argued now that no 4 qubit pure state can achieve this value for $\tau_2$.

In~\cite{GBS}, it has been shown that 
\begin{equation}
\tau_1\leq\tau_2\leq\frac{4}{3}\tau_1\;.
\label{bounds}
\end{equation}
Hence, since $\tau_1$ is bounded by $1$, it follows that $\tau_2\leq 4/3<3/2$. That is, there are no 4-qubit states for which all the 3 reduced density matrices, obtained by tracing out two qubits, are proportional to the identity. Moreover, from the inequality above,
it follows that for states with $\tau_2=4/3$, $\tau_1$ must be equal to $1$. In the following theorem, we characterize all states with
$\tau_1=1$. 

\begin{theorem}~\cite{Ver02,Ver03}\label{kn}
Let $|\psi\rangle\in\mathcal{H}_4\equiv\mathbb{C}^{2}\otimes\mathbb{C}^{2}\otimes\mathbb{C}^{2}\otimes\mathbb{C}^{2}$ be a 
normalized four qubit state. Then,
$$
\tau_1\left(|\psi\rangle\right)=1\;\;\;\text{if and only if}\;\;\;|\psi\rangle\in\mathcal{A}\;,
$$
up to local unitary transformation.
\end{theorem}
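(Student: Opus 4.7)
The plan is to reduce Theorem~\ref{kn} to a moment-map condition on single-qubit marginals and then bridge this with the Kempf-Ness characterization of $\mathcal{A}$ from Proposition~1. First I would use the elementary bound $\tau_{X(YZW)}=2(1-\mathrm{Tr}\rho_X^2)\leq 1$, tight exactly when $\rho_X=I/2$, to rewrite the hypothesis $\tau_1(|\psi\rangle)=1$ as the ``maximally mixed marginals'' condition $\rho_A=\rho_B=\rho_C=\rho_D=I/2$. The theorem thus becomes: this locus coincides, up to $K=\mathrm{SU}(2)^{\otimes 4}$, with $\mathcal{A}$.

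The key observation is that $\rho_i=I/2$ for every $i$ is precisely the vanishing of the moment map $\mu:\mathcal{H}_4\to\bigoplus_{i=1}^{4}\mathfrak{su}(2)^{*}$ for the $K$-action, whose $i$-th component is the traceless part of $\rho_i$. For the direction $\mathcal{A}\Rightarrow\rho_i=I/2$, the inequality $\|g\psi\|\geq\|\psi\|$ for $\psi\in\mathcal{A}$ (Appendix~\ref{KempfNess}, already used in Proposition~1) makes $g=e$ a global minimum of $F(g)=\|g\psi\|^{2}$ on $G$. Differentiating $F$ in the Hermitian directions of $\mathfrak{g}=\mathfrak{sl}(2,\mathbb{C})^{\oplus 4}$ and setting the derivative to zero yields $\mathrm{Tr}(\rho_i H)=0$ for every traceless Hermitian $H$ acting on qubit $i$, forcing $\rho_i=I/2$. (Alternatively, a direct block computation using $\sigma_y^{\otimes 4}u_j=u_j$ and the Bell structure of the $u_j$ gives the same conclusion.)

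For the converse I would invoke the full Kempf-Ness dictionary: if $\mu(\psi)=0$, then $G\psi$ is closed, $\psi$ minimizes $F$ on this orbit, and any two $\mu$-zero points in one closed $G$-orbit lie in a single $K$-orbit. It therefore suffices to produce $\phi\in\mathcal{A}\cap G\psi$, i.e., to know that every closed $G$-orbit meets $\mathcal{A}$. This is the technical heart and rests on the invariant theory of~\cite{Ver02,W,Luq03}: the ring $\mathbb{C}[\mathcal{H}_4]^{G}$ is generated by $\mathcal{E}_0,\ldots,\mathcal{E}_3$, so closed orbits are separated by these four polynomials; and by the formulas for $\mathcal{E}_m$ on $\mathcal{A}$, the map $(z_0,\ldots,z_3)\mapsto(\mathcal{E}_m(\sum_j z_j u_j))_{m=0}^{3}$ is surjective onto $\mathbb{C}^{4}$, so every closed orbit has a representative in $\mathcal{A}$. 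The main obstacle is precisely this last surjectivity/separation claim, that $\mathcal{A}$ supplies a complete set of closed-orbit normal forms; everything else reduces to standard Kempf-Ness together with the elementary marginal bound.
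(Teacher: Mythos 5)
Your proposal follows essentially the same route as the paper's: both reduce $\tau_1=1$ to the maximally-mixed-marginals condition, recast it as the vanishing of $\langle\psi|X|\psi\rangle$ for $X\in\mathrm{Lie}(G)$, and then invoke the Kempf--Ness theorem to identify this critical locus with $K\mathcal{A}$. The only real difference is one of bookkeeping: you additionally sketch (via closedness of $G$-orbits through $\mathcal{A}$, separation of closed orbits by the invariants $\mathcal{E}_0,\dots,\mathcal{E}_3$, and surjectivity of these invariants on $\mathcal{A}$) a proof of the identification $Crit(\mathcal{H}_4)=K\mathcal{A}$, which the paper states without proof as a proposition in Appendix~\ref{KempfNess}.
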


A weaker version of the theorem above has been first pointed out in~\cite{Ver02}. In~\cite{Ver02} the authors argued
that among all the states in $G\mathcal{A}$, only states in $\mathcal{A}$ have $\tau_1=1$. A year later in~\cite{Ver03}
theorem~\ref{kn} was fully proved. Nevertheless, for the purpose of completeness, we provide here a proof of 
Theorem~\ref{kn} for all states in $\mathcal{H}_4$, independently of the work in~\cite{Ver02,Ver03}.

\begin{proof}
Using the Kempf-Ness theorem~\cite{KN} (see also Appendix~\ref{KempfNess}) applied to $G$ (defined above)
and the fact that $G\mathcal{A}$ is the set of stable vectors~\footnote{A state $\psi$ is stable if the set of states 
$G\psi$ is closed in $\mathcal{H}_4$.}, one can show
that $\psi \in K\mathcal{A}$, with $K\equiv SU(2)\otimes SU(2)\otimes SU(2)\otimes SU(2)$, if
and only if~\cite{Kly02}
\begin{equation}\label{x}
\langle \psi|X|\psi\rangle= 0\;,
\end{equation}
for all $X$ in Lie$(G)$. 
Note that Lie$(G)$ acting on $\mathcal{H}_4$ is the direct sum of the Lie algebras of $SL(2,\mathbb{C})$
acting on one tensor factor. Now, let $\psi$ be a normalized state with $\tau_1(\psi)=1$. Therefore, we can write
$\psi=|0\rangle|\varphi_0\rangle + |1\rangle|\varphi_1\rangle$, with 
$\langle\varphi_0|\varphi_0\rangle = 1/2=\langle\varphi_1|\varphi_1\rangle$ and
$\langle\varphi_0|\varphi_1\rangle=0$. Low, let
$$
U\equiv\left(\begin{array}
[c]{cc}%
a & b \\
-b^* & a ^*
\end{array}\right)
$$
be a unitary matrix with $|a|^2 + |b|^2 =1$. Then, if $U_1$ is $U$ acting only in the first tensor
factor we have
$$
\langle\psi|U_1|\psi\rangle = \frac{a + a^*}{2}\;.
$$
Thus, the maximum value is attained at $U=I$. This implies that the condition
in Eq.(\ref{x}) is true for the part of the Lie algebra coming from the elements that
act only on the first factor. The argument for the other factors is the
same.
\end{proof}

From the Eq.~(\ref{bounds}) it follows that among all the states with $\tau_1=1$, we have
$$
1\leq\tau_2\leq\frac{4}{3}\;.
$$
It is interesting to note that the 4-qubit GHZ state gives the minimum possible value for $\tau_2$.
That is, it is the least entangled state among all the states with $\tau_1=1$. 
On the other hand, for all the 3 cluster states defined above, $\tau_2=4/3$. In fact, as we will see later,
the cluster states are the {\it only} states that achieve the maximal value for $\tau_2$ in such a way that 
2 of the terms (i.e. tangles) appearing in the definition of $\tau_2$ (see Eq.~(\ref{t2})) are equal to $3/2$ and one of the terms
equals to $1$.

From Theorem~\ref{kn} and the inequality~(\ref{bounds}), it follows that only states in $\mathcal{A}$ can maximize $\tau_2$.
From the following theorem it also follows that states that maximize $\tau_2$ must have zero 4-tangle.

\begin{theorem}\label{ub}
Let $\psi\in\mathcal{H}_4$ be a 4-qubits pure state and denote by $\tau_{ABCD}(\psi)$ its 4-tangle (defined above). Then,
$$
\tau_2(\psi)=\frac{4\tau_1(\psi)-\tau_{ABCD}(\psi)}{3}\;.
$$
\end{theorem}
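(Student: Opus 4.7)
The plan is to reduce the statement to a single identity about reduced-state purities and then to prove that identity by computing the expectation value of one operator on $|\psi\rangle\otimes|\psi\rangle$ in two different ways. Write $p_{S} := \mathrm{Tr}\,\rho_{S}^{2}$ for any subset $S \subseteq \{A,B,C,D\}$. Since $\tau = 2(1 - \mathrm{Tr}\,\rho^{2})$ on every bipartite cut, the definitions (\ref{t1}) and (\ref{t2}) give $4\tau_{1} = 8 - 2\sum_{X} p_{X}$ and $3\tau_{2} = 6 - 2\sum_{(XY)} p_{XY}$, where $(XY)$ runs over the three inequivalent 2-2 cuts. Thus the theorem is equivalent to
$$\tau_{ABCD}(\psi) \;=\; 2 - 2\sum_{X} p_{X} + 2\sum_{(XY)} p_{XY}.$$

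For the first computation I would introduce a second copy of $|\psi\rangle$ and let $F_{X}$ be the swap operator on the two copies of qubit $X$, so that $p_{S} = \langle\psi\otimes\psi| F_{S} |\psi\otimes\psi\rangle$ with $F_{S} = \prod_{X \in S} F_{X}$. Since $I - F_{X} = 2\,|\psi^{-}\rangle\langle\psi^{-}|_{X}$ (twice the two-copy antisymmetric projector on qubit $X$), the operator
$$\mathcal{O} \;:=\; (I - F_{A})(I - F_{B})(I - F_{C})(I - F_{D})$$
factors as $16\,|\Psi^{-}\rangle\langle\Psi^{-}|$ with $|\Psi^{-}\rangle := \bigotimes_{X} |\psi^{-}\rangle_{X}$. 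Expanding the same product distributively gives instead $\mathcal{O} = \sum_{S}(-1)^{|S|} F_{S}$. Combining $\langle F_{S}\rangle = p_{S}$ with the purity symmetry $p_{S} = p_{S^{c}}$ and $p_{\emptyset} = p_{ABCD} = 1$, a straightforward inclusion-exclusion on the subset lattice of $\{A,B,C,D\}$ collapses $\langle\mathcal{O}\rangle$ to exactly the right-hand side $2 - 2\sum_{X} p_{X} + 2\sum_{(XY)} p_{XY}$.

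The second computation identifies the same $\langle\mathcal{O}\rangle = 16\,|\langle\Psi^{-}|\psi\otimes\psi\rangle|^{2}$ with $\tau_{ABCD} = |T|^{2}$, $T := \langle\psi|\sigma_{y}^{\otimes 4}|\psi^{*}\rangle$. Expanding both $\langle\Psi^{-}|\psi\otimes\psi\rangle$ and $T$ in the computational basis using $\langle\psi^{-}|a a'\rangle = \tfrac{1}{\sqrt{2}}(-1)^{a}\delta_{a',\bar a}$ on each qubit and $\sigma_{y}|x\rangle = i(-1)^{x}|\bar x\rangle$, both reduce to the same alternating sum $\sum_{abcd}(-1)^{a+b+c+d}\alpha_{abcd}\alpha_{\bar a \bar b \bar c \bar d}$ up to a factor of $4$ and a complex conjugation, yielding $\bar T = 4\,\langle\Psi^{-}|\psi\otimes\psi\rangle$. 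Therefore $|T|^{2} = 16\,|\langle\Psi^{-}|\psi\otimes\psi\rangle|^{2} = \langle\mathcal{O}\rangle$, closing the chain and giving $\tau_{ABCD} = 4\tau_{1} - 3\tau_{2}$. The main obstacle is just the sign and combinatorial bookkeeping in this last proportionality; everything else is a clean consequence of the identity $\prod_{X}(I - F_{X}) = 16\,|\Psi^{-}\rangle\langle\Psi^{-}|$, which is the single key algebraic step.
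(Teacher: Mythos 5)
Your proof is correct, but it takes a genuinely different route from the one in the paper. The paper singles out qubit $A$, writes $|\psi\rangle=\sqrt{p_0}|0\rangle|\phi^0\rangle+\sqrt{p_1}|1\rangle|\phi^1\rangle$, introduces the discriminants $\mathfrak{D}_X$ of~\cite{GBS}, establishes $S_L(\rho^{AX})-S_L(\rho^{X})=4p_0p_1\mathfrak{D}_X$ by explicit computation of the reduced density matrices, and then matches the resulting coordinate expression $4p_0p_1(1-\mathfrak{D})$ against the coordinate formula~(\ref{formula}) for $\mathcal{E}_1=\langle\tilde\psi|\psi\rangle$. You instead work on two copies of the state: the reduction of the theorem to $\tau_{ABCD}=2-2\sum_X p_X+2\sum_{(XY)}p_{XY}$ is immediate from $\tau=2(1-\mathrm{Tr}\rho^2)$, the swap-trick identity $p_S=\langle\psi\otimes\psi|F_S|\psi\otimes\psi\rangle$ together with $I-F_X=2|\psi^-\rangle\langle\psi^-|_X$ gives $\sum_S(-1)^{|S|}p_S=16|\langle\Psi^-|\psi\otimes\psi\rangle|^2$ (I checked the inclusion--exclusion using $p_S=p_{S^c}$ and $p_\emptyset=p_{ABCD}=1$, and it does collapse to the stated right-hand side), and your final identity $\bar T=4\langle\Psi^-|\psi\otimes\psi\rangle$ holds with either sign convention for $\sigma_y$ since the phase enters to the fourth power. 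What your approach buys: it never breaks the permutation symmetry, so the invariance of $\tau_{ABCD}$ under qubit permutations (which the paper extracts as a remark \emph{after} the theorem) is manifest from the start; it avoids any appeal to the external discriminant formula of~\cite{GBS}; and the key operator identity $\prod_X(I-F_X)=2^n\,|\Psi^-\rangle\langle\Psi^-|$ generalizes verbatim to $n$ qubits, directly expressing the $n$-tangle as the alternating sum $\sum_S(-1)^{|S|}\mathrm{Tr}\rho_S^2$, which is exactly the kind of extension the paper defers to future work. What the paper's route buys is that the same intermediate objects ($\mathfrak{D}_X$, the decomposition over qubit $A$) are reused in Proposition~\ref{4tangle}, so the two computations share their bookkeeping.
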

\begin{remark}
The equation above can be written as $\tau_{ABCD}=4\tau_1-3\tau_2$, where $4\tau_1$ can be interpreted as the total amount
of entanglement in the system, whereas $3\tau_2$ can be interpreted as the total amount of entanglement shared among groups consisting of two qubits each. In this sense, the 4-tangle can be interpreted as the residual entanglement that can not be shared among the two qubits groups. Note that from the equation above it is obvious that the 4-tangle is invariant under permutations.
\end{remark}

\begin{proof}
Following the same notations as in Proposition~\ref{4tangle}, we
denote $|\psi\rangle=\sqrt{p_0}|0\rangle|\phi^0\rangle+\sqrt{p_1}|1\rangle|\phi^1\rangle$, where the 3-qubits states (qubits $BCD$), 
$|\phi^j\rangle\in\mathbb{C}^2\otimes\mathbb{C}^2\otimes\mathbb{C}^2$ with $j=0,1$, are orthonormal.
We also denote by $\mathfrak{D}_X$ with $X\in\{B,C,D\}$ the \emph{discriminant} of $|\psi\rangle$~\cite{GBS}:
$$
\mathfrak{D}_X\equiv\text{Tr}\left(\sigma_{X}^{00}\sigma_{X}^{11}-\sigma_{X}^{01}\sigma_{X}^{10}\right)\;,
$$
where
$$
\sigma_{X}^{kk'}\equiv\text{Tr}_{\neq X}|\phi^k\rangle\langle\phi^{k'}|\;\text{ with }\;k,k'\in\{0,1\}
$$
and the trace is taken over all the remaining two qubits that are \emph{not} the $X$ qubit.
The sum of the discriminants is denoted by $\mathfrak{D}\equiv \mathfrak{D}_{B}+\mathfrak{D}_{C}+\mathfrak{D}_{D}$.

With these notations the one-qubit reduced density matrices can be written as follows:
\begin{align}
	& \rho^A 
		= \text{Tr}_{\neq A}|\psi\rangle\langle\psi|
		=p_0|0\rangle\langle 0|+p_1|1\rangle\langle 1|, \nonumber\\
	& \rho^{X}=\text{Tr}_{\neq X}|\psi\rangle\langle\psi|
		=p_0\sigma_{X}^{00}+p_1\sigma_{X}^{11}.
\end{align}
Similarly, the two-qubit reduced density matrices are given by
\begin{equation}
\rho^{AX}=\text{Tr}_{\neq AX}|\psi\rangle\langle\psi|
	=\sum_{k,k'\in\{0,1\}}\sqrt{p_kp_{k'}}|k\rangle \langle k'|\otimes\sigma_{X}^{kk'}\;.
\end{equation}
Substituting these reduced densities matrices in the expressions for the linear entropy gives
$$
S_L(\rho^{AX})-S_L(\rho^{X})=4p_0p_1\mathfrak{D}_{X}\;.
$$
Since $S_{L}(\rho^{A})=4p_0p_1$, summing over $X$ gives
$$
4\tau_1-3\tau_2=4p_0p_1(1-\mathfrak{D})\;.
$$
Now, if we denote 
$$
|\phi^k\rangle=\sum_{i\in\{0,1\}^{3}}a^{(k)}_i|i\rangle
$$ 
for $k=0,1$, then a straightforward calculation (see Eq.(3.38) in~\cite{GBS}) gives
\begin{align}
	&\mathfrak{D}  =1-\Big|a_{000}^{(0)}a_{111}^{(1)}+a_{110}^{(0)}a_{001}^{(1)}
	+a_{101}^{(0)}a_{010}^{(1)}+a_{011}^{(0)}a_{100}^{(1)}\nonumber \\
	 &-a_{111}^{(0)}a_{000}^{(1)}-a_{001}^{(0)}a_{110}^{(1)}
	-a_{010}^{(0)}a_{101}^{(1)}-a_{100}^{(0)}a_{011}^{(1)}\Big|^2.
	\end{align}
A comparison of this expression with the one given in Eq.(\ref{formula}) implies that $4\tau_1-3\tau_2=\tau_{ABCD}$.	
\end{proof}

From Theorem~\ref{kn} and Theorem~\ref{ub}, we have the following corollary.
\begin{corollary}
A normalized state $|\psi\rangle\in\mathcal{H}_4$ is maximally entangled (i.e. $\tau_2(|\psi\rangle)=4/3$)
if and only if up to local unitary $|\psi\rangle\in\mathcal{M}$, where $\mathcal{M}$ is the set of states in $\mathcal{A}$ 
with zero 4-tangle.
\end{corollary}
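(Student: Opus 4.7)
The plan is to derive the corollary as an essentially immediate consequence of Theorem~\ref{kn}, Theorem~\ref{ub}, and the inequality~(\ref{bounds}), noting only that $\tau_1$, $\tau_2$, and $\tau_{ABCD}$ are all invariant under local unitary transformations (since each is defined in terms of reduced density matrices or via a local-unitary-invariant polynomial).

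For the forward direction, I would start with a normalized $|\psi\rangle$ satisfying $\tau_2(|\psi\rangle)=4/3$. Since a single-qubit reduced state $\rho$ always obeys $\mathrm{Tr}\,\rho^2\geq 1/2$, each summand in the definition of $\tau_1$ is bounded by $1$, so $\tau_1(|\psi\rangle)\leq 1$. Combining this with the right half of~(\ref{bounds}), $\tau_2\leq (4/3)\tau_1$, forces $\tau_1(|\psi\rangle)=1$. Theorem~\ref{kn} then produces a local unitary $U$ with $U|\psi\rangle\in\mathcal{A}$, and local-unitary invariance gives $\tau_1(U|\psi\rangle)=1$ and $\tau_2(U|\psi\rangle)=4/3$. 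Applying Theorem~\ref{ub} to $U|\psi\rangle$ now yields
\begin{equation*}
\tau_{ABCD}(U|\psi\rangle)=4\tau_1(U|\psi\rangle)-3\tau_2(U|\psi\rangle)=4-4=0,
\end{equation*}
so $U|\psi\rangle\in\mathcal{A}$ has zero $4$-tangle, i.e.\ $U|\psi\rangle\in\mathcal{M}$.

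For the reverse direction, assume there is a local unitary $U$ with $U|\psi\rangle\in\mathcal{M}\subset\mathcal{A}$. Theorem~\ref{kn} gives $\tau_1(U|\psi\rangle)=1$, and by assumption $\tau_{ABCD}(U|\psi\rangle)=0$, so Theorem~\ref{ub} yields $\tau_2(U|\psi\rangle)=(4\cdot 1-0)/3=4/3$. Local-unitary invariance of $\tau_2$ then gives $\tau_2(|\psi\rangle)=4/3$, completing the equivalence. There is no genuine obstacle here: the corollary is a packaging of the previous two theorems, and the only thing to be careful about is to invoke local-unitary invariance explicitly when passing between $|\psi\rangle$ and $U|\psi\rangle$, and to notice that the upper bound $\tau_2\leq 4/3$ saturates \emph{only} when $\tau_1$ is maximal and the $4$-tangle vanishes simultaneously.
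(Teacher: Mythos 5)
Your proof is correct and follows exactly the route the paper intends: the paper states the corollary with no written proof beyond the remark that it follows from Theorem~\ref{kn}, Theorem~\ref{ub}, and inequality~(\ref{bounds}), and your argument simply spells out that chain (saturation of $\tau_2\leq\frac{4}{3}\tau_1\leq\frac{4}{3}$ forces $\tau_1=1$, hence membership in $\mathcal{A}$ up to local unitaries, and then $\tau_{ABCD}=4\tau_1-3\tau_2=0$), together with the local-unitary invariance needed to pass between $|\psi\rangle$ and $U|\psi\rangle$. No gaps; nothing further is needed.
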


A state $\psi=\sum_{j=0}^{3}z_ju_j$ in $A$ depends on four complex parameters $z_j$ $(j=0,1,2,3)$. The condition that the 4-tangle 
$\tau_{ABCD}(\psi)=|\sum_{j=0}^{3}z_{j}^{2}|^2=0$ implies that 
the states in the maximally entangled class $\mathcal{M}$ are characterized by 4 \emph{real} parameters since we also have the normalization condition and we ignore the global phase. If we write $z_j=\sqrt{p_j}e^{i\theta_j}$ in its polar form (with non-negative
$p_j$ and $\theta_j\in[0,2\pi]$)
we can characterize the class 
$\mathcal{M}$ as follows:
\begin{equation}
\mathcal{M}=\left\{\sum_{j=0}^{3}\sqrt{p_j}e^{i\theta_j}u_j\Big|\;\sum_{j=0}^{3}p_j=1\;,\; \sum_{j=0}^{3}p_je^{2i\theta_j}=0\right\}\;.
\label{classm}
\end{equation} 

\subsection*{Minimization of $\tau_2$}

From theorem~\ref{ub}, it also follows that the states in $\mathcal{A}$ with the minimum possible value $\tau_2=1$,
can be characterized as follows. Denote by $\mathcal{T}_{\min}$ the class of all such states. Then,
\begin{align}
\mathcal{T}_{\min} & \equiv\left\{\psi\in\mathcal{A}\Big|\tau_2(\psi)=1\right\}\nonumber\\
& =\left\{\sum_{j=0}^{3}x_ju_j\Big|\;\sum_{j=0}^{3}x_{j}^{2}=1\;,\;x_j\in\mathbb{R}\right\}
\end{align}
Note that the four qubits GHZ state belongs to $\mathcal{T}_{\min}$. In this sense, the GHZ state is a state in 
$\mathcal{A}$ with the least amount of entanglement.

\section{Different measures of entanglement}

Up to now we took the measure in~Eqs.(\ref{t1},\ref{t2}) to be the tangle, which is given in terms of the linear entropy.
The measures of entanglement that we consider in this section are Renyi entropy of entanglement and Tsallis entropy of entanglement. We denote these measures by $E_{\text{R}}^{(\alpha)}$ and $E_{\text{T}}^{(\alpha)}$, respectively.
These bipartite measures of entanglement are defined as follows.
Given a bipartite state $|\psi^{AB}\rangle\in\mathbb{C}^{n}\otimes\mathbb{C}^{m}$,  the entanglements 
$E^{(\alpha)}_{\text{R}}$
and $E^{(\alpha)}_{\text{T}}$,
measured by the Renyi and Tsallis entropies, are given by
\begin{align}
& E^{(\alpha)}_{\text{R}}\left(|\psi^{AB}\rangle\right)\equiv \frac{1}{1-\alpha}\log\text{Tr}\rho_{r}^{\alpha}\nonumber\\
& E^{(\alpha)}_{\text{T}}\left(|\psi^{AB}\rangle\right)\equiv \frac{1}{1-\alpha}\left(\text{Tr}\rho_{r}^{\alpha}-1\right)\nonumber
\end{align}
where $\rho_{r}=\text{Tr}_{B}|\psi^{AB}\rangle\langle\psi^{AB}|$ is the reduced density matrix
and the log is base 2. Note that both Renyi and Tsallis entropies approach the von-Neumann entropy 
in the limit $\alpha\rightarrow 1$. The Tsallis entropy is concave (see for example~\cite{Hu06}) 
and therefore $E^{(\alpha)}_{\text{T}}$ is an ensemble entanglement monotone 
(i.e. non-increasing on average under LOCC). 
The Renyi entropy is also concave for $0<\alpha\leq 1$, but only Shur concave for $\alpha>1$~\cite{Da02}.
Hence, for $\alpha>1$, $ E^{(\alpha)}_{\text{R}}$ is only a deterministic monotone 
(i.e. non-increasing under deterministic LOCC). 
Nevertheless, unlike the Tsallis $\alpha$-entropy of entanglement with $\alpha\neq 1$, 
the Renyi $\alpha$-entropy of entanglement is normalized nicely
so that it is equal $\log d$ for maximally entangled states in $\mathbb{C}^{d}\otimes\mathbb{C}^d$.
 
Similar to the definition of the average tangles~(\ref{t1},\ref{t2}), we define the average $\alpha-$entropy of entanglement of four qubits states as follows:
\begin{align}
E_{1}^{(\alpha)} & \equiv\frac{1}{4}\left(E_{A(BCD)}^{(\alpha)}+E_{B(ACD)}^{(\alpha)}+E_{C(ABD)}^{(\alpha)}
+E_{D(ABC)}^{(\alpha)}\right)\\
E_{2}^{(\alpha)} & \equiv\frac{1}{3}\left(E_{(AB)(CD)}^{(\alpha)}+E_{(AC)(BD)}^{(\alpha)}+E_{(AD)(BC)}^{(\alpha)}\right),
\label{def}
\end{align}
where $E_{A(BCD)}^{(\alpha)}$, for example, is the Tsallis or Renyi $\alpha$-entropy of entanglement between qubit A and qubits B,C,D, where it will be clear from the context if we mean Renyi or Tsallis. 
Note that due to Eq.(\ref{bounds}) the maximum value of $E_{2}^{(\alpha)}$ cannot be 2 ebits
(i.e. the same value as the the value for two bell states).  

From Theorem~\ref{kn} we know that
$E_{1}^{(\alpha)}(|\psi\rangle)=1$ iff $|\psi\rangle\in\mathcal{A}$. However, for the general measures of entanglement we do not have an equation analog to Eq.~(\ref{bounds})
and therefore, can not argue that if $|\psi\rangle$ maximize $E_{2}^{(\alpha)}$ then it must maximize 
$E_{1}^{(\alpha)}$. 
Nevertheless, we will see that for the average $\alpha-$entropy of entanglement with $\alpha\geq 2$ 
this is indeed the case~\footnote{We are willing to conjecture that it is true for all $\alpha\geq 0$.}.

In order to optimize $E_{2}^{(\alpha)}$, we first prove the following theorem.

\begin{theorem}\label{lm}
Let $\rho$ be a $4\times 4$ normalized density matrix with eigenvalues $\lambda_0\geq\lambda_1\geq\lambda_2\geq\lambda_3$.
Let $S_L(\rho)=\tau$ and denote $x(\tau)=\sqrt{1-\frac{2}{3}\tau}$ and $y(\tau)=\sqrt{1-\frac{3}{4}\tau}$. 
The maximum (minimum)
possible value of either Tsallis or Renyi entropies with $0<\alpha< 2$ ($\alpha>2$) is obtained if and only if 
$\{\lambda_i\}$ is given by
\begin{equation}\label{lm1}
\left\{\frac{1+3x(\tau)}{4}\;,\;\frac{1-x(\tau)}{4}\;,\;\frac{1-x(\tau)}{4}\;,\;\frac{1-x(\tau)}{4}\right\}
\end{equation}
The minimum (maximum) possible value of Tsallis or Renyi entropies with $0<\alpha< 2$ ($\alpha>2$) is obtained if and only if the set $\{\lambda_i\}$ is given by
\begin{align}
& \left\{\frac{1+x(\tau)}{4},\frac{1+x(\tau)}{4},\frac{1+x(\tau)}{4},\frac{1-3x(\tau)}{4}\right\}\;(\frac{4}{3}\leq \tau\leq \frac{3}{2})\nonumber\\
& \left\{\frac{1+y(\tau)}{3},\frac{1+y(\tau)}{3},\frac{1-2y(\tau)}{3},0\right\}\;\;\;\;\;\;\;\;\;\;\;\;\;\;(1\leq \tau\leq \frac{4}{3})
\nonumber\\
& \left\{\frac{1+\sqrt{1-\tau}}{2},\frac{1-\sqrt{1-\tau}}{2},0,0\right\}\;\;\;\;\;\;\;\;\;\;\;\;\;\;\;\;\;\;\;(0\leq \tau\leq 1)\label{L1}
\end{align}
\end{theorem}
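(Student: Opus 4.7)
The plan is to reduce everything to optimizing the single function $F(\boldsymbol{\lambda}) = \sum_i \lambda_i^\alpha$ on the compact set $\Delta_\tau = \{\boldsymbol{\lambda} \in \mathbb{R}^4 : \lambda_i \geq 0,\;\sum_i \lambda_i = 1,\;\sum_i \lambda_i^2 = 1 - \tau/2\}$, since both the Tsallis and Renyi $\alpha$-entropies are strictly monotone functions of $F$ (increasing for $0 < \alpha < 1$, decreasing for $\alpha > 1$). At $\alpha = 2$ the function $F$ equals $1 - \tau/2$ identically on $\Delta_\tau$; this is the structural reason why the labels ``maximum'' and ``minimum'' swap exactly at $\alpha = 2$.

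First I would apply Lagrange multipliers at an interior critical point to obtain $\alpha \lambda_i^{\alpha - 1} = \mu + \nu \lambda_i$ for every $i$. The scalar equation $g(t) = \alpha t^{\alpha - 1} - \nu t - \mu = 0$ has at most two positive roots whenever $\alpha \neq 1, 2$, since $g'(t) = \alpha(\alpha - 1) t^{\alpha - 2} - \nu$ is strictly monotone and hence $g$ has at most one turning point. Therefore the $\lambda_i$'s take at most two distinct values, and the same conclusion extends to the boundary by rerunning the argument on each face with one or more $\lambda_i$ forced to zero.

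Next I would enumerate. The admissible interior partitions are $(4)$, $(3,1)$, and $(2,2)$; solving the two moment constraints gives the two $(3,1)$-branches of the statement---the ``peaked'' configuration in~(\ref{lm1}) and the ``dipped'' one on the first line of~(\ref{L1}), both parametrized by $x(\tau) = \sqrt{1-2\tau/3}$---together with an isolated $(2,2)$ branch which must be ruled out as a saddle. On the face with one zero eigenvalue the analogous analysis produces two branches parametrized by $y(\tau) = \sqrt{1-3\tau/4}$, of which only the ``dipped'' one (second line of~(\ref{L1})) is extremal; the face with two zero eigenvalues yields the third line of~(\ref{L1}); and the face with three zero eigenvalues contributes only the point $\tau = 0$. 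The range bounds in~(\ref{L1}) are then forced by $\lambda_i \geq 0$.

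The main obstacle is to identify which critical points are global maxima, which are global minima, and which are saddles, and to verify the flip at $\alpha = 2$. My approach is Hermite interpolation. For the configuration in~(\ref{lm1}), with values $p = (1 + 3x)/4$ (multiplicity one) and $q = (1 - x)/4$ (multiplicity three), a short auxiliary Lagrange argument first establishes that $p$ is the largest value attainable by any single $\lambda_i$ anywhere on $\Delta_\tau$. Choosing the quadratic $T(\lambda)$ that agrees with $\phi(\lambda) = \lambda^\alpha$ at $p$ and is tangent to $\phi$ at $q$, one has that $\sum_i T(\lambda_i)$ is a linear combination of the conserved moments and hence constant on $\Delta_\tau$, so $F(\boldsymbol{\lambda}) - F(\text{configuration in~(\ref{lm1})}) = \sum_i (\phi - T)(\lambda_i)$. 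The Hermite remainder $\phi(\lambda) - T(\lambda) = \tfrac{1}{6}(\lambda - p)(\lambda - q)^2 \phi'''(\xi_\lambda)$ has a sign jointly controlled by $\phi'''(\xi) = \alpha(\alpha - 1)(\alpha - 2)\xi^{\alpha - 3}$ and by $\lambda - p$, and because $\lambda_i \leq p$ everywhere on $\Delta_\tau$ the full sum has a definite sign which flips exactly when $\alpha(\alpha - 1)(\alpha - 2)$ does. Combined with the monotonicity direction of the entropy in $F$---which itself flips at $\alpha = 1$---the two sign-flips at $\alpha = 1$ cancel and leave only the flip at $\alpha = 2$, as required. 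A symmetric Hermite argument using the \emph{minimum} attainable value of any single $\lambda_i$ on $\Delta_\tau$ picks out the opposite extremum; the reason this extremum migrates across the three pieces of~(\ref{L1}) is simply that this minimum equals the singleton value of the dipped interior branch for $\tau \in [4/3, 3/2]$, drops onto the one-zero face for $\tau \in [1, 4/3]$, and onto the two-zero face for $\tau \in [0, 1]$. Running this Hermite comparison face-by-face, together with separately ruling out the interior $(2,2)$ branch and the peaked one-zero branch as saddles, is the technically heaviest step.
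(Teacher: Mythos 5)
Your skeleton---reduce to $F(\boldsymbol{\lambda})=\sum_i\lambda_i^\alpha$ under the two moment constraints, use Lagrange multipliers to force at most two distinct eigenvalues, then enumerate branches in the interior and on each face---is exactly the skeleton of the paper's Appendix B, and your root count for $g(t)=\alpha t^{\alpha-1}-\nu t-\mu$ (monotone $g'$, hence at most two positive roots) is cleaner than the paper's, which uses an AM--GM trick at $\alpha=1$ and injectivity of $(1-x^{\alpha-1})/(1-x)$ otherwise. Where you genuinely diverge is on the optimality direction: the paper does \emph{not} prove that the listed distributions are the extrema---it imports that from~\cite{Ber03} and only supplies the ``only if'' half by checking that the non-listed critical branches take different entropy values. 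Your Hermite comparison (quadratic $T$ tangent at the triple node and interpolating at the maximal attainable eigenvalue $p=(1+3x)/4$, with $\sum_iT(\lambda_i)$ constant on the constraint set and the remainder's sign governed by $\alpha(\alpha-1)(\alpha-2)$) is a correct, self-contained replacement for that citation for the configuration~(\ref{lm1}) and for the first line of~(\ref{L1}), and it delivers the uniqueness statement at the same time; the cancellation of the sign flips at $\alpha=1$ leaving only the flip at $\alpha=2$ is also right. That is a real gain over the paper's proof.

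Two points need repair. First, $\alpha=1$ lies inside the range $0<\alpha<2$ of the theorem, and there your reduction collapses ($F\equiv1$ on the constraint set); you must either run the parallel argument for $-\sum_i\lambda_i\log\lambda_i$, as the paper does, or argue by continuity in $\alpha$, and the Hermite remainder also needs a limiting argument when some $\lambda_i=0$ and $\alpha<3$. Second, and more substantively, the ``symmetric Hermite argument'' does not apply as stated to the second and third lines of~(\ref{L1}): those configurations carry \emph{three} distinct eigenvalues (one of them $0$), so a quadratic cannot be tangent at the repeated node and still match the other two, and the plain three-node remainder $\tfrac{1}{6}\lambda(\lambda-b)(\lambda-a)\phi'''(\xi)$ changes sign on the feasible range. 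You must first prove that for $\tau\le4/3$ the opposite extremum is attained on the face $\lambda_3=0$ (and for $\tau\le1$ on $\lambda_2=\lambda_3=0$) and only then run the tangency argument within that face, where $(1-2y)/3$ is again the minimal attainable eigenvalue; alternatively, since your enumeration already reduces the problem to finitely many branches, a direct comparison of $F$ along the competitors (including the interior $(2,2)$ branch and the peaked face branches, which you only assert are saddles) closes the argument. Until one of these is carried out, the lower two lines of~(\ref{L1}) are not established.
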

One direction of theorem~\ref{lm} has been proven in~\cite{Ber03}.  To complete the proof of theorem~\ref{lm},
we show in Appendix \ref{aaa123} that the Renyi and Tsillsa entropies obtain there extremum values \emph{only} 
for the sets of eigenvalues that appear in the theorem.

\subsection{The state $|L\rangle$}

In this section we show that the state $|L\rangle$ in Eq.~(\ref{statel}) has the remarkable property that it \emph{maximizes} the average Tsallis entropy of entanglement $E^{(\alpha)}_{2}$ for \emph{all} $\alpha\geq 2$.
In addition, among all the states in $\mathcal{M}$, the state $|L\rangle$ \emph{minimizes}  $E^{(\alpha)}_{2}$ for \emph{all} $0\leq\alpha\leq 2$. 

\begin{theorem}\label{L}
{\rm \textbf{(a)}}
Let $\psi\in\mathcal{H}_4$. Then,
\begin{align}
E_{2}^{(\alpha)}(\psi)\leq E_{2}^{(\alpha)}(|L\rangle)\text{  for  all } \alpha> 2\nonumber 
\end{align}
with equality if and only if up to local unitaries  $\psi=|L\rangle$.\\
{\rm \textbf{(b)}} Let $\psi\in\mathcal{M}$. Then,
\begin{align}
E_{2}^{(\alpha)}(\psi)\geq E_{2}^{(\alpha)}(|L\rangle)\text{  for  all } 0<\alpha< 2\nonumber 
\end{align}
with equality if and only if up to local unitaries  $\psi=|L\rangle$.
\end{theorem}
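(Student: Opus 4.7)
The plan is to reduce the optimization of $E_2^{(\alpha)}$ to a one-variable problem in the linear entropy $\tau$, using Theorem~\ref{lm} to bound the $\alpha$-entropy of each $4\times 4$ reduced density matrix purely in terms of its tangle, and then combining this with the global bound $\tau_2\leq 4/3$ from Theorems~\ref{kn} and~\ref{ub}. First I would verify that $|L\rangle\in\mathcal{M}$: the coefficients $z_j=\omega^j/\sqrt{3}$ (for $j=0,1,2$, with $z_3=0$) satisfy $\sum_j z_j^2=(1+\omega^2+\omega^4)/3=0$, so by Proposition~\ref{4tangle} the $4$-tangle vanishes; together with $|L\rangle\in\mathcal{A}$, Theorem~\ref{kn} gives $\tau_1(|L\rangle)=1$. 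Writing $u_j=|\xi_j\rangle_{(12)}|\xi_j\rangle_{(34)}$ in the magic Bell basis, the $(12)(34)$-reduction of any $\psi=\sum_j z_j u_j\in\mathcal{A}$ is diagonal, $\rho_{(12)}=\sum_j|z_j|^2|\xi_j\rangle\langle\xi_j|$, so for $|L\rangle$ it has spectrum $\{1/3,1/3,1/3,0\}$ and tangle $4/3$; a direct computation of $\rho_{(13)}$ and $\rho_{(14)}$ yields the same spectrum, which matches the second row of (\ref{L1}) at $\tau=4/3$ (where $y(4/3)=0$).

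For part (a), let $h_\alpha(\tau)$ denote the maximum of the Tsallis (or Renyi) $\alpha$-entropy over $4\times 4$ density matrices with linear entropy $\tau$. By Theorem~\ref{lm}, for $\alpha>2$ this maximum is realized on the piecewise family in (\ref{L1}); piece-by-piece inspection of the explicit formulas shows that $h_\alpha$ is concave and strictly increasing on $[0,3/2]$. Applying the single-state bound to each of the three $(AB)(CD)$-type reductions of an arbitrary $\psi\in\mathcal{H}_4$ and invoking Jensen's inequality gives
\begin{equation*}
E_2^{(\alpha)}(\psi)\;\leq\;\tfrac{1}{3}\sum_{i=1}^{3}h_\alpha(\tau_i)\;\leq\;h_\alpha\!\left(\tau_2(\psi)\right).
\end{equation*}
Theorem~\ref{ub} gives $\tau_2(\psi)=(4\tau_1-\tau_{ABCD})/3\leq 4/3$, with equality only when $\psi\in\mathcal{M}$ up to local unitaries, so monotonicity of $h_\alpha$ yields $E_2^{(\alpha)}(\psi)\leq h_\alpha(4/3)=E_2^{(\alpha)}(|L\rangle)$.

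Part (b) is dual. Let $g_\alpha(\tau)$ be the \emph{minimum} of the $\alpha$-entropy at fixed $\tau$; for $0<\alpha<2$, Theorem~\ref{lm} identifies the minimizing family as (\ref{L1}). Checking convexity of $g_\alpha$ on the interval containing $4/3$, and using that every $\psi\in\mathcal{M}$ has $\tau_2(\psi)=4/3$, Jensen's inequality in the opposite direction gives
\begin{equation*}
E_2^{(\alpha)}(\psi)\;\geq\;\tfrac{1}{3}\sum_{i=1}^{3}g_\alpha(\tau_i)\;\geq\;g_\alpha(4/3)\;=\;E_2^{(\alpha)}(|L\rangle).
\end{equation*}

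The main obstacles lie in (i) verifying the concavity/convexity of $h_\alpha$ and $g_\alpha$ required by the Jensen step, which must be done piece-by-piece on the explicit formulas from (\ref{L1}); and (ii) the equality analysis identifying $|L\rangle$ uniquely (up to local unitaries). Saturating the single-state bound forces each of the three $2$-qubit reductions to have spectrum $\{1/3,1/3,1/3,0\}$, and equality in Jensen forces all three tangles to equal $4/3$. The spectrum of $\rho_{(12)}$ then pins the $\mathcal{A}$-parameters in (\ref{classm}) to $|z_0|=|z_1|=|z_2|=1/\sqrt{3}$ and $z_3=0$ (up to a permutation of the $u_j$'s), and the relation $\sum_j z_j^2=0$ forces the squared phases of the $z_j$ to be the three cube roots of unity, giving $|L\rangle$. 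The final step is to confirm, via Proposition~\ref{localunitary}, that all remaining discrete choices (the permutation of $u_0,u_1,u_2$ and compatible sign patterns) yield the same state up to local unitaries, and that the further spectral constraints from the $(13)(24)$ and $(14)(23)$ cuts are automatically satisfied by this candidate.
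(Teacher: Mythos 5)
Your overall architecture --- reduce to the single-cut extremal entropy at fixed tangle via Theorem~\ref{lm}, then optimize the average over the three cuts subject to $t_1+t_2+t_3\le 4$, and finally run the spectral argument that pins down $|L\rangle$ --- is the same as the paper's, and your treatment of the equality case (forcing all three reduced spectra to be $\{1/3,1/3,1/3,0\}$ and then solving for the phases via the matrices $A,B$) matches the paper's essentially verbatim. The gap is in the one step you use to replace the paper's Appendix~\ref{optimization}: the Jensen inequality $\tfrac13\sum_i h_\alpha(t_i)\le h_\alpha\bigl(\tfrac13\sum_i t_i\bigr)$ requires $h_\alpha$ to be concave, and it is not. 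Take Tsallis with $\alpha=3$. On the piece $\tfrac43\le\tau\le\tfrac32$ of Eq.~(\ref{L1}) one computes $h_3(\tau)=\tfrac{3}{16}\bigl(1+\tau+(1-\tfrac23\tau)^{3/2}\bigr)$, whose second derivative is $\tfrac{1}{16}(1-\tfrac23\tau)^{-1/2}>0$; the middle piece $1\le\tau\le\tfrac43$ is likewise strictly convex, and only the kink at $\tau=4/3$ (where the optimizing family changes) bends the concave way. Concretely, $h_3(3/2)=15/32$, $h_3(4/3)=4/9$, $h_3(17/12)\approx 0.4556$, so $\tfrac12\bigl(h_3(3/2)+h_3(4/3)\bigr)\approx 0.4566>h_3(17/12)$: your second inequality is false as stated.

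The statement you actually need, namely $\tfrac13\sum_i h_\alpha(t_i)\le h_\alpha(4/3)$ whenever $\sum_i t_i\le 4$ and $t_i\le 3/2$, is true, but it is a genuinely delicate fact --- for instance $3h_3(4/3)=4/3$ only barely beats the vertex value $2h_3(3/2)+h_3(1)=21/16$ --- and it is precisely what the paper spends Appendix~\ref{optimization} proving: changing variables to $x_i=\sqrt{1-\tfrac23 t_i}$ and $y_i=\sqrt{1-\tfrac34 t_i}$, showing via Lemmas~\ref{ab} and~\ref{bb} that interior critical points have positive-definite Hessian (hence are local minima), and then comparing the boundary vertices explicitly. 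The same objection applies to part (b): the reverse Jensen step needs convexity of $g_\alpha$ on all of $[1,3/2]$, which you flag as something to check but do not verify, and which is again not the route the paper takes. So the proof is incomplete exactly where the real work lies; the remaining ingredients (membership of $|L\rangle$ in $\mathcal{M}$, the spectra of its three reductions, monotonicity of $h_\alpha$ in $\tau$, the use of $\tau_2\le 4/3$, and the uniqueness analysis via Proposition~\ref{localunitary1}) are sound and agree with the paper.
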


\begin{proof}
Let 
$\tilde{E}_{\max}^{(\alpha)}(t)$ and 
$\tilde{E}_{\text{min}}^{(\alpha)}(t)$ be, respectively, the maximum and minimum values $E^{(\alpha)}_{\text{T}}(\psi)$ can take among all normalized bipartite states $\psi\in\mathbb{C}^4\otimes\mathbb{C}^4$ with tangle $\tau(\psi)=t$.
Further, let $f_{\max}^{(\alpha)}(t_1,t_2,t_3)\equiv1/3\sum_{k=1}^{3}\tilde{E}_{\max}^{(\alpha)}(t_k)$ and 
$f_{\min}^{(\alpha)}(t_1,t_2,t_3)\equiv1/3\sum_{k=1}^{3}\tilde{E}_{\min}^{(\alpha)}(t_k)$.
Thus, for a 4 qubit normalized state $|\psi\rangle\in\mathcal{H}_4$, with $\tau_{(12)(34)}=t_1$, $\tau_{(13)(24)}=t_2$, and $\tau_{(14)(34)}=t_3$, 
\begin{equation}\label{bd}
f_{\min}^{(\alpha)}(t_1,t_2,t_3)\leq E_{2}^{(\alpha)}(|\psi\rangle)\leq f_{\max}^{(\alpha)}(t_1,t_2,t_3)\;.
\end{equation}
From theorem~\ref{lm} and the definition of Tsallis entropy it follows that for $\alpha>2$
\begin{align}
& \tilde{E}_{\max}^{(\alpha)}(t) =\frac{1}{\alpha-1}\times\nonumber\\
& \left\{ 
                  \begin{array}{rll}
			\left[1-\frac{3(1+x)^\alpha}{4^\alpha}-\frac{(1-3x)^\alpha}{4^\alpha}\right]&\text{for } \frac{4}{3}\leq t\leq\frac{3}{2} \\
			\\
			\left[1-\frac{2(1+y)^\alpha}{3^\alpha}-\frac{(1-2y)^\alpha}{3^\alpha}\right]&\text{for } 1\leq t\leq\frac{4}{3}			
		\end{array}\right.
		\label{mx}
\end{align}
where $x=x(t)\equiv\sqrt{1-\frac{2}{3}t}$ and $y=y(t)\equiv\sqrt{1-\frac{3}{4}t}$. Furthermore, from theorem~\ref{lm}
it follows that for $0<\alpha <2$, $\tilde{E}_{\min}^{(\alpha)}(t)$ is given by the exact same expression as in
Eq.(\ref{mx}).

Note that $E_{\max}^{(\alpha)}(t)$ and $E_{\min}^{(\alpha)}(t)$ are continuous at the point $t=4/3$ 
(i.e. $x=1/3$ and $y=0$), but not their derivatives.
Nevertheless, since the derivatives of $E_{\max}^{(\alpha)}(t)$ and $E_{\min}^{(\alpha)}(t)$ in both regions $1< t<4/3$ and $3/4<t<3/2$ are positive, it follows that $E_{\max}^{(\alpha)}(t)$ and $E_{\min}^{(\alpha)}(t)$ are both monotonically increasing with $t$.
Now, from Eq.~(\ref{bounds}) it follows that $t_1+t_2+t_3\leq 4$ and therefore the RHS of the Eq.~(\ref{bd}) reach its maximum value when  $t_1+t_2+t_3= 4$. 
Hence, for any $\psi\in\mathcal{H}_4$, $E_{2}^{(\alpha)}(\psi)$ is bounded above by ($\alpha>2$)
$$
U^{(\alpha)}\equiv\max\Big\{f_{\max}^{(\alpha)}(t_1,t_2,t_3)\;\big|\;\sum_{k=1}^{3}t_k=4\;,\;t_k\leq 3/2\Big\}\;.
$$
Similarly, for any $\psi\in\mathcal{M}$, $E_2(\psi)$ is bounded below by ($0<\alpha<2$)
$$
L^{(\alpha)}\equiv\min\Big\{f_{\min}^{(\alpha)}(t_1,t_2,t_3)\;\big|\;\sum_{k=1}^{3}t_k=4\;,\;t_k\leq 3/2\Big\}\;.
$$
In Appendix~\ref{optimization} we show that $U^{(\alpha)}=f_{\max}^{(\alpha)}(4/3,4/3,4/3)$
and $L^{(\alpha)}=f_{\min}^{(\alpha)}(4/3,4/3,4/3)$. We also show that the point $t_1=t_2=t_3=4/3$
is the \emph{only} point of global max for $f_{\max}^{(\alpha)}$ and the only point of global min for 
$f_{\min}^{(\alpha)}$. 

Let $|\psi\rangle\in\mathcal{H}_4$, and denote by $\{P_j\}$, $\{Q_j\}$, and $\{R_j\}$ the eigenvalues of the reduced density matrices of $|\psi\rangle$ obtained after tracing out qubits C and D, B and D, and B and C, respectively.
From the analysis above and the results in appendix~\ref{optimization}, $E_{2}^{(\alpha)}(|\psi\rangle)=U^{(\alpha)}$
if and only if $\tau_{(12)(34)}(\psi)=\tau_{(12)(34)}(\psi)=\tau_{(12)(34)}(\psi)=4/3$ and the distributions
$\{P_j\}$, $\{Q_j\}$, and $\{R_j\}$ are all of the form given in Eq.(\ref{L1}) with $\tau=4/3$. That is, for $\alpha>2$
$E_{2}^{(\alpha)}(|\psi\rangle)=U^{(\alpha)}$ if and only if $\{P_j\}=\{Q_j\}=\{R_j\}=\{1/3,1/3,1/3,0\}$. Similarly,
if $|\psi\rangle\in\mathcal{M}$ then, for $0<\alpha<2$,  $E_{2}^{(\alpha)}(|\psi\rangle)=L^{(\alpha)}$
if and only if $\{P_j\}=\{Q_j\}=\{R_j\}=\{1/3,1/3,1/3,0\}$.

A priori it is not clear that such a state with  $\{P_j\}=\{Q_j\}=\{R_j\}=\{1/3,1/3,1/3,0\}$ exists in $\mathcal{H}_4$.
However, now we show that up to local unitaries there exists exactly one state with this property and the state
is $|L\rangle$.

First note that if $\{P_j\}=\{Q_j\}=\{R_j\}=\{1/3,1/3,1/3,0\}$ then up to local unitaries $|\psi\rangle\in\mathcal{M}\subset\mathcal{A}$.
Therefore, we can write $|\psi\rangle=z_0u_0+z_1u_1+z_2u_2+z_3u_3$. In this case, 
the eigenvalues of the reduced density matrices are given by ($j=0,1,2,3$):
\begin{equation}
P_j=|z_j|^2\;\;,\;\;Q_j=\left|\sum_{k=0}^{3}A_{jk}z_k \right|^2\;\;,\;\;R_j=\left|\sum_{k=0}^{3}B_{jk}z_k \right|^2\; ,
\label{dis}
\end{equation}
where $A_{jk}$ and $B_{jk}$ are the matrix element of the two orthogonal $4\times 4$ orthogonal matrices:
$$
A=\frac{1}{2}\begin{pmatrix}
			1 & 1 & 1 & 1 \\ 
			1 & 1 & -1 & -1 \\ 
			1 & -1 & 1 & -1 \\ 
			1 & -1 & -1 & 1 
			
		\end{pmatrix}
		\text{   and   }
		B=\frac{1}{2}\begin{pmatrix}
			1 & -1 & -1 & -1 \\ 
			1 & -1 & 1 & 1 \\ 
			1 & 1 & -1 & 1 \\ 
			1 & 1 & 1 & -1 
			
		\end{pmatrix}.
$$
Now, since $\{P_j\}=\{1/3,1/3,1/3,0\}$ we have 
$$
z_0=0 \;\;,\;\;z_1=\frac{1}{\sqrt{3}}e^{i\theta_1}\;\;,\;\;z_2=\frac{1}{\sqrt{3}}e^{i\theta_2}\;\;,\;\;z_3=\frac{1}{\sqrt{3}}e^{i\theta_3}\;,
$$
were we have used the fact that the transformation $|\psi\rangle\rightarrow|\psi'\rangle=\sum_{j=0}^{3}z_{\sigma(j)}u_j$ can be achieved by local unitaries for all permutations $\sigma$. With these values for $z_j$ we get
\begin{align}
& Q_0=R_0=\frac{1}{12}\left|e^{i\theta_1}+e^{i\theta_2}+e^{i\theta_3}\right|^2\nonumber\\
& Q_1=R_1=\frac{1}{12}\left|-e^{i\theta_1}+e^{i\theta_2}+e^{i\theta_3}\right|^2\nonumber\\
& Q_2=R_2=\frac{1}{12}\left|e^{i\theta_1}-e^{i\theta_2}+e^{i\theta_3}\right|^2\nonumber\\
& Q_3=R_3=\frac{1}{12}\left|e^{i\theta_1}+e^{i\theta_2}-e^{i\theta_3}\right|^2\;.
\end{align}
Thus, $\{Q_j\}=\{R_j\}=\{1/3,1/3,1/3,0\}$ if and only if up to permutation $e^{i\theta_1}=\pm1$,
$e^{i\theta_2}=\pm e^{i\pi/3}$, and $e^{i\theta_3}=\pm e^{i2\pi/3}$. From proposition~\ref{localunitary1} it follows 
that up to local unitaries $|\psi\rangle=|L\rangle$.
\end{proof}

\subsection{The State $|M\rangle$}

In this section we show that the state $|M\rangle$ in Eq.~(\ref{entropystates}) has the remarkable property that it \emph{maximizes} the average Tsallis entropy of entanglement $E^{(\alpha)}_{2}$ for \emph{all} $0\leq\alpha\leq 2$.
In addition, among all the states in $\mathcal{M}$, the state $|M\rangle$ \emph{minimizes}  $E^{(\alpha)}_{2}$ for \emph{all} $\alpha\geq 2$. 

\begin{theorem}\label{symmetric}
\textbf{(a)}
Let $\psi\in\mathcal{A}$. Then,
\begin{equation}
E_{2}^{(\alpha)}(\psi)\leq E_{2}^{(\alpha)}(|M\rangle)\text{  for  } 0<\alpha< 2 
\end{equation}
with equality if and only if up to local unitaries $\psi=|M\rangle$.\\
\textbf{(b)} Let $\psi\in\mathcal{M}$. Then,
\begin{equation}
E_{2}^{(\alpha)}(\psi)\geq E_{2}^{(\alpha)}(|M\rangle)\text{  for  } \alpha>2 
\end{equation}
with equality if and only if up to local unitaries $\psi=|M\rangle$.
\end{theorem}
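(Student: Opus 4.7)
The plan is to mirror the strategy used for Theorem~\ref{L}, but now invoke the \emph{other} extremal distribution from Theorem~\ref{lm}. For a $4{\times}4$ reduced density matrix with linear entropy $t\in[0,3/2]$ and $x=x(t)=\sqrt{1-2t/3}$, set
$$
\tilde E^{(\alpha)}(t)\equiv\frac{1}{1-\alpha}\!\left[\frac{(1+3x)^{\alpha}+3(1-x)^{\alpha}}{4^{\alpha}}-1\right].
$$
By Theorem~\ref{lm} this single smooth function is the \emph{maximum} of the Tsallis $\alpha$-entropy at fixed tangle $t$ for $0<\alpha<2$ and the \emph{minimum} for $\alpha>2$, attained only by the spectrum (\ref{lm1}), namely $\{\tfrac{1+3x}{4},\tfrac{1-x}{4},\tfrac{1-x}{4},\tfrac{1-x}{4}\}$.

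For any four-qubit $\psi$, letting $(t_{1},t_{2},t_{3})$ denote the three $(2{+}2)$-cut tangles, this inner bound gives
$$
E^{(\alpha)}_{2}(\psi)\leq\tfrac{1}{3}\sum_{k=1}^{3}\tilde E^{(\alpha)}(t_{k})\quad(0<\alpha<2),
$$
and the reversed inequality for $\alpha>2$. In part (a), $\psi\in\mathcal{A}$ together with Theorem~\ref{kn} gives $\tau_{1}(\psi)=1$; then (\ref{bounds}) yields $\sum_{k}t_{k}\leq 4$ with $t_{k}\leq 3/2$. In part (b), $\psi\in\mathcal{M}$ directly gives $\sum_{k}t_{k}=4$. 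Direct differentiation shows $\tilde E^{(\alpha)}$ is strictly increasing in $t$ on $[0,3/2]$, so in (a) the active constraint reduces to $\sum_{k}t_{k}=4$ as well. Running the same Lagrange/convexity analysis as in Appendix~\ref{optimization} for $|L\rangle$, I would then show that on the simplex $\{\sum t_{k}=4,\ 0\leq t_{k}\leq 3/2\}$ the function $\sum_{k}\tilde E^{(\alpha)}(t_{k})$ has a unique global extremum at $t_{1}=t_{2}=t_{3}=4/3$ — a maximum for $0<\alpha<2$ and a minimum for $\alpha>2$.

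At that balanced point, $x=1/3$ and the spectrum (\ref{lm1}) becomes $\{1/2,1/6,1/6,1/6\}$, so saturating both inner and outer bounds forces every $(2{+}2)$-reduced density matrix of $\psi$ to have precisely this spectrum. Then $\tau_{1}=1$, so Theorem~\ref{kn} places $\psi$ in $\mathcal{A}$ up to local unitary and we may write $\psi=\sum_{j=0}^{3}z_{j}u_{j}$. From (\ref{dis}) $P_{j}=|z_{j}|^{2}$, so after a permutation of the $u_{j}$ (itself a local unitary) we may take $|z_{0}|=1/\sqrt{2}$ and $|z_{1}|=|z_{2}|=|z_{3}|=1/\sqrt{6}$. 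Substituting $z_{j}=|z_{j}|e^{i\theta_{j}}$ into $Q_{j}=|\sum_{k}A_{jk}z_{k}|^{2}$ and $R_{j}=|\sum_{k}B_{jk}z_{k}|^{2}$ and imposing that $\{Q_{j}\}=\{R_{j}\}=\{1/2,1/6,1/6,1/6\}$ gives a small system in the phase differences $\theta_{j}-\theta_{0}$. After fixing the overall phase, the only solution modulo the residual $u_{j}\mapsto\pm u_{j}$ symmetries (all of which are local unitaries) is the phase configuration of $|M\rangle$; Proposition~\ref{localunitary1} then promotes equality of the $\mathcal{E}_{m}$-invariants to a genuine local unitary.

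The delicate step is establishing uniqueness of the balanced extremum \emph{uniformly} across the full ranges $0<\alpha<2$ and $\alpha>2$. The function $\tilde E^{(\alpha)}(t)$ is strictly concave in $t$ on $[0,3/2]$ for $0<\alpha<2$ and strictly convex for $\alpha>2$ — this sign change is precisely what swaps the symmetric point from a global max to a global min as $\alpha$ crosses 2 — and verifying the strict concavity/convexity, together with the correct boundary behaviour on the face $\sum t_{k}=4$ and on the coordinate caps $t_{k}=3/2$, is the technical heart of the argument that has to be re-run with $\tilde E^{(\alpha)}$ in place of the function handled in Appendix~\ref{optimization} for the $|L\rangle$ case.
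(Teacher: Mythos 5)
Your outer bound and your endgame (identifying $|M\rangle$ as the unique state whose three $2{+}2$ spectra are all $\{1/2,1/6,1/6,1/6\}$) match the paper, but the optimization step in the middle fails, and it fails at exactly the point you flag as "the technical heart." The function $\tilde E^{(\alpha)}(t)$ built from the spectrum (\ref{lm1}) is \emph{not} strictly concave on the relevant range for $0<\alpha<2$: writing $x=\sqrt{1-2t/3}$ one finds $d\tilde E^{(\alpha)}/dt\propto\big[(1+3x)^{\alpha-1}-(1-x)^{\alpha-1}\big]/x$, and a short expansion near $x=0$ shows $\tilde E^{(\alpha)}$ is convex in $t$ near $t=3/2$ for $1<\alpha<2$ (and concave there for $\alpha>2$) --- essentially the opposite of what you assert. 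As a consequence the balanced point is \emph{not} the global maximum of $\tfrac13\sum_k\tilde E^{(\alpha)}(t_k)$ on $\{\sum_k t_k=4,\ t_k\le 3/2\}$. Concretely, at $\alpha=3/2$ one computes $\tilde E^{(3/2)}(3/2)=1$, $\tilde E^{(3/2)}(4/3)\approx 0.8846$, $\tilde E^{(3/2)}(1)\approx 0.6650$, so the vertex $(3/2,3/2,1)$ (which is attained by the cluster states) gives $\approx 0.8883$, beating the balanced value $\approx 0.8846$. The analogous failure occurs for the minimum in part (b): at $\alpha=3$ the vertex gives $\approx 0.4255$ versus $\approx 0.4306$ at the center. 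So the crude bound cannot prove the theorem; it is strictly weaker than the statement at the cluster-state point.

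The missing idea, which is what the paper's proof actually supplies, is a coupling between the three cuts beyond the sum rule $\sum_k t_k=4-a^2$ of Theorem~\ref{ub}: with $a\equiv|\mathcal{E}_1(\psi)|$ one also has $p_j\le(1+a)/2$ for \emph{every} Schmidt coefficient of \emph{every} $2{+}2$ cut. This renders the spectrum (\ref{lm1}) infeasible whenever $t<2(1-a)(2+a)/3$, and the constrained extremal spectrum there is $p_0=(1+a)/2\ge p_1\ge p_2=p_3$ (the extension of~\cite{Ber03}). For $a=0$ and $t=1$ this replaces $\{(1+\sqrt3)/4,\dots\}$ by $\{1/2,1/2,0,0\}$, lowering the single-cut bound from $\approx 0.665$ to $\approx 0.586$ at $\alpha=3/2$, which is precisely what restores $(4/3,4/3,4/3)$ as the unique global extremum of $f^{(\alpha)}_{\max}(a,t_1,t_2,t_3)$. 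Relatedly, your reduction of part (a) to $\sum_k t_k=4$ by monotonicity discards the fact that $a>0$ simultaneously shrinks the tangle budget \emph{and} tightens the Schmidt constraint; the paper instead optimizes jointly over $(a,t_1,t_2,t_3)$ and shows the optimum sits at $a=0$. Without the constrained bound $\tilde E^{(\alpha)}_{\max}(a,t)$ of Eq.~(\ref{mxx}), the argument does not close.
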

\begin{proof}
Let $\psi\in\mathcal{A}$, and
denote $z_j\equiv\sqrt{p_j}e^{i\theta_j}$ for $j=0,1,2,3$. With this notations, $\mathcal{E}_1(\psi)=\sum_{j=0}^{3}p_je^{i2\theta_j}$,
and for a fixed value of $|\mathcal{E}_1|\equiv a$, $(a\geq 0)$, the formula in Theorem~\ref{ub}
can be written as
$$
t_1(\psi)+t_2(\psi)+t_3(\psi)=4-a^2\;.
$$
Moreover, note that if $|\mathcal{E}_1(\psi)|= a$ then $p_j\leq(1+a)/2$ for all $j=0,1,2,3$.
Therefore, we denote by 
$\tilde{E}_{\max}^{(\alpha)}(a,t)$, the maximum value $E^{(\alpha)}_{\text{T}}(\varphi)$
can take among all normalized bipartite states $\varphi\in\mathbb{C}^4\otimes\mathbb{C}^4$ 
with tangle $\tau(\varphi)=t$ and Schmidt coefficients $p_j\leq(1+a)/2$. 

Now, a simple calculation shows that for 
a distribution of the form $p_0\geq p_1=p_2=p_3$, we get that $p_j\leq(1+a)/2$ if and only if  
$t\geq 2(1-a)(2+a)/3$. Therefore, from theorem~\ref{lm} it follows that this is the optimal distribution
for $t\geq 2(1-a)(2+a)/3$. On the other hand, if $t< 2(1-a)(2+a)/3$, the optimal distribution (up to permutation) is given by
$p_0=\frac{1+a}{2}\geq p_1\geq p_2=p_3$. This is follows from the extension of the results in~\cite{Ber03}, and in 
particular~\cite{Dominic}, it is a consequence of Eq.(22) in~\cite{Ber03}. Hence, we conclude that for $0<\alpha<2$

\begin{align}\label{mxx}
&\tilde{E}_{\max}^{(\alpha)}(a,t) =\frac{1}{\alpha-1}\nonumber\\
&\left\{ 
                  \begin{array}{rll}
			&\left[1-\frac{3(1-x)^\alpha}{4^\alpha}-\frac{(1+3x)^\alpha}{4^\alpha}\right]\;\text{ for }t
			\geq\frac{2(1-a)(2+a)}{3} \\
			\\
			&
			\left[1-\left(\frac{1+a}{2}\right)^\alpha-\frac{(1-a+2\omega)^\alpha}{6^\alpha}-\frac{2(1-a-\omega)^\alpha}{6^\alpha}\right]\;\text{otherwise}			
		\end{array}\right.
\end{align}
where $x=x(t)\equiv\sqrt{1-\frac{2}{3}t}$ and $\omega\equiv\omega(t)\equiv\sqrt{4-2a-2a^2-3t}$. 

Similar to the definition in theorem~\ref{L}, we define 
\begin{align}
& f_{\max}^{(\alpha)}(a,t_1,t_2,t_3)
\equiv\nonumber\\
&\frac{1}{3}\left(\tilde{E}_{\max}^{(\alpha)}(a,t_1)+\tilde{E}_{\max}^{(\alpha)}(a,t_2)+\tilde{E}_{\max}^{(\alpha)}(a,t_3)\right)\;.
\end{align}
We therefore have for $0<\alpha<2$
$$
E_{2}^{(\alpha)}(\psi)\leq f_{\max}^{(\alpha)}(a,t_1,t_2,t_3).
$$
A straightforward calculation, similar to the one given in Appendix~\ref{optimization}, shows that
the global maximum of the function
$f_{\max}^{(\alpha)}(a,t_1,t_2,t_3)$ is unique and is obtained at the point $a=0$ (i.e. $\psi\in\mathcal{M}$) and $t_1=t_2=t_3=4/3$.
Therefore, from theorem~\ref{lm} (and in particular from Eq.~(\ref{lm1}) with $\tau=4/3$), it follows that this global maximum is obtained if and only if
\begin{equation}\label{pqr}
P_j,\;Q_j,\;R_j\in\Big\{\frac{1}{2},\frac{1}{6},\frac{1}{6},\frac{1}{6}\Big\}\;\;\;\forall\;\;j=0,1,2,3\;,
\end{equation}
where the sets $\{P_j\}$, $\{Q_j\}$, and $\{R_j\}$ are the eigenvalues of the reduced density matrices of $\psi$ 
obtained after tracing out qubits C and D, B and D, and B and C, respectively.
A priori, it is not clear that a four qubits state with these properties exists. We show now that there exists only
one state (up to local unitaries) with these properties, and it is the state $|M\rangle$.
Note also that if such a $\psi$ exists then $\psi\in\mathcal{M}\subset\mathcal{A}$.

Up to a local unitary (see proposition~\ref{localunitary}), w.l.o.g. we can assume that $P_0=1/2$ and 
$P_1=P_2=P_3=1/6$. Further, due to the freedom of global phase, we have
$$
z_0=\frac{1}{\sqrt{2}},\;z_1=\frac{1}{\sqrt{6}}e^{i\theta_1},\;z_2=\frac{1}{\sqrt{6}}e^{i\theta_2},\;z_3=\frac{1}{\sqrt{6}}e^{i\theta_3}\;.
$$
Next, the condition that three of the $Q_j$s and three of the $R_j$s equal to 1/6 implies that $|\psi\rangle$ equals
to one of the 8 states:
$$
\frac{1}{\sqrt{2}}u_0\pm\frac{i}{\sqrt{6}}u_1\pm\frac{i}{\sqrt{6}}u_2\pm\frac{i}{\sqrt{6}}u_3
$$ 
Using proposition~\ref{localunitary} we get that all these eight states are equivalent under local unitaries.
We are therefore left with one state
\begin{align}
\psi=\frac{1}{\sqrt{2}}u_0+\frac{i}{\sqrt{6}}u_1+\frac{i}{\sqrt{6}}u_2+\frac{i}{\sqrt{6}}u_3
\end{align}
Up to a local unitary, this state is the same as $|M\rangle$ in Eq.~(\ref{entropystates}).

The proof of part \textbf{(b)} of the theorem follows the same lines as above with $a=0$ and $\alpha>2$. 
\end{proof}

\subsection{The cluster states}

We are now ready to analyze the maximally entangled states in $\mathcal{M}$ for which two of the tangles $\{\tau_{(AB)(CD)},\tau_{(AC)(BD)},\tau_{(AD)(BC)}\}$ equal
to the maximal value of $3/2$ while the remaining tangle equals to $1$. 

\begin{theorem}\label{CS}
Let $|\psi\rangle\in\mathcal{A}$, and denote by $\{P_j\}$, $\{Q_j\}$, and $\{R_j\}$ the eigenvalues of the reduced density matrices
of $|\psi\rangle$ obtained after tracing out qubits C and D, B and D, and B and C, respectively. 
If the Shannon entropies $H\left(\{P_i\}\right)=H\left(\{Q_i\}\right)=2$, 
then $H\left(\{R_i\}\right)\leq 1$, with equality if and only if up to a local unitary $|\psi\rangle=|C_2\rangle\in\mathcal{M}$.
\end{theorem}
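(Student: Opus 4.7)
The plan is to reduce the hypotheses to a small trigonometric system on the phases of the coordinates $z_j$ of $|\psi\rangle=\sum_j z_j u_j$, using the explicit formulas in Eq.~(\ref{dis}). The hypothesis $H(\{P_j\})=2$ forces the uniform distribution $P_j=1/4$, hence $|z_j|=1/2$; after fixing the global phase I may write $z_j=\tfrac{1}{2}e^{i\theta_j}$ with $\theta_0=0$. Similarly $H(\{Q_j\})=2$ gives $Q_j=1/4$, i.e. $|\sum_k A_{jk}e^{i\theta_k}|^2=1$ for each $j$. Pairing these four equations via the parallelogram identity $|u+v|^2+|u-v|^2=2|u|^2+2|v|^2$ (using that the rows of $A$ come in opposite-sign pairs) collapses them to three constraints
\begin{align*}
\cos\theta_1+\cos(\theta_2-\theta_3) & =0,\\
\cos\theta_2+\cos(\theta_1-\theta_3) & =0,\\
\cos\theta_3+\cos(\theta_1-\theta_2) & =0.
\end{align*}

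The same parallelogram trick applied to the rows of $B$ yields $R_0+R_1=(1-\cos\theta_1)/2$, $R_0+R_2=(1-\cos\theta_2)/2$, $R_0+R_3=(1-\cos\theta_3)/2$, with complementary pair-sums equal to $(1+\cos\theta_j)/2$. So if the constraints above force $\cos\theta_j=-1$ for some $j\in\{1,2,3\}$, then the pair $(R_a,R_b)$ whose sum is $(1+\cos\theta_j)/2=0$ both vanish, $\{R_j\}$ has at most two nonzero entries, and $H(\{R_j\})\le\log 2=1$. To establish this, set $S=\theta_1+\theta_2-\theta_3$, $T=\theta_1-\theta_2+\theta_3$, $U=-\theta_1+\theta_2+\theta_3$; since $\cos\alpha=-\cos\beta$ iff $\alpha\pm\beta\equiv\pi\pmod{2\pi}$, the three identities translate to ``$S\equiv\pi$ or $T\equiv\pi$'', ``$S\equiv\pi$ or $U\equiv\pi$'', and ``$T\equiv\pi$ or $U\equiv\pi$'' (mod $2\pi$). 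An elementary check shows that at least two of $S,T,U$ must then be $\equiv\pi$; solving any two of them simultaneously produces exactly $\theta_j\equiv\pi$ for the one index $j\in\{1,2,3\}$ whose equation is absent from the chosen pair.

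For the equality case $H(\{R_j\})=1$, the two surviving $R$-entries must each equal $1/2$, which forces the remaining free angle to satisfy $\cos\theta=0$, i.e.\ $\theta=\pm\pi/2$. This yields exactly six candidate states of the form $\tfrac{1}{2}(u_0+\epsilon_1 u_1+\epsilon_2 u_2+\epsilon_3 u_3)$ with one $\epsilon_k=-1$ and the other two equal to $\pm i$. A direct computation of $\mathcal{E}_0,\dots,\mathcal{E}_3$ shows that all six share the same $f_0,\dots,f_6$, so by Proposition~\ref{localunitary} they form a single $U(2)^{\otimes 4}$ orbit. It remains to identify this orbit with $[|C_2\rangle]$. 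I plan to do so by picking a representative, say $\tfrac{1}{2}(u_0-u_1+iu_2+iu_3)$, expanding it in the computational basis, and exhibiting an explicit local unitary of the form $I\otimes I\otimes U_C\otimes U_D$ built from $\sigma_x$ and diagonal phases that maps it to $|C_2\rangle$. The main obstacle I expect is this last identification, because $|C_2\rangle\notin\mathcal{A}$ and so the LU-equivalence cannot be read off from the invariants alone but must be verified by hand.
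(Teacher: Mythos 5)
Your overall strategy matches the paper's: use $H(\{P_j\})=2$ to force $|z_j|=1/2$, use $H(\{Q_j\})=2$ to reduce to a one-parameter family, show two of the $R_j$ vanish, and then let the equality case pin down $e^{i\gamma}=\pm i$. Where you differ is in the middle step: the paper invokes Proposition~\ref{localunitary1} to jump directly to the normal form $\tfrac12 u_0-\tfrac12 u_1+\tfrac{e^{i\gamma}}{2}(u_2+u_3)$, while you derive the constraints explicitly via the parallelogram identity, obtaining $\cos\theta_1+\cos(\theta_2-\theta_3)=0$ and its two cyclic partners and then solving the resulting congruences on $S,T,U$. This is more self-contained and makes transparent \emph{why} the reduction works (the three pair-sums $Q_0+Q_j=1/2$ already force uniformity), at the cost of a longer case analysis. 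Your observation that the final identification with $|C_2\rangle$ requires an explicit local unitary (since $|C_2\rangle\notin\mathcal{A}$ and the invariants cannot be read off directly) is a point the paper glosses over entirely; you have only planned, not executed, that computation, but the paper is no more explicit.

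One concrete error needs fixing. From $S\equiv T\equiv\pi\pmod{2\pi}$ you get $2\theta_1\equiv 0$, hence $\theta_1\equiv 0$ \emph{or} $\pi$, not ``exactly $\theta_j\equiv\pi$'': for instance $(\theta_1,\theta_2,\theta_3)=(0,\pi/2,-\pi/2)$ satisfies all three constraints, giving the state $\tfrac12(u_0+u_1+iu_2-iu_3)$, which is missing from your list of six candidates. The inequality $H(\{R_j\})\le 1$ survives because $\cos\theta_j=+1$ annihilates the complementary pair of $R$'s (the one summing to $(1-\cos\theta_j)/2$), but your equality-case enumeration is incomplete. The gap is repairable: all of the extra solutions still have $\{z_j^2\}=\tfrac14\{1,1,-1,-1\}$ and $\mathcal{E}_0=z_0z_1z_2z_3=\tfrac{1}{16}$, so they carry the same invariants $\mathcal{E}_0,\dots,\mathcal{E}_3$ and lie in the same local-unitary orbit by Proposition~\ref{localunitary1}; you should state and check this explicitly rather than rely on the (false) claim that only $\theta_j\equiv\pi$ occurs.
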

Note that from the theorem above it follows that, up to a local unitary, $|\psi\rangle=|C_1\rangle$ ( $|\psi\rangle=|C_3\rangle$) if one replace the condition
$H\left(\{P_i\}\right)=H\left(\{Q_i\}\right)=2$ with $H\left(\{Q_i\}\right)=H\left(\{R_i\}\right)=2$ ($H\left(\{P_i\}\right)=H\left(\{R_i\}\right)=2$).

\begin{proof}
Let $|\psi\rangle\in\mathcal{A}$, and define two orthogonal $4\times 4$ orthogonal matrices:
$$
A=\frac{1}{2}\begin{pmatrix}
			1 & 1 & 1 & 1 \\ 
			1 & 1 & -1 & -1 \\ 
			1 & -1 & 1 & -1 \\ 
			1 & -1 & -1 & 1 
			
		\end{pmatrix}
		\text{   and   }
		B=\frac{1}{2}\begin{pmatrix}
			1 & 1 & 1 & -1 \\ 
			1 & 1 & -1 & 1 \\ 
			1 & -1 & 1 & 1 \\ 
			1 & -1 & -1 & -1 
			
		\end{pmatrix}.
$$

The eigenvalues of the reduced density matrices are given by ($j=0,1,2,3$):
$$
P_j=|z_j|^2\;\;,\;\;Q_j=\left|\sum_{k=0}^{3}A_{jk}z_k \right|^2\;\;,\;\;R_j=\left|\sum_{k=0}^{3}B_{jk}z_k \right|^2\; .
$$
Hence, the equality $H\left(\{P_i\}\right)=2$ leads to 
$$
z_j=\frac{1}{2}e^{i\theta_j}\;\;,\;j=0,1,2,3\;\;.
$$
Now, since a quantum state is defined up to a global phase, w.l.o.g. we can take $\theta_0=-\theta_1$ which reduces the number of free parameters to 3.
Next, the condition $H(\{Q_j\})=2$ together with proposition 2 implies that up to a local unitary
$$
|\psi\rangle=\frac{1}{2}u_0-\frac{1}{2}u_1+\frac{e^{i\gamma}}{2}u_2+\frac{e^{i\gamma}}{2}u_3\;,
$$
where $\gamma$ is the only free real parameter left. 
Now, it is a simple calculation to check that at least two of the $R_j$ ($j=0,1,2,3$)
equals to zero. Therefore, $H\left(\{R_j\}\right)\leq 1$. If we require that $H\left(\{R_j\}\right)= 1$, then $e^{i\gamma}=\pm i$
and up to a local unitary $|\psi\rangle=|C_2\rangle$.
\end{proof}

Next we show that the three cluster states are the only states (up to local unitaries) that maximize
the average Renyi entropy of degree $\alpha\geq 2$. 

\subsubsection{Cluster states are the only states that maximize the average Renyi entropy with $\alpha\geq 2$}

\begin{theorem}
Let $\psi\in\mathcal{H}_4$ and $\alpha\geq 2$. Then,
$$
E_{2}^{(\alpha)}(\psi)\leq 5/3\;,
$$
with equality if and only if, up to local unitaries, $\psi$ is one of the cluster states given in Eqs.(\ref{C1},\ref{C2},\ref{C3}).
\end{theorem}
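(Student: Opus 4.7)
The plan is to exploit the monotonicity of R\'enyi $\alpha$-entropy in $\alpha$ in order to reduce the question to the case $\alpha=2$, and then to invoke Theorem~\ref{CS} to pin down the extremal state. Since $E_R^{(\alpha)}(\rho)$ is non-increasing in $\alpha$ (cf.~\cite{Da02}), for every $\alpha\geq 2$ and every density matrix $\rho$ one has $E_R^{(\alpha)}(\rho)\leq E_R^{(2)}(\rho)=-\log\text{Tr}\rho^{2}$. Using $\tau=2(1-\text{Tr}\rho^{2})$ and averaging over the three two-qubits vs.\ two-qubits cuts,
\[
E_{2}^{(\alpha)}(\psi)\;\leq\;-\frac{1}{3}\sum_{k=1}^{3}\log\!\left(1-\frac{t_{k}}{2}\right),
\]
where $t_{1},t_{2},t_{3}$ are the tangles of the three bipartite cuts. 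It therefore suffices to minimize $\prod_{k}(1-t_{k}/2)$ over the admissible triples.

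Set $s_{k}=1-t_{k}/2$. Each $s_{k}\in[1/4,1]$ because the reduced state is 4-dimensional, and Theorem~\ref{ub} combined with $\tau_{1}\leq 1$ (each one-qubit reduced density has $\text{Tr}\rho^{2}\geq 1/2$) gives $t_{1}+t_{2}+t_{3}\leq 4$, equivalently $s_{1}+s_{2}+s_{3}\geq 1$. I would minimize $s_{1}s_{2}s_{3}$ on this region. The interior critical point on the hyperplane $\sum s_{k}=1$ gives $s_{1}=s_{2}=s_{3}=1/3$ with product $1/27$; on each boundary face $s_{k}=1/4$ a short two-variable analysis shows that the minimum is attained at a corner where a second coordinate equals $1/4$ and the third equals $1/2$, yielding product $1/32$. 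Since $1/32<1/27$, the global minimum equals $1/32$, attained (up to permutation) exclusively at $(s_{1},s_{2},s_{3})=(1/4,1/4,1/2)$, i.e.\ when the tangle triple is a permutation of $(3/2,3/2,1)$. This establishes $E_{2}^{(\alpha)}(\psi)\leq -\frac{1}{3}\log(1/32)=5/3$.

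For the equality case, assume $E_{2}^{(\alpha)}(\psi)=5/3$, so the tangles are a permutation of $(3/2,3/2,1)$; take $t_{1}=t_{2}=3/2$ and $t_{3}=1$ (the two remaining orderings yield $|C_{1}\rangle$ and $|C_{3}\rangle$ by the remark following Theorem~\ref{CS}). The identity $t_{1}+t_{2}+t_{3}=4$ together with Theorem~\ref{ub} forces $\tau_{1}=1$ and $\tau_{ABCD}=0$, so Theorem~\ref{kn} places $\psi\in\mathcal{A}$ up to local unitary. The value $t=3/2$ in $\mathbb{C}^{4}$ forces $\text{Tr}\rho^{2}=1/4$ and hence $\rho=I/4$; thus $\{P_{j}\}=\{Q_{j}\}=\{1/4,1/4,1/4,1/4\}$ and $H(\{P_{j}\})=H(\{Q_{j}\})=2$. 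The chain $E_{2}^{(\alpha)}(\psi)\leq E_{2}^{(2)}(\psi)\leq 5/3$ being saturated gives $E_{R}^{(2)}(\rho_{3})=1$, whence $H(\{R_{j}\})\geq E_{R}^{(2)}(\rho_{3})=1$. Theorem~\ref{CS} provides the reverse bound $H(\{R_j\})\leq 1$ and identifies $\psi=|C_{2}\rangle$ up to local unitaries. For $\alpha>2$ the additional equality $E_{R}^{(\alpha)}(\rho_{3})=E_{R}^{(2)}(\rho_{3})$ forces $\rho_{3}$ to be uniform on its support, i.e.\ to have spectrum $\{1/2,1/2,0,0\}$, which is fully consistent with the cluster state.

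The main technical step is the three-variable constrained optimization: one must verify that the corner value $1/32$ is strictly smaller than the interior Lagrange value $1/27$, and that no other point of the feasible polytope attains $1/32$, so the minimizer is unique up to permutation. Everything else is bookkeeping assembled from Theorems~\ref{kn},~\ref{ub} and~\ref{CS}, with the last doing the heavy lifting of promoting a spectrum condition into an identification of the state.
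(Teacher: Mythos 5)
Your proposal is correct and follows essentially the same route as the paper's own proof: reduce to $\alpha=2$ via monotonicity of the R\'enyi entropy in $\alpha$, rewrite $E_{2}^{(2)}$ in terms of the three tangles, minimize the product $\prod_k(1-t_k/2)$ over the polytope cut out by $\sum_k t_k\leq 4$ and $t_k\leq 3/2$ to get the value $1/32$ exactly at permutations of $(3/2,3/2,1)$, and then invoke Theorem~\ref{CS} to identify the extremal states as the cluster states. The only differences are cosmetic (you apply the monotonicity reduction at the outset rather than at the end, and you are slightly more explicit in converting the tangle data into the Shannon-entropy hypotheses of Theorem~\ref{CS}).
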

\begin{proof}
We first prove it for the case $\alpha=2$. In this case, the Renyi entropy of degree 2 (which also called collision entropy)
can be expressed in terms of the linear entropy. This implies that the Renyi entropy of entanglement $E^{(\alpha=2)}_{R}$ can be expressed in terms of the tangle:
$$
E^{(\alpha=2)}_{R}(|\psi^{AB}\rangle)=-\log\left(1-\frac{1}{2}\tau(|\psi^{AB}\rangle)\right)\;.
$$
To simplify notations, we denote by $t_1$, $t_2$ and $t_3$ the values of $\tau_{(AB)(CD)}(\psi)$,
$\tau_{(AC)(BD)}(\psi)$, and $\tau_{(AD)(BC)}(\psi)$, respectively. With these notations we have
$$
E_{2}^{(\alpha=2)}(\psi)=-\frac{1}{3}\log\left[\left(1-\frac{1}{2}t_1\right)\left(1-\frac{1}{2}t_2\right)\left(1-\frac{1}{2}t_3\right)\right]
$$
Note that since $\tau_2\leq 4/3$ we have $t_1+t_2+t_3\leq 4$. Now, the function $-\log(1-t/2)$ increase with $t$.
Therefore, $E_{2}^{\alpha=2}$ obtains its maximum value when $t_1+t_2+t_3=4$. 
Now, we define 
$$
f(t_1,t_2)\equiv\left(1-\frac{1}{2}t_1\right)\left(1-\frac{1}{2}t_2\right)\left(\frac{t_1+t_2}{2}-1\right)
$$ 
on the domain 
$$
D=\{(t_1,t_2)\big| 1\leq t_1\leq 3/2\;,\;5/2-t_1 \leq t_2\leq 3/2\}.
$$
A simple analysis of the function $f(t_1,t_2)$ implies that $f(t_1,t_2)$ obtains its minimum value of $1/32$
only at the points $(1,3/2)$, $(3/2,1)$, and $(3/2,3/2)$. Therefore, for states with these values for $t_1$ $t_2$ and $t_3=4-t_1-t_2$, $E_{2}^{(\alpha=2)}$ obtains its maximum value. In theorem~\ref{CS} we have seen that 
the 3 cluster states are the only ones with these values of $t_1$, $t_2$, and $t_3$. This completes the proof of the theorem for $\alpha=2$. The case for $\alpha>2$ follows immediately from the fact that the Renyi entropy is 
a non-increasing function of $\alpha$ and therefore $E_{2}^{(\alpha=2)}\geq E_{2}^{(\alpha>2)}$. To complete the proof, we observe that for the cluster states $E_{2}^{(\alpha)}=5/3$ for all $\alpha$.
\end{proof}

\section{More maximally entangled 4-qubits states}

In this section we characterize all the states in $\psi\in\mathcal{A}$ that maximize the average $\alpha$-entropy of 
entanglement $E^{(\alpha)}_2$ (as defined in Eq.(\ref{def})), for given values of $\tau_{(AB)(CD)}$, $\tau_{(AC)(BD)}$, and $\tau_{(AD)(BC)}$. We also characterize the states in $\mathcal{M}$ for which one of the three tangles $\tau_{(AB)(CD)}$, 
$\tau_{(AC)(BD)}$, and $\tau_{(AD)(BC)}$, is equal to $3/2$.

We start with a definition of a class $\mathcal{C}$ of maximally entangled states:
\begin{align}
& \mathcal{C}\equiv\nonumber\\
& \left\{\sqrt{p}e^{i\theta}u_0+\sqrt{\frac{1-p}{3}}\sum_{j=1}^{3}u_j\Big|\;\frac{1}{2}\leq p\leq 1
,\;\cos ^2\theta\leq\frac{1-p}{3p}\right\}\nonumber
\end{align}
\begin{proposition}
Up to local unitaries, the class $\mathcal{C}$ consists of all the states in~$\mathcal{A}$ such that all three distributions
$\{P_j\}$, $\{Q_j\}$, and $\{R_j\}$, as defined in Eq.~(\ref{dis}), have the form $\{p_0,p_1,p_1,p_1\}$
with $p_0>p_1$; i.e. they are of the form given in Eq.(\ref{lm1}). 
\end{proposition}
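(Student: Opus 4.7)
The proof naturally splits into a forward verification that every state in $\mathcal{C}$ has the property, and a converse reduction showing every such state is LU-equivalent to one in $\mathcal{C}$.

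For the forward direction, let $|\psi\rangle = \sqrt{p}e^{i\theta}u_0 + \sqrt{(1-p)/3}\sum_{j=1}^3 u_j$. Applying Eq.~(\ref{dis}) with the explicit matrices $A, B$ immediately gives $\{P_j\} = \{p,(1-p)/3,(1-p)/3,(1-p)/3\}$. For $j \ge 1$, rows $j$ of $A$ (resp.\ $B$) contain a single $+$ with two $-$ signs on columns $1,2,3$, so with $z_1=z_2=z_3$ all three of $Q_1,Q_2,Q_3$ collapse to $|\sqrt{p}e^{i\theta}-\sqrt{(1-p)/3}|^2/4$ and similarly $R_1=R_2=R_3 = |\sqrt{p}e^{i\theta}+\sqrt{(1-p)/3}|^2/4$, while the odd entries are $4Q_0 = |\sqrt{p}e^{i\theta}+\sqrt{3(1-p)}|^2$ and $4R_0 = |\sqrt{p}e^{i\theta}-\sqrt{3(1-p)}|^2$. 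A short calculation shows $Q_0 \ge Q_1$ and $R_0 \ge R_1$ both hold exactly when $\cos^2\theta \le (1-p)/(3p)$, the defining inequality of $\mathcal{C}$.

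For the converse, let $|\psi\rangle = \sum z_j u_j \in \mathcal{A}$ satisfy the hypothesis. By Proposition~\ref{localunitary1} the LU orbit inside $\mathcal{A}$ consists of permutations of $(z_0,\ldots,z_3)$ combined with an even number of sign flips; apply such an LU to ensure $|z_0|^2 = p$ is the odd largest entry and $|z_1|^2 = |z_2|^2 = |z_3|^2 = (1-p)/3$. Writing $z_j = \sqrt{|z_j|^2}\,c_j$ with $|c_j|=1$ and using $B_{j0} = A_{j0} = 1/2$ and $B_{jk} = -A_{jk}$ for $k \ge 1$, one finds
\begin{equation*}
4Q_j = p + \tfrac{1-p}{3}|\tilde v_j|^2 + 2\sqrt{\tfrac{p(1-p)}{3}}\,\mathrm{Re}(\bar{c}_0 \tilde v_j),\quad 4R_j = p + \tfrac{1-p}{3}|\tilde v_j|^2 - 2\sqrt{\tfrac{p(1-p)}{3}}\,\mathrm{Re}(\bar{c}_0 \tilde v_j),
\end{equation*}
with $\tilde v_j = 2\sum_{k=1}^3 A_{jk}c_k$. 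A direct check shows that the six non-trivial two-element even sign flips (e.g.\ $(z_1,z_2)\to(-z_1,-z_2)$) preserve $\{P_j\}$ and permute the indices of $\{Q_j\}$ and $\{R_j\}$ by the \emph{same} element of the Klein four group $V_4 \subset S_4$; together with the $S_3$-permutations of $(z_1,z_2,z_3)$ they generate a diagonal $S_4$-action on $(\{Q_j\},\{R_j\})$, so a further LU brings the odd $Q$-index to $0$.

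The crux is that the strict hypothesis $p_0 > p_1$ forces the odd $R$-index to also equal $0$. If instead it were some $r \ne 0$, the three-element sets $J_Q, J_R$ on which $Q$ and $R$ are respectively constant satisfy $|J_Q \cap J_R| = 2$, and the resulting equalities $Q_r = Q_a = Q_b$ together with $R_0 = R_a = R_b$ (with $\{a,b\} = J_Q\cap J_R$), expanded via sum-to-product identities on $\cos(\theta_i - \theta_j)$, pin down $\cos^2\theta_0 = (1-p)/(3p)$; substituting back renders one of $\{Q_j\}$ or $\{R_j\}$ uniform, contradicting $p_0 > p_1$. With both odd indices at $0$, the equalities $|\tilde v_1|^2 = |\tilde v_2|^2 = |\tilde v_3|^2$ give $\cos(\theta_1-\theta_2) = \cos(\theta_1-\theta_3) = \cos(\theta_2-\theta_3)$, with solutions either $\theta_1 = \theta_2 = \theta_3$ or the cube-roots configuration $\{c_1,c_2,c_3\} = e^{i\theta_1}\{1,\omega,\omega^2\}$. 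The companion equalities $\mathrm{Re}(\bar{c}_0 \tilde v_1) = \mathrm{Re}(\bar{c}_0 \tilde v_2) = \mathrm{Re}(\bar{c}_0 \tilde v_3)$ rule out the cube-roots branch, since three $120^\circ$-separated vectors of equal modulus cannot have equal nonzero real parts (their sum vanishes). Hence $\theta_1 = \theta_2 = \theta_3$, a global phase produces the $\mathcal{C}$-form, and the strict inequalities $Q_0 > Q_1$, $R_0 > R_1$ translate, via the formulas for $4Q_0 - 4Q_1$ and $4R_0 - 4R_1$, into $\cos^2\theta \le (1-p)/(3p)$. The hard part is the index-alignment step; once the diagonal $S_4$-symmetry is identified and the strict inequality $p_0 > p_1$ is used to rule out boundary degenerations, the rest is routine trigonometry.
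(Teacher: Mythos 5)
Your argument follows the same route as the paper's own proof: reduce to $\mathcal{A}$, read the moduli $|z_j|^2$ off from $\{P_j\}$, and extract the phase constraints from the explicit matrices $A$ and $B$ of Eq.~(\ref{dis}) together with the local-unitary classification of Proposition~\ref{localunitary1}. The paper compresses the converse into two lines and simply asserts that the requirements on $\{Q_j\}$ and $\{R_j\}$ force $\theta_1=\theta_2=\theta_3$ and $\cos^2\theta\leq(1-p)/(3p)$; you supply the two pieces of bookkeeping that assertion hides, namely the alignment of the distinguished index of $\{Q_j\}$ and of $\{R_j\}$ with that of $\{P_j\}$ (via the diagonal action of the even sign flips and permutations on the three spectra), and the elimination of the spurious solutions of $\cos(\theta_1-\theta_2)=\cos(\theta_1-\theta_3)=\cos(\theta_2-\theta_3)$. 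I verified both claims, including the assertion that a misaligned $R$-index forces $\cos^2(\theta_j-\theta_0)=(1-p)/(3p)$ and hence a uniform spectrum; these steps are genuinely needed and are correct.

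One sub-case is left open. In the cube-roots branch you argue that three $120^{\circ}$-separated unit vectors cannot have equal \emph{nonzero} real parts; but equal real parts all equal to zero is consistent with $Q_1=Q_2=Q_3$ and $R_1=R_2=R_3$ and must be excluded separately. It is excluded: there $\tilde{v}_0=c_1+c_2+c_3=0$ and $|\tilde{v}_j|=2$, so $Q_0=R_0=p/4$ while $Q_1=R_1=(4-p)/12>p/4$ for $p<1$, i.e.\ the singleton eigenvalue is the \emph{smallest} and the spectrum is not of the form of Eq.~(\ref{lm1}). Add that line. Separately, note that neither your proof nor the paper's establishes the constraint $p\geq 1/2$ in the definition of $\mathcal{C}$: the hypothesis on $\{P_j\}$ only forces $p>(1-p)/3$, i.e.\ $p>1/4$, and for $1/4<p<1/2$ (e.g.\ $p=1/3$, $\theta=\pi/2$) all three spectra are of the form of Eq.~(\ref{lm1}) even though the state is not locally equivalent to any state with $|z_0|^2\geq 1/2$, since local unitaries on $\mathcal{A}$ preserve the multiset $\{|z_j|\}$. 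That is a defect of the statement as written rather than of your argument, but it is worth flagging.
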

\begin{remark}
Note that the state $|M\rangle$ belongs to $\mathcal{C}$. It corresponds to $\theta=\pi/2$ and $p=1/2$.
\end{remark}
\begin{proof}
Let $\psi=z_0u_0+z_1u_1+z_2u_2+z_3u_3$ be a state in $\mathcal{A}$ with the properties mentioned in the proposition. Therefore, since $P_j=|z_j|^2$, up to local unitaries
$$
\psi=\sqrt{p}u_0+\sqrt{\frac{1-p}{3}}\sum_{j=1}^{3}e^{i\theta_j}u_j\;,
$$
with $p\geq 1/2$.
From the definitions of $\{Q_j\}$ and $\{R_j\}$ in Eq.~(\ref{dis}), and from the requirement that three of the $Q_j$s
are equal and smaller than $1/6$ and also three of the $R_j$s are equal and smaller than $1/6$, we get that
$\theta_1=\theta_2=\theta_3\equiv\theta$, and  $\cos ^2\theta\leq (1-p)/3p$. Therefore, up to a global phase
$\psi\in\mathcal{C}$.
\end{proof}

\begin{corollary}
Let $\psi\in\mathcal{H}_4$ and $\phi\in\mathcal{C}$. If 
\begin{align}
& \tau_{(AB)(CD)}(\psi)\leq\tau_{(AB)(CD)}(\phi)\nonumber\\ 
& \tau_{(AC)(BD)}(\psi)\leq\tau_{(AC)(BD)}(\phi)\nonumber\\
& \tau_{(AD)(BC)}(\psi)\leq\tau_{(AD)(BC)}(\phi)\;,\nonumber
\end{align}
then
\begin{align}
& E_{2}^{(\alpha)}(\psi)\leq E_{2}^{(\alpha)}(\phi)\;\;\text{for  }0<\alpha<2\nonumber\\
& E_{2}^{(\alpha)}(\psi)\geq E_{2}^{(\alpha)}(\phi)\;\;\text{for  }\alpha>2
\end{align}
with equalities if and only if $\phi=\psi$ up to local unitaries.
\end{corollary}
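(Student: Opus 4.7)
The plan is to reduce the statement to a bipartite extremal problem controlled by Theorem~\ref{lm}, and then exploit monotonicity of the extremum in the tangle. For any four-qubit state $\eta$ and any $2|2$ cut, Theorem~\ref{lm} gives the sharp bound
$$
E^{(\alpha)}_{\mathrm{T/R}}\bigl(\eta\text{ on that cut}\bigr)\;\le\;\tilde E_{\max}^{(\alpha)}\!\bigl(\tau(\eta\text{ on that cut})\bigr)\qquad(0<\alpha<2),
$$
with equality iff the bipartite spectrum has the ``one large, three equal'' form of Eq.~(\ref{lm1}). Averaging over the three inequivalent $2|2$ cuts yields
$$
E_2^{(\alpha)}(\eta)\;\le\;\tfrac13\!\sum_{i=1}^{3}\tilde E_{\max}^{(\alpha)}\!\bigl(\tau_i(\eta)\bigr),
$$
where $\tau_1,\tau_2,\tau_3$ denote the three $2|2$ tangles.

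Next, I invoke the preceding proposition: every $\phi\in\mathcal{C}$ has all three $2|2$ reduced spectra of exactly the ``one large, three equal'' form, so the bound above is attained by $\phi$. The third ingredient is strict monotonicity: from the derivative computation inside the proof of Theorem~\ref{L}, $\tilde E_{\max}^{(\alpha)}(t)$ is strictly increasing in $t$ throughout the relevant range. Using the hypothesis $\tau_i(\psi)\le\tau_i(\phi)$ for $i=1,2,3$ chains together as
$$
E_2^{(\alpha)}(\psi)\;\le\;\tfrac13\!\sum_i\tilde E_{\max}^{(\alpha)}\!\bigl(\tau_i(\psi)\bigr)\;\le\;\tfrac13\!\sum_i\tilde E_{\max}^{(\alpha)}\!\bigl(\tau_i(\phi)\bigr)\;=\;E_2^{(\alpha)}(\phi),
$$
proving the inequality for $0<\alpha<2$. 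The case $\alpha>2$ runs through the same three steps after replacing $\tilde E_{\max}^{(\alpha)}$ by the corresponding minimum branch of Theorem~\ref{lm}: the ``one large, three equal'' spectrum is then the \emph{unique minimizer}, $\tilde E_{\min}^{(\alpha)}$ is still strictly increasing in $t$, and the inequality reverses in exactly the way stated.

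For the equality case, equality in the first inequality forces each of the three $2|2$ reduced spectra of $\psi$ to be of the extremal ``one large, three equal'' form, which by the preceding proposition places $\psi\in\mathcal{C}$ up to local unitaries. Equality in the second inequality, by strict monotonicity of $\tilde E_{\max}^{(\alpha)}$, then forces $\tau_i(\psi)=\tau_i(\phi)$ for $i=1,2,3$. The remaining task, which I expect to be the main obstacle, is to show that two members of $\mathcal{C}$ with identical triples of $2|2$ tangles must be LU-equivalent. For this I would compute the three tangles explicitly from the parametrization $\phi=\sqrt{p}\,e^{i\theta}u_0+\sqrt{(1-p)/3}\,(u_1+u_2+u_3)$, observe that they determine $p$ and $\cos\theta$ uniquely within the allowed domain, and then dispose of the residual $\theta\leftrightarrow -\theta$ ambiguity by checking that the quantities $f_0,\ldots,f_6$ of Proposition~\ref{localunitary} all depend on $\theta$ only through $\cos\theta$, so that $\theta\leftrightarrow -\theta$ is already absorbed into $U(2)^{\otimes 4}$.
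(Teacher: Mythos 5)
Your argument for $0<\alpha<2$ is sound and is exactly the fleshing-out of the paper's one-line justification: on each $2|2$ cut the $\alpha$-entropy is bounded above by its value on the ``one large, three equal'' spectrum of Eq.~(\ref{lm1}) at the same tangle, members of $\mathcal{C}$ saturate all three bounds simultaneously by the preceding proposition, and that extremal value is increasing in the tangle, so the hypothesis $\tau_i(\psi)\le\tau_i(\phi)$ chains through. (One citation slip: the monotonicity computed inside the proof of Theorem~\ref{L} is for the branch built from the spectra of Eq.~(\ref{L1}); the function you actually need here is the one built from Eq.~(\ref{lm1}), whose monotonicity is a different, though equally easy, derivative check.)

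The $\alpha>2$ half does not go through, and the phrase ``the inequality reverses in exactly the way stated'' papers over the failure. Write the chain out: for $\alpha>2$ the spectrum of Eq.~(\ref{lm1}) is the \emph{minimizer}, so you get $E_2^{(\alpha)}(\psi)\ge\frac13\sum_i\tilde E_{\min}^{(\alpha)}(\tau_i(\psi))$ and $E_2^{(\alpha)}(\phi)=\frac13\sum_i\tilde E_{\min}^{(\alpha)}(\tau_i(\phi))$; since $\tilde E_{\min}^{(\alpha)}$ is \emph{increasing} in $t$ and $\tau_i(\psi)\le\tau_i(\phi)$, the middle comparison reads $\frac13\sum_i\tilde E_{\min}^{(\alpha)}(\tau_i(\psi))\le\frac13\sum_i\tilde E_{\min}^{(\alpha)}(\tau_i(\phi))$, which points the wrong way, and the two inequalities cannot be concatenated. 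This is not a fixable presentation issue: taking $\psi$ to be a product state (all three tangles zero, $E_2^{(\alpha)}(\psi)=0$) and $\phi=|M\rangle\in\mathcal{C}$ satisfies the stated hypothesis but violates $E_2^{(\alpha)}(\psi)\ge E_2^{(\alpha)}(\phi)$. The $\alpha>2$ assertion holds only if the hypothesis is reversed to $\tau_i(\psi)\ge\tau_i(\phi)$ (or restricted to equal tangles), and a careful proof should have surfaced this rather than asserting symmetry with the first case. Your plan for the equality analysis (recovering $p$ and $\cos\theta$ from the tangles and absorbing $\theta\mapsto-\theta$ via Proposition~\ref{localunitary}) is the right idea and is more than the paper, which offers no detail, provides.
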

The corollary follows directly from the proposition above and from theorem~\ref{lm}.

Note that the corollary above also implies that for $0<\alpha<2$, 
$$
E_{2}^{(\alpha)}(\psi)\leq E_{2}^{(\alpha)}(|M\rangle)\;,
$$
for all $\psi\in\mathcal{H}_4$ with $\tau_{(AB)(CD)}(\psi)$, $\tau_{(AC)(BD)}(\psi)$, and $\tau_{(AD)(BC)}(\psi)$
all being no greater than $4/3$. 

We end this section by classifying all the states in $\mathcal{M}$ for which one of the tangles $\tau_{(AB)(CD)}$, $\tau_{(AC)(BD)}$, and $\tau_{(AD)(BC)}$ is equal to $3/2$. Without loss of generality we will assume that $\tau_{(AB)(CD)}=3/2$.
\begin{proposition}
Let $\psi\in\mathcal{M}$. If $\tau_{(AB)(CD)}(\psi)=3/2$ then, up to local unitaries,
\begin{equation}\label{last}
|\psi\rangle=\frac{1}{2}u_0+\frac{i}{2}u_1+\frac{e^{i\theta}}{2}u_2+\frac{ie^{i\theta}}{2}u_3\;.
\end{equation}
\end{proposition}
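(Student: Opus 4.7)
The plan is to use the expansion $\psi=\sum_{j=0}^3 z_j u_j$ to read off both the tangle $\tau_{(AB)(CD)}$ and the defining $\mathcal{M}$-condition in a single family of coefficients, then normalize by the available local unitaries. Because each $u_j=|B_j\rangle_{AB}\otimes|B_j\rangle_{CD}$ with $\{|B_j\rangle\}$ the Bell basis, this expansion is (up to absorbing phases into one side) a Schmidt decomposition across the $AB|CD$ cut, so $\rho_{AB}$ has eigenvalues $p_j=|z_j|^2$. Hence $\tau_{(AB)(CD)}=2(1-\sum_j p_j^2)$, and the hypothesis $\tau_{(AB)(CD)}=3/2$ is equivalent to $\sum_j p_j^2=1/4$. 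Together with $\sum_j p_j=1$, Cauchy--Schwarz forces $p_0=p_1=p_2=p_3=1/4$, so $|z_j|=1/2$ for every $j$.

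Next I would write $w_j=(2z_j)^2$, which are unit complex numbers whose sum vanishes by the $\mathcal{M}$-condition $\sum_j z_j^2=0$. Let $e_1,\dots,e_4$ denote the elementary symmetric polynomials in $(w_0,w_1,w_2,w_3)$. Since $|w_j|=1$, one has $\bar w_j=1/w_j$, so $\overline{e_1}=\sum_j\bar w_j=e_3/e_4$; because $|e_4|=1\neq 0$, the hypothesis $e_1=0$ immediately yields $e_3=0$. Therefore the monic polynomial with roots $w_j$ collapses to the biquadratic $z^4+e_2 z^2+e_4$, whose roots occur in opposite pairs. After using the global phase freedom to set $w_0=1$, the multiset $\{w_j\}$ must equal $\{1,-1,w,-w\}$ for some unit $w=e^{i\phi}$.

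Finally, the permutation freedom on the $u_j$ (invoked in the proof of Theorem~\ref{L}) lets me arrange $w_1=-1$ and $w_3=-w_2$, i.e.\ $e^{i\theta_1}\in\{\pm i\}$ and $e^{i\theta_3}\in\{\pm i\,e^{i\theta_2}\}$. The four remaining sign choices collapse under the even-parity sign-flip freedom on $(z_0,z_1,z_2,z_3)$ guaranteed by Proposition~\ref{localunitary1} and, in the one residual case where the two signs must be resolved independently, a swap $u_2\leftrightarrow u_3$ combined with the reparametrization $\theta\mapsto\theta-\pi/2$; this yields the claimed form $\tfrac{1}{2}[u_0+iu_1+e^{i\theta}u_2+ie^{i\theta}u_3]$. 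The main obstacle is the pairing step: the slick identity $\bar e_1=e_3/e_4$, valid precisely because every $|w_j|=1$, bypasses an otherwise awkward trigonometric case analysis over four unit vectors summing to zero, while the remaining bookkeeping with signs and permutations is routine.
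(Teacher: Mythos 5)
Your proof is correct and follows essentially the same route as the paper's: deduce $|z_j|=1/2$ for all $j$ from $\tau_{(AB)(CD)}=3/2$, impose $\sum_j z_j^2=0$ from membership in $\mathcal{M}$, and then normalize using the global phase, permutations of the $u_j$, and even-parity sign flips. The paper merely asserts that these conditions ``lead to'' the claimed form; your symmetric-polynomial observation that $e_1=0$ together with $|w_j|=1$ forces $e_3=0$, so the $w_j$ are roots of a biquadratic and hence come in antipodal pairs, is a clean and correct justification of exactly the step the paper leaves implicit.
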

\begin{remark}
The two cluster states with $\tau_{(AB)(CD)}=3/2$  have this form with $\theta=0$ or $\theta=\pi/2$.
Among all the states of this form, the choice $\theta=\pi/4$ gives the highest value for the average entropy
of entanglement, but it does not reach the average entropy of entanglement of the state $|M\rangle$ (see Fig.~2).

\begin{figure}[tp]
\includegraphics[scale=.48]{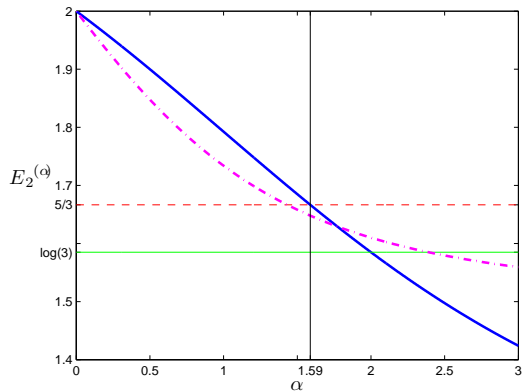}
\caption{A graph of the average Renyi $\alpha$-entropy of entanglement as a function of $\alpha$. The blue line corresponds to the state $|M\rangle$, the green line to the state $|L\rangle$, the dashed red line to the cluster states, and the dashed purple line to the state given in Eq.(\ref{last}) with $\theta=\pi/4$. For $\alpha\geq 2$ we proved that the cluster states maximize $E_{2}^{(\alpha)}$. However, we conjecture that the cluster states maximize $E_{2}^{(\alpha)}$ for $\alpha>\alpha_{0}\equiv 1.59...$ while the state $|M\rangle$ maximize it for $\alpha<\alpha_0$. Among all the states in $\mathcal{M}$ the state $|L\rangle$ and $|M\rangle$ minimize $E_{2}^{(\alpha)}$ for $0<\alpha<2$ and $\alpha>2$, respectively.}
\end{figure} 

\end{remark}
\begin{proof}
From  $\tau_{(AB)(CD)}=3/2$ we have
$$
\psi=\frac{1}{2}\sum_{j=0}^{3}e^{i\theta_j}u_j\;.
$$
Now from the condition in Eq.~(\ref{classm}), we get $\sum_{j=0}^{3}e^{i2\theta_j}=0$. This condition with the freedom
of global phase leads to the form in Eq.(\ref{last}).
\end{proof}

\section{Conclusions}\label{conclusions}
Four qubits entanglement is far more complicated to analyse than its three qubits counterpart.
This intricacy manifests itself with the uncountable number of inequivalent SLOCC classes.
Such complexity also occur in 5 and 6 qubits systems, although for these systems there exist maximally
entangled states (such as the 5-qubits code state) with the property that any bipartite cut yields 
a maximally entangled (bipartite) state. Since such states do not exists in 4-qubits nor in $n$-qubits
with $n\geq 8$, the study of 4-qubits entanglement gives an insight to the structure of $n$-qubits 
maximally entangled states with large $n$. Indeed, some of the results presented here, such as theorem~\ref{kn}, 
can be extended to $n$-qubits~\cite{WG}.

In this paper we found an operational interpretation for the 4-tangle as a kind of 4-party residual entanglement that
can not be shared between 2-qubits/2-qubits bipartite cuts. This operational interpretation enabled us to find
a family of maximally entangled states that is characterized by four real parameters. All the states in the family
maximize the average bipartite tangle, but only two states in the family (i.e. the states $|M\rangle$ and $|L\rangle$ in 
Eqs.(\ref{entropystates},\ref{statel})), 
maximize all the average Tsillas $\alpha$-entropy of entanglement. In this sense, up to local unitary transformations, 
there are only two maximally entangled four qubits states.

Both the states $|M\rangle$ and $|L\rangle$ are symmetric; that is, up to local unitaries, they 
are both invariant under permutations of the 4 qubits. The eigenvalues of their reduced density matrices that obtained
after tracing out two qubits have the form given in Theorem~\ref{lm}. Therefore, since they both maximize the average
bipartite tangle, we believe they also optimize many other averages of bipartite entanglement monotones that were 
not introduced here. Moreover, the techniques introduced here suggest that states with the properties of 
$|M\rangle$ and $|L\rangle$ may exists in higher dimensional systems~\cite{WG}.

We also found that the three cluster states in Eqs.(\ref{C1},\ref{C2},\ref{C3}) are the only (up to local unitaries) 4-qubits states 
that maximize the average tangle, and have the property that out of the three reduced density matrices, that obtained by tracing out  two qubits, two are proportional to the identity. In addition, we showed that the cluster states optimize the average 
Renyi $\alpha$-entropy
of entanglement with $\alpha\geq 2$. The reason that it is the cluster states and not $|M\rangle$ or $|L\rangle$
that optimize this average Renyi entropy, is that the Renyi entropy with $\alpha\geq 2$ is not concave and so the
Renyie entropy of entanglement is only a deterministic entanglement monotone and \emph{not} an ensemble monotone.

\emph{Acknowledgments:---}
The authors are grateful for Ben Fortescue for help with the graphs.
GG would like to thank Dominic Barry, Francesco Buscemi, Ben Fortescue, and Jeong San Kim for fruitful discussions.
GG research is supported by NSERC.

\begin{appendix}

\section{Kempf-Ness Theorem}\label{KempfNess}
The purpose of this appendix is to state the version of the Kempf-Ness theorem
that is used in this paper. Let $\mathcal{H}_{n}$ be $n$-qubit space and let
$G$ be the subgroup $SL(2,\mathbb{C})\otimes SL(2,\mathbb{C})\otimes\cdots\otimes SL(2,\mathbb{C})$
 ($n$-copies) in $GL(\mathcal{H}_{n})$. Let $K=$ $SU(2)\otimes
SU(2)\otimes\cdots\otimes SU(2)$. Let $\mathfrak{g}$ be the Lie algebra of $G$
contained in $End(\mathcal{H}_{n})$. We set $Crit(\mathcal{H}_{n})=\{\phi
\in\mathcal{H}_{n}|\left\langle \phi|X|\phi\right\rangle =0,X\in
\mathfrak{g\}}$. The Kempf-Ness theorem in this context says (the only hard
part of the theorem is the \textquotedblleft if\textquotedblright\ part of 3.
which we don't use in this paper):

\begin{theorem}
Let $\phi\in\mathcal{H}_{n}$ then

1. $\phi\in Crit(\mathcal{H}_{n}),g\in G$ then $\left\Vert g\phi\right\Vert
\geq\left\Vert \phi\right\Vert $ with equality if and only if $g\phi\in K\phi$.

2. If $\phi\in\mathcal{H}_{n}$ then $\phi\in Crit(\mathcal{H}_{n})$ if and
only if $\left\Vert g\phi\right\Vert \geq\left\Vert \phi\right\Vert $ for all
$g\in G$.

3. If $\phi\in\mathcal{H}_{n}$ then $G\phi$ is closed in $\mathcal{H}_{n}$ if
and only if $G\phi\cap Crit(\mathcal{H}_{n})\neq\emptyset$.
\end{theorem}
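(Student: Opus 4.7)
The plan is to reduce all three statements to a single convexity fact, using the Cartan decomposition $G=K\cdot\exp(i\mathfrak{k})$, where $\mathfrak{k}=\text{Lie}(K)$ and $\text{Lie}(G)=\mathfrak{k}\oplus i\mathfrak{k}$. Writing any $g\in G$ as $g=k\exp(X)$ with $k\in K$ unitary and $X\in i\mathfrak{k}$ Hermitian, unitarity of $k$ gives $\|g\phi\|=\|\exp(X)\phi\|$, so the norm along a $G$-orbit is entirely controlled by the $\exp(i\mathfrak{k})$-factor. Diagonalizing $X=\sum_j\lambda_j|v_j\rangle\langle v_j|$ in an orthonormal eigenbasis, the one-parameter function $f_{X,\phi}(t)\equiv\|\exp(tX)\phi\|^2=\sum_j e^{2t\lambda_j}|\langle v_j|\phi\rangle|^2$ is a sum of positive exponentials, hence a convex function of $t\in\mathbb{R}$ that is strictly convex unless $\phi$ is supported on a single eigenspace of $X$.

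From this single observation I would derive Parts 1 and 2 directly. For Part 1, the hypothesis $\phi\in Crit(\mathcal{H}_n)$ applied to $X\in i\mathfrak{k}\subset\text{Lie}(G)$ gives $f_{X,\phi}'(0)=2\langle\phi|X\phi\rangle=0$, and convexity on $[0,1]$ yields $f_{X,\phi}(1)\geq f_{X,\phi}(0)$; the equality case together with $f_{X,\phi}'(0)=0$ forces $f_{X,\phi}$ to be constant on $[0,1]$ and then, by analyticity of the exponential sum, on all of $\mathbb{R}$, from which $\langle v_j|\phi\rangle=0$ whenever $\lambda_j\neq 0$. Hence $X\phi=0$, $\exp(X)\phi=\phi$, and $g\phi=k\phi\in K\phi$. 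Part 2 follows quickly: $(\Rightarrow)$ is Part 1 with $k=I$; for $(\Leftarrow)$, minimality of $\|\phi\|$ in $G\phi$ forces $f_{X,\phi}'(0)=0$ for every Hermitian $X$, giving $\langle\phi|X\phi\rangle=0$ on $i\mathfrak{k}$. This extends to all of $\text{Lie}(G)=\mathfrak{k}\oplus i\mathfrak{k}$ because multiplication by $i$ swaps the two summands and $\langle\phi|Y\phi\rangle$ is already pure imaginary for anti-Hermitian $Y$, so the two halves together force $\langle\phi|X\phi\rangle=0$ on the full Lie algebra.

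Part 3 splits into two asymmetric directions. The $(\Rightarrow)$ direction, which is the only part the paper invokes in its proof of Theorem~\ref{kn}, is quick: if $G\phi$ is closed then the continuous map $\|\cdot\|^2$ restricted to the closed set $G\phi$ attains its infimum on the nonempty compact sublevel set $\{v\in G\phi:\|v\|\leq\|\phi\|\}$, producing $\phi_0\in G\phi$ with $\|g\phi_0\|\geq\|\phi_0\|$ for all $g\in G$; Part 2 then places $\phi_0\in Crit(\mathcal{H}_n)$. The main obstacle is the $(\Leftarrow)$ direction: assuming $\phi_0\in G\phi\cap Crit(\mathcal{H}_n)$, one must rule out accumulation points of $G\phi_0$ outside $G\phi_0$. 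The strategy is to take a hypothetical sequence $k_n\exp(X_n)\phi_0\to v$ with $v\notin G\phi_0$, use compactness of $K$ to assume $k_n\to k$, and reduce to showing that $\exp(X_n)\phi_0$ converges into $G\phi_0$. The bounded-$\|X_n\|$ case is immediate by extracting a subsequential limit in $i\mathfrak{k}$; the delicate case is $\|X_n\|\to\infty$, where one passes to a subsequential direction $H=\lim X_n/\|X_n\|$ of unit norm and uses the critical condition $f_{H,\phi_0}'(0)=0$ together with the convexity/strict-convexity dichotomy from the first paragraph to show that boundedness of $\|\exp(X_n)\phi_0\|$ forces $\phi_0$ to lie in the zero eigenspace of the positive spectral part of $H$, from which a dimension/orbit-closure bookkeeping argument produces $v\in G\phi_0$, a contradiction. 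Because the paper explicitly does not use this direction, one acceptable shortcut at this last step is to cite Kempf--Ness verbatim; the properness/escape-to-infinity argument just sketched is the technical heart of the theorem and where essentially all the work lives.
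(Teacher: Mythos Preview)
The paper does not prove this theorem at all: it is stated in Appendix~A as the Kempf--Ness theorem with a citation to~\cite{KN}, and the text explicitly flags that ``the only hard part of the theorem is the `if' part of~3.\ which we don't use in this paper.'' So there is no in-paper proof to compare against; your proposal supplies an argument where the paper simply quotes the result.

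That said, your treatment of Parts~1, 2, and the $(\Rightarrow)$ direction of Part~3 is correct and is the standard convexity argument via the Cartan/polar decomposition $G=K\exp(i\mathfrak{k})$. For the $(\Leftarrow)$ direction of Part~3 your sketch is incomplete as it stands: the passage from ``$\|\exp(X_n)\phi_0\|$ bounded and $X_n/\|X_n\|\to H$'' to ``$\phi_0$ lies in the zero eigenspace of the positive spectral part of $H$'' needs more than the single convexity fact (one typically combines the critical condition with a careful comparison of $\exp(X_n)$ against $\exp(\|X_n\|H)$, or invokes the Hilbert--Mumford machinery), and the final ``dimension/orbit-closure bookkeeping'' step is where the real content lives and is not spelled out. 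Your own suggestion to cite Kempf--Ness verbatim for this direction is exactly what the paper does for the entire theorem, so that shortcut is fully consistent with the paper's treatment.
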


We now assume $n=4$. Let $\mathcal{A}$ be as in section II. Then

\begin{proposition}
$Crit(\mathcal{H}_{4})=K\mathcal{A}$.
\end{proposition}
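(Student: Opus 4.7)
The plan is to prove the two inclusions $K\mathcal{A}\subset\mathrm{Crit}(\mathcal{H}_{4})$ and $\mathrm{Crit}(\mathcal{H}_{4})\subset K\mathcal{A}$ separately. The second will follow essentially for free from Theorem~\ref{kn}, while the first requires only a direct calculation with the Bell-product basis.

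For the inclusion $K\mathcal{A}\subset\mathrm{Crit}(\mathcal{H}_{4})$, I first observe that $\mathrm{Crit}(\mathcal{H}_{4})$ is $K$-invariant: because $K\subset G$, the Lie algebra $\mathfrak{g}$ is stable under $\mathrm{Ad}(K)$, so $\langle k\phi|X|k\phi\rangle=\langle\phi|k^{-1}Xk|\phi\rangle$ vanishes whenever $\langle\phi|Y|\phi\rangle$ vanishes for every $Y\in\mathfrak{g}$. It therefore suffices to show $\mathcal{A}\subset\mathrm{Crit}(\mathcal{H}_{4})$. As noted in the proof of Theorem~\ref{kn}, any $X\in\mathfrak{g}$ decomposes as $X=\sum_{i=1}^{4}X_{i}$ with $X_{i}\in\mathfrak{sl}(2,\mathbb{C})$ acting only on the $i$-th qubit, so $\langle\phi|X|\phi\rangle=\sum_{i}\mathrm{Tr}(X_{i}\rho_{i}(\phi))$, where $\rho_{i}(\phi)$ is the reduced density matrix on qubit $i$. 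The task thus reduces to showing that $\rho_{i}(\phi)=I/2$ for every $\phi=\sum_{j}z_{j}u_{j}\in\mathcal{A}$ and every $i\in\{1,2,3,4\}$, after which tracelessness of $X_{i}$ closes the argument.

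For the reverse inclusion, take $\psi\in\mathrm{Crit}(\mathcal{H}_{4})$. Specializing the defining condition to $X=X_{i}\otimes I^{\otimes 3}$ (and the three analogues for the other tensor factors) with $X_{i}$ running over all traceless complex $2\times 2$ matrices yields $\mathrm{Tr}(X_{i}\rho_{i}(\psi))=0$ for every such $X_{i}$. Together with $\mathrm{Tr}\rho_{i}(\psi)=1$ this forces $\rho_{i}(\psi)=I/2$ for each $i$, so $\tau_{1}(\psi)=1$. Theorem~\ref{kn} then delivers $\psi\in K\mathcal{A}$ immediately.

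The only real obstacle is the computation inside $\mathcal{A}$. The clean way to carry it out is to note that every bit-string with non-zero amplitude in any $u_{j}$ satisfies the parity constraint $b_{1}\oplus b_{2}=b_{3}\oplus b_{4}$; consequently no two strings in the support of a generic $\phi\in\mathcal{A}$ differ in exactly one coordinate, which kills every off-diagonal entry of $\rho_{i}(\phi)$. The diagonal entries then each equal $\tfrac{1}{2}$ by the identity $\sum_{j}|z_{j}|^{2}=1$ and the fact that the Bell-square structure of the $u_{j}$ splits their support evenly across $b_{i}=0$ and $b_{i}=1$ for every $i$. Once this is in place, the proposition is essentially a restatement of Theorem~\ref{kn} in the language of the Kempf--Ness moment map.
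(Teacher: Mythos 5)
Your forward inclusion $K\mathcal{A}\subset \mathrm{Crit}(\mathcal{H}_{4})$ is correct and complete in outline: the reduction to $\rho_{i}(\phi)=I/2$ for $\phi\in\mathcal{A}$ via $\langle\phi|X|\phi\rangle=\sum_{i}\mathrm{Tr}(X_{i}\rho_{i}(\phi))$ is right, and the parity argument for the vanishing of the off-diagonal entries works (for the diagonal entries the cleanest route is to note that $\mathrm{Tr}_{\neq i}|u_{j}\rangle\langle u_{k}|=0$ for $j\neq k$ because distinct Bell states are orthogonal, so $\rho_{i}(\phi)=\sum_{j}|z_{j}|^{2}\rho_{i}(u_{j})=I/2$; your ``splits evenly'' phrasing glosses over the cross terms between $u_{0}$ and $u_{1}$, which share support, but the conclusion is correct). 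Likewise, your identification of $\mathrm{Crit}(\mathcal{H}_{4})$ with the set of states whose four single-qubit marginals all equal $I/2$ (equivalently $\tau_{1}=1$) is correct.

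The reverse inclusion, however, is circular as written. You dispose of $\mathrm{Crit}(\mathcal{H}_{4})\subset K\mathcal{A}$ by deducing $\tau_{1}(\psi)=1$ and then invoking Theorem~\ref{kn}. But inside this paper the hard direction of Theorem~\ref{kn} --- that $\tau_{1}(\psi)=1$ forces $\psi\in K\mathcal{A}$ --- is proved precisely by showing $\tau_{1}(\psi)=1\Rightarrow\psi\in\mathrm{Crit}(\mathcal{H}_{4})$ and then applying the present proposition. So you have reduced the statement to itself; the genuinely nontrivial content (that every four-qubit state with all single-qubit marginals maximally mixed is a local-unitary image of a Bell-diagonal state in $\mathcal{A}$) is never established. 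To close the gap you must argue directly from the Kempf--Ness theorem of Appendix~\ref{KempfNess}: for $\psi\in\mathrm{Crit}(\mathcal{H}_{4})$, part~3 gives that $G\psi$ is closed, so $\psi$ is stable and hence $\psi=g\phi$ for some $\phi\in\mathcal{A}$ (using that $G\mathcal{A}$ is the set of stable vectors); since $\phi\in\mathrm{Crit}(\mathcal{H}_{4})$ by your first inclusion, part~2 applied to both $\psi$ and $\phi$ yields $\|g\phi\|=\|\phi\|$, and part~1 then places $\psi=g\phi$ in $K\phi\subset K\mathcal{A}$. Alternatively you may import Theorem~\ref{kn} wholesale from the external reference where it is proved independently, but then you should say so explicitly, since the paper's own proof of that theorem presupposes this proposition.
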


\section{Proof of theorem~\ref{lm}}\label{aaa123}

We first prove the theorem for the von-Neumann entropy (i.e. the case $\alpha=1$) and then
we will consider the case $\alpha\neq 1$ separately.
We want to optimize the function
$
f(\lambda_0,\lambda_1,\lambda_2,\lambda_3)=-\sum_{k=0}^{3}\lambda_k\log\lambda_k
$
under the constraints $\sum_{k=0}^{3}\lambda_k=1$ and $\sum_{k=0}^{3}\lambda_{k}^{2}=1-\tau/2$, while
$0\leq \lambda_k\leq 1$. 
The Lagrangian is therefore given by
$$
\mathcal{L}=-\sum_{k=0}^{3}\lambda_k\log\lambda_k+\mu\left(\sum_{k=0}^{3}\lambda_k-1\right)+
\nu\left(\sum_{k=0}^{3}\lambda_{k}^{2}+\frac{\tau}{2}-1\right)\;,
$$
where $\mu$ and $\nu$ are the Lagrange multipliers. Therefore, the critical points in the interior of the domain 
(i.e. $0<\lambda_k<1$) must satisfies the equation:
\begin{equation}\label{a1a}
\frac{\partial\mathcal{L}}{\partial\lambda_k}=-\log\lambda_k-\log e+\mu+2\nu\lambda_k=0
\end{equation}
Now, we first show that if all for $\lambda_k$ satisfy the equation above, then the set $\{\lambda_k\}$ contains
at most two distinct numbers. To see that, suppose that there are three distinct numbers. Then, without loss of generality,
lets assume that $\lambda_0>\lambda_1>\lambda_2>0$. Thus, from the 3 equations above (for $k=0,1,2$) it follows that
$$
(\lambda_0-\lambda_1)\log\lambda_2+(\lambda_1-\lambda_2)\log\lambda_0=(\lambda_0-\lambda_2)\log\lambda_1
$$
Denote by $a\equiv (\lambda_1-\lambda_2)/(\lambda_0-\lambda_1)$. Hence, $a>0$ and
$$
\log\lambda_2+a\log\lambda_0=(1+a)\log\lambda_1\;,
$$
which is equivalent to
$$
\lambda_2\lambda_{0}^{a}=\lambda_{1}^{1+a}.
$$
Denote by $x\equiv\lambda_2/\lambda_1$ and $y\equiv \lambda_0/\lambda_1$. Therefore, $x<1$, $y>1$,  
$a=(1-x)/(y-1)$, and $$xy^a=1.$$ From the last equation and the generalized arithmetic-geometric mean inequality
we get
$$
1=(xy^a)^{1/(1+a)}\leq\frac{1}{1+a}\left(x+ay\right)=1\;,
$$
where the last equality is obtained by substituting $a=(1-x)/(y-1)$.
Not that the geometric-arithmetic mean inequality is saturated if and only if $x=y$ and therefore
we get a contradiction to the assumption that there are 3 distinct numbers in the set $\{\lambda_i\}$.
Therefore, for the interior points we have 3 options: \textbf{(a)} $\lambda_0\geq\lambda_1=\lambda_2=\lambda_3$,
\textbf{(b)} $\lambda_0=\lambda_1\geq\lambda_2=\lambda_3$ and 
\textbf{(c)} $\lambda_0=\lambda_1=\lambda_2\geq\lambda_3$. Option $\textbf{(b)}$ is the only one that does
not appear in theorem~\ref{lm}. In this case $\tau$ must be greater than 1, and 
$\lambda_0=\lambda_1=\frac{1+\sqrt{3}x(\tau)}{4}$ and $\lambda_2=\lambda_3=\frac{1-\sqrt{3}x(\tau)}{4}$. 
It is a straightforward calculation to show that the von-Neumann entropy of this distribution never equals the von-Neumann entropy of the distributions that appear in the theorem. (Note that this is all we have to show since it has 
already been proved in~\cite{Ber03} that the distributions in the theorem are the optimal ones).

 As for the critical points on the boundary, set $\lambda_3=0$ and then the same argument as above implies that
 the set $\{\lambda_0,\lambda_1,\lambda_2\}$ contains at most two distinct numbers. We therefore have two options:
 \textbf{(a)} $\lambda_0=\lambda_1\geq\lambda_2$ and
 \textbf{(b)} $\lambda_0\geq\lambda_1=\lambda_2$. Again, the distribution (b) does not appear in the theorem,
 but it is straightforward to show that its von-Neumann entropy never equals to the von-Neumann entropies of the
 distributions in the theorem. The last point on the bounday that we need to check is when $\lambda_3=\lambda_2=0$,
 but this distribution appears in the theorem. This completes the proof for the case $\alpha=1$.
 
 We now prove the theorem for the case $\alpha\neq 1$ (as well as $\alpha\neq 2$). 
 In this case, we optimize the function
$
f(\lambda_0,\lambda_1,\lambda_2,\lambda_3)=\sum_{k=0}^{3}\lambda_{k}^{\alpha}
$
under the same constraints above; that is, $\sum_{k=0}^{3}\lambda_k=1$ and $\sum_{k=0}^{3}\lambda_{k}^{2}=1-\tau/2$. 
The Lagrangian in this case is given by
$$
\mathcal{L}=\sum_{k=0}^{3}\lambda_{k}^{\alpha}+\mu\left(\sum_{k=0}^{3}\lambda_k-1\right)+
\nu\left(\sum_{k=0}^{3}\lambda_{k}^{2}+\frac{\tau}{2}-1\right)\;,
$$
where $\mu$ and $\nu$ are the Lagrange multipliers. The critical points in the interior of the domain 
must satisfies the equation:
\begin{equation}\label{a2a}
\frac{\partial\mathcal{L}}{\partial\lambda_k}=\alpha\lambda_{k}^{\alpha-1}-\log e+\mu+2\nu\lambda_k=0
\end{equation} 
Similarly to the argument above, we first show that the set $\{\lambda_k\}$ contains
at most two distinct numbers. To see that, suppose that there are three distinct numbers
$\lambda_0>\lambda_1>\lambda_2>0$. Thus, from the 3 equations above (for $k=0,1,2$) it follows that
$$
\frac{\lambda_{0}^{\alpha-1}-\lambda_{1}^{\alpha-1}}{\lambda_{0}^{\alpha-1}-\lambda_{2}^{\alpha-1}}=
\frac{\lambda_0-\lambda_1}{\lambda_0-\lambda_2}\;.
$$
Now, denote by $x\equiv\lambda_1/\lambda_0$ and $y\equiv\lambda_2/\lambda_0$. From our assumptions
$0<y<x<1$. With these notations the equation above can be written as:
$$
\frac{1-x^{\alpha-1}}{1-x}=\frac{1-y^{\alpha-1}}{1-y}\;.
$$
However, for non-negative $\alpha\neq 2$ the function $f(x)=(1-x^{\alpha-1})/(1-x)$ is one-to-one and therefore
we get a contradiction. This complete the proof that the set $\{\lambda_k\}$ contains at most two distinct numbers.
The rest of the proof follows the same lines as the proof for the case $\alpha=1$.
 
 \section{Global max of $f_{\max}^{(\alpha)}$ and global min of $f_{\min}^{(\alpha)}$}\label{optimization}
 
\subsection{Calculation of $U^{(\alpha)}$}
In this section we prove that for $\alpha>2$, $U^{(\alpha)}=f_{\max}^{(\alpha)}(4/3,4/3,4/3)$, and $(4/3,4/3,4/3)$ is the only point of global maximum.
Since the function $f_{\max}^{(\alpha)}(t_1,t_2,t_3)$ is invariant under permutations of $t_1$, $t_2$, and $t_3$, 
we can assume without loss of generality that the global maximum
of $f_{\max}^{(\alpha)}$ is obtained at a point with $t_1\geq t_2\geq t_3$.
We will therefore look for the maximum of $f_{\max}^{(\alpha)}(t_1,t_2,4-t_1-t_2)$ in the domain
$$
D=\left\{(t_1,t_2)\Big|\frac{4}{3}\leq t_1\leq\frac{3}{2}\;,\;2-\frac{1}{2}t_1\leq t_2\leq t_1\right\}.  
$$
In this domain, $t_2$ can be either bigger or smaller than $4/3$ and therefore the derivative
of $f_{\max}^{(\alpha)}$ with respect to $t_2$ is not continuous at points with $t_2=4/3$. 
We therefore split the domain
$D$ into two regions 
\begin{align}
& D_1=\left\{(t_1,t_2)\Big|\frac{4}{3}\leq t_1\leq\frac{3}{2}\;,\;\frac{4}{3}\leq t_2\leq t_1\right\}\nonumber\\ 
& D_2=\left\{(t_1,t_2)\Big|\frac{4}{3}\leq t_1\leq\frac{3}{2}\;,\;2-\frac{1}{2}t_1\leq t_2\leq \frac{4}{3}\right\}\nonumber  
\end{align}
so that on $D_1$ (or $D_2$) all the derivatives of $f_{\max}^{(\alpha)}$ are continuous.
We start by maximizing $f_{\max}^{(\alpha)}$ on the domain $D_1$. 

\subsubsection{Maximizing $f_{\max}^{(\alpha)}$ on the domain $D_1$}

Denote by $x_i=\sqrt{1-\frac{2}{3}t_i}$ for $i=1,2$ and $y=\sqrt{1-\frac{3}{4}t_3}$.
In these variables, the function $g_{\max}^{(\alpha)}(x_1,x_2)\equiv f_{\max}^{(\alpha)}(t_1,t_2,4-t_1-t_2)$
is given by
\begin{align}
g_{\max}^{(\alpha)}(x_1,x_2) =\frac{1}{\alpha-1} & \Big[1-\sum_{i=1,2}\left(\frac{(1+x_i)^\alpha}{4^\alpha}+\frac{1}{3}
\frac{\left(1-3x_i\right)^\alpha}{4^\alpha}\right)\nonumber\\
& -\frac{2}{3^{\alpha+1}}(1+y)^\alpha-\frac{1}{3^{\alpha+1}}(1-2y)^\alpha
\Big],\nonumber
\end{align}
where in term of the variables $x_1$ and $x_2$, $y=\frac{1}{2}\sqrt{1-\frac{9}{2}(x_1^2+x_2^2)}$.
In terms of these new variables, the domain $D_1$ is given by $0\leq x_1\leq x_2\leq 1/3$. We start by looking at the critical points in the interior of $D_1$. 

The critical points of $g_{\max}^{(\alpha)}(x_1,x_2)$ satisfies the conditions
\begin{align}
& \frac{\partial g_{\max}^{(\alpha)}}{\partial x_1}(x_1,x_2)=\frac{x_1}{4}\left(u_\alpha(y)-v_\alpha(x_1)\right)=0\nonumber\\
& \frac{\partial g_{\max}^{(\alpha)}}{\partial x_2}(x_1,x_2)=\frac{x_2}{4}\left(u_\alpha(y)-v_\alpha(x_2)\right)=0\;,\nonumber
\end{align}
where
\begin{align}
& v_\alpha (x)\equiv\frac{\alpha}{(\alpha-1)4^{\alpha-1}}\frac{1}{x}\left[(1+x)^{\alpha-1}-(1-3x)^{\alpha-1}\right]\nonumber\\
& u_\alpha (y)\equiv\frac{\alpha}{(\alpha-1)3^{\alpha-1}}\frac{1}{y}\left[(1+y)^{\alpha-1}-(1-2y)^{\alpha-1}\right]\;.\nonumber
\end{align}
Hence, the point $(x_1,x_2)$ is critical if and only if $v_\alpha(x_1)=v_\alpha(x_2)=u_\alpha(y)$, where 
$y=\frac{1}{2}\sqrt{1-\frac{9}{2}(x_1^2+x_2^2)}$ (note that $0\leq y\leq 1/2$). In the next two lemmas we prove two useful properties of the 
functions $v_\alpha(x)$ and $u_\alpha(y)$.
\begin{lemma}\label{ab}
If $2<\alpha< 4$, and $0\leq y\leq 1/2$, then $u_{\alpha}'(y)<0$. If $2<\alpha< 5$, and $0\leq x\leq 1/3$, then 
$v_{\alpha}'(x)<0$.
\end{lemma}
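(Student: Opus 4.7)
The plan is to treat both monotonicity claims uniformly, since $u_\alpha$ and $v_\alpha$ are both positive multiples of a function of the form
$$\phi_b(x) = \frac{(1+x)^{\beta} - (1-bx)^{\beta}}{x},$$
with $\beta := \alpha - 1 > 1$ and $b=2$ for $u_\alpha$ (on $(0, 1/2]$) or $b=3$ for $v_\alpha$ (on $(0, 1/3]$). The prefactors $\alpha/[(\alpha-1)3^{\alpha-1}]$ and $\alpha/[(\alpha-1)4^{\alpha-1}]$ are positive for $\alpha>2$, so they do not affect the sign of the derivative; it suffices to show $\phi_b'(x) < 0$ on the respective intervals.

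First I would convert $\phi_b$ to an integral via the fundamental theorem of calculus,
$$\phi_b(x) = \beta \int_{-b}^{1} (1+sx)^{\beta-1}\, ds,$$
then differentiate under the integral and substitute $u = sx$ to obtain
$$\phi_b'(x) = \frac{\beta(\beta-1)}{x^2}\, I_b(x), \qquad I_b(x) := \int_{-bx}^{x} u(1+u)^{\beta-2}\, du.$$
Since $\beta(\beta-1) > 0$, the problem reduces to proving $I_b(x) < 0$ for $x \in (0, 1/b]$.

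Two facts about $I_b$ drive the analysis. First, $I_b(0) = 0$ and by Leibniz
$$I_b'(x) = x\bigl[(1+x)^{\beta-2} - b^2(1-bx)^{\beta-2}\bigr].$$
Second, the substitution $v = 1+u$ at the right endpoint yields the clean evaluation
$$I_b(1/b) = \left(\frac{b+1}{b}\right)^{\beta-1} \frac{\beta - (b+1)}{b\beta(\beta-1)},$$
which is negative precisely when $\beta < b+1$. For $b=2$ this reads $\alpha < 4$, and for $b=3$ it reads $\alpha < 5$, matching exactly the thresholds asserted in the lemma.

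To conclude, I would split into two cases. If $1 < \beta \leq 2$, then $(1+x)^{\beta-2} \leq 1 \leq (1-bx)^{\beta-2}$, so the bracket in $I_b'(x)$ is bounded above by $1-b^2 < 0$; hence $I_b$ is strictly decreasing from $0$ and $I_b(x) < 0$. If $\beta > 2$, the bracket equals $g(x) := (1+x)^{\beta-2} - b^2(1-bx)^{\beta-2}$ with
$$g'(x) = (\beta-2)\bigl[(1+x)^{\beta-3} + b^3(1-bx)^{\beta-3}\bigr] > 0,$$
so $g$ strictly increases from $g(0) = 1-b^2 < 0$ to a positive value and has a unique zero $x^{\ast} \in (0, 1/b)$; consequently $I_b$ decreases from $0$ on $(0, x^{\ast}]$ and then increases on $[x^{\ast}, 1/b]$, ending at $I_b(1/b) < 0$, so monotonicity on the second subinterval gives $I_b(x) \leq I_b(1/b) < 0$ throughout. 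The main subtlety is this second case, where $I_b'$ genuinely changes sign inside the interval; one really needs the explicit endpoint bound, and that is precisely what enforces the upper cutoffs $\alpha < 4$ for $u_\alpha$ and $\alpha < 5$ for $v_\alpha$.
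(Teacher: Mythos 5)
Your proof is correct, and it takes a genuinely different route from the paper's. The paper differentiates $u_\alpha$ directly, writes $u_\alpha'(y)$ as a positive constant times $y^{-2}\,w(y)$ with $w(y)=(1+y)^{\beta}(\beta y-1)+(1-2y)^{\beta}(1+2\beta y)$ and $\beta=\alpha-2$, checks the endpoints, and then shows $w<0$ on $(0,1/2)$ by locating the critical points of $w$ (via $(1+y)^{\beta-1}=4(1-2y)^{\beta-1}$) and evaluating $w$ there with that relation substituted in; the claim for $v_\alpha$ is then dispatched with ``the same arguments.'' You instead pass to the integral representation $\phi_b(x)=\beta\int_{-b}^{1}(1+sx)^{\beta-1}\,ds$ with $\beta=\alpha-1$, differentiate under the integral, and reduce everything to the sign of $I_b(x)=\int_{-bx}^{x}u(1+u)^{\beta-2}\,du$, which you control by its derivative $x\bigl[(1+x)^{\beta-2}-b^2(1-bx)^{\beta-2}\bigr]$ together with the closed-form endpoint value $I_b(1/b)$. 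I checked the computations (the integral representation, the Leibniz derivative, the value of $I_b(1/b)$, and both cases $1<\beta\le 2$ and $\beta>2$) and they are all correct; the only cosmetic gap is the point $x=0$, where your formula $\phi_b'(x)=\beta(\beta-1)x^{-2}I_b(x)$ degenerates, but the integral representation gives $\phi_b'(0)=\beta(\beta-1)(1-b^2)/2<0$ directly, so nothing is lost. What your approach buys is twofold: it treats $u_\alpha$ and $v_\alpha$ uniformly through the single parameter $b$, so the second half of the lemma is not left to an analogy, and it makes the otherwise mysterious cutoffs $\alpha<4$ and $\alpha<5$ completely transparent --- they are exactly the condition $\beta<b+1$ under which $I_b(1/b)<0$. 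The paper's argument is a more routine critical-point computation, but it is shorter on the page and avoids justifying differentiation under the integral sign near the endpoint where the integrand is singular (a minor technicality in your version, handled by $\beta>1$).
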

\begin{proof}
A simple calculation gives
\begin{align}
& u_{\alpha}'(y) =\nonumber\\
& \frac{\alpha}{(\alpha-1)3^{\alpha-1}}\frac{1}{y^2}
\left[(1+y)^\beta(\beta y -1)+(1-2y)^\beta (1+2\beta y)\right],\nonumber
\end{align}
where $\beta\equiv\alpha-2$. From the assumption of the lemma $0<\beta<2$. Therefore, one can easily check that
$u_\alpha '(0)<0$ and $u_\alpha '(1/2)<0$. All that is left to show is that in the domain $0<y<1/2$ the function
$$w(y)\equiv(1+y)^\beta(\beta y -1)+(1-2y)^\beta (1+2\beta y)$$ is negative.  To find its maximum value, we calculate
its critical points. The requirement $w'(y)=0$ gives $(1+y)^{\beta-1}=4(1-2y)^{\beta-1}$. Clearly, there are no critical points for $\beta\leq 1$ in the domain $(0,1/2)$. For $\beta>1$ we express the value of $w(y_c)$ at the critical point 
by substituting for $(1+y_c)^{\beta-1}$ the value $4(1-2y_c)^{\beta-1}$. This gives,
$$
w(y_c)=-3(1-2y_c)^{\beta-1}\left[1-2(\beta-1)y_c\right]<0\;,
$$
for $y_c<1/2$ and $\beta<2$.
Hence, $u_{\alpha}'(y)<0$ for $2<\alpha<4$. Following the same arguments, one can show that $v_{\alpha}'(x)<0$
for $2<\alpha<5$.
\end{proof}
\begin{lemma}\label{bb}
If $\alpha\geq 4$ then the global maximum of  $v_\alpha(x)$ (in the domain $0\leq x\leq 1/3$) is strictly smaller than
the global minimum of the function $u_\alpha(y)$ (in the domain $0\leq y\leq 1/2$).
\end{lemma}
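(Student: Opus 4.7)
The plan is to locate the extrema of $v_\alpha$ and $u_\alpha$ explicitly enough to carry out a direct comparison. A Taylor expansion at the origin gives $v_\alpha(x)=v_\alpha(0)\bigl[1-(\alpha-2)x+O(x^2)\bigr]$ (and similarly for $u_\alpha$), so both functions are strictly decreasing at the left endpoint of their respective domains whenever $\alpha>2$. Setting $v_\alpha'(x)=0$ and rearranging (in the style of the computation leading to Lemma~\ref{ab}) reduces to
\begin{equation*}
(1-3x)^{\alpha-2}\bigl[\,1+3(\alpha-2)x\,\bigr]=(1+x)^{\alpha-2}\bigl[\,1-(\alpha-2)x\,\bigr],
\end{equation*}
and the analogous equation (with $3$ replaced by $2$) characterises the critical points of $u_\alpha$. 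Outside $[0,1/(\alpha-2)]$ the right-hand side is non-positive while the left-hand side is strictly positive, so every interior critical point necessarily lies in $\bigl(0,1/(\alpha-2)\bigr)$.

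Next I argue that for $\alpha\geq 4$ each critical equation has at most one interior solution, which must be a local (hence global) \emph{minimum} of the corresponding function, since the function is decreasing at the left endpoint. Existence (when applicable) follows from an intermediate-value argument; uniqueness from the monotonicity of the ratio of the two sides on $\bigl(0,1/(\alpha-2)\bigr)$. It then follows that
\begin{equation*}
\max_{[0,1/3]} v_\alpha=\max\!\Bigl\{\tfrac{\alpha}{4^{\alpha-2}},\ \tfrac{\alpha}{(\alpha-1)\,3^{\alpha-2}}\Bigr\},
\qquad
\min_{[0,1/2]} u_\alpha=u_\alpha(y_c),
\end{equation*}
where $y_c$ is the unique interior critical point of $u_\alpha$ (with $y_c=1/2$ on the boundary at $\alpha=4$). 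Substituting the critical equation back into the defining expression for $u_\alpha(y_c)$ eliminates $(1-2y_c)^{\alpha-2}$ and yields the closed form
\begin{equation*}
u_\alpha(y_c)=\frac{\alpha\,(1+y_c)^{\alpha-2}}{3^{\alpha-2}\bigl[\,2(\alpha-2)y_c+1\,\bigr]}.
\end{equation*}

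The two required inequalities $v_\alpha(0)<u_\alpha(y_c)$ and $v_\alpha(1/3)<u_\alpha(y_c)$ then reduce, after clearing denominators, to
\begin{equation*}
2(\alpha-2)y_c+1<\bigl[\tfrac{4}{3}(1+y_c)\bigr]^{\alpha-2}
\quad\text{and}\quad
2(\alpha-2)y_c+1<(\alpha-1)(1+y_c)^{\alpha-2},
\end{equation*}
each involving only the single unknown $y_c$. The second follows immediately from Bernoulli's inequality $(1+y_c)^{\alpha-2}\geq 1+(\alpha-2)y_c$ combined with the elementary rearrangement $(\alpha-2)\bigl[1+(\alpha-3)y_c\bigr]\geq 0$. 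The first follows by showing that $h(t)=\bigl[\tfrac{4}{3}(1+t)\bigr]^{\alpha-2}-2(\alpha-2)t-1$ is strictly positive on $[0,1/(\alpha-2)]$: $h(0)=(4/3)^{\alpha-2}-1>0$, and a one-line calculus check confirms that any interior minimum of $h$ (which occurs only at $\alpha=4$) is positive. The edge case $\alpha=4$, where $y_c=1/2$ sits on the boundary, may alternatively be verified directly from the explicit formula $u_4(y)=\tfrac{4}{9}(1-y+y^2)$, giving $\min u_4=1/3>1/4=\max v_4$.

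The main obstacle is the structural step: showing that each critical equation has exactly one interior root and that it is a global minimum, without any closed-form expression for $y_c$. The sign analysis used in Lemma~\ref{ab} to control the derivative breaks down precisely at $\alpha=4$, so the monotonicity argument for the ratio of the two sides has to be formulated with some care to remain valid uniformly for all $\alpha\geq 4$. Once this structural information is in hand, the final comparison is reduced to an elementary application of Bernoulli's inequality.
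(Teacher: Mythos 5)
Your endgame (the closed form $u_\alpha(y_c)=\alpha(1+y_c)^{\alpha-2}/\bigl(3^{\alpha-2}[1+2(\alpha-2)y_c]\bigr)$ obtained by eliminating $(1-2y_c)^{\alpha-2}$ via the critical equation, followed by Bernoulli and the positivity of $h$) is correct and is a genuinely different, and arguably cleaner, finish than the paper's. But the structural step you flag as the ``main obstacle'' is a real gap, not just a presentational one. Your uniqueness claim rests on monotonicity of the ratio of the two sides of the critical equation on $\bigl(0,1/(\alpha-2)\bigr)$; if you differentiate the logarithm of that ratio you get a term proportional to $\frac{1}{(1-(\alpha-2)y)(1+y)}-\frac{4}{(1+2(\alpha-2)y)(1-2y)}$, which is negative at $y=0$ and blows up to $+\infty$ as $y\to 1/(\alpha-2)^-$, so the ratio is \emph{not} monotone there and the uniqueness argument as stated does not go through. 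This matters because your identification $\max_{[0,1/3]}v_\alpha=\max\{v_\alpha(0),v_\alpha(1/3)\}$ depends on knowing that no interior critical point of $v_\alpha$ is a local maximum exceeding the endpoint values; without uniqueness (or some other control) that is unjustified.

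The fix is available with your own tools, and it is essentially what the paper does: never count or classify critical points, only bound the value of the function at an \emph{arbitrary} critical point using the critical-point identity. For $v_\alpha$ your identity gives $v_\alpha(x_c)=\alpha(1+x_c)^{\alpha-2}/\bigl(4^{\alpha-2}[1+3(\alpha-2)x_c]\bigr)$ at any interior critical point, and since $x_c<1/(\alpha-2)$ forces $(1+x_c)^{\alpha-2}\le e^{(\alpha-2)x_c}\le 1+(e-1)(\alpha-2)x_c<1+3(\alpha-2)x_c$, every interior critical value of $v_\alpha$ lies below $v_\alpha(0)$, so the maximum really is at an endpoint. Likewise your lower bound on $u_\alpha(y_c)$ holds at \emph{every} critical point because $h(t)>0$ and the Bernoulli estimate are proved on all of $[0,1/(\alpha-2)]$, not just at one root; adding the trivial endpoint checks $u_\alpha(0),u_\alpha(1/2)>\max\{v_\alpha(0),v_\alpha(1/3)\}$ closes the argument. (The paper reaches the same two bounds by keeping both terms $(1+x_c)^{\alpha-2}+3(1-3x_c)^{\alpha-2}$, resp.\ $(1+y_c)^{\alpha-2}+2(1-2y_c)^{\alpha-2}$, and using convexity to maximize the first over endpoints and an explicit minimizer for the second; your Bernoulli route replaces that convexity/minimization step.) Two small additional slips: your interior minimum of $h$ occurs for all $4\le\alpha\lesssim 4.41$, not ``only at $\alpha=4$'' (it is still positive there, e.g.\ $h(t_*)=1-2(\alpha-3)t_*$ with $t_*=\tfrac34(3/2)^{1/(\alpha-3)}-1$), and the rearrangement after Bernoulli should read $(\alpha-2)[1+(\alpha-3)y_c]>0$ strictly, which is what gives the strict inequality the lemma asserts.
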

\begin{proof}
The global extremum points of $u_\alpha$ and $v_\alpha$ are obtained on the boundary or on critical points.
Therefore, we first check the bounday (that is, end points). We have 
\begin{align}
& v_{\alpha}(0)=\frac{\alpha}{4^{\alpha-2}}\;\;,\;\;v_{\alpha}\left(\frac{1}{3}\right)=\frac{\alpha}{\alpha-1}\frac{1}{3^{\alpha-2}}\nonumber\\
& u_{\alpha}(0)=\frac{\alpha}{3^{\alpha-2}}\;\;,\;\;u_{\alpha}\left(\frac{1}{2}\right)=\frac{\alpha}{\alpha-1}\frac{1}{2^{\alpha-2}}\nonumber
\end{align}
Clearly, for $\alpha\geq 4$, we have $\max\{v_{\alpha}(0),v_{\alpha}(1/3)\}<\min\{u_{\alpha}(0),u_{\alpha}(1/2)\}$.
We now estimate the values of $u_\alpha$ and $v_\alpha$ at their critical points $x_c$ and $y_c$. 
From $u_{\alpha}'(y_c)=0$ and $v_{\alpha}'(x_c)=0$ we have
\begin{align}
& u_{\alpha}(y_c)=\frac{\alpha}{3^{\alpha-1}}\left[(1+y_c)^{\alpha-2}+2(1-2y_c)^{\alpha-2}\right]\nonumber\\
& v_{\alpha}(x_c)=\frac{\alpha}{4^{\alpha-1}}\left[(1+x_c)^{\alpha-2}+3(1-3x_c)^{\alpha-2}\right]\;.
\label{yc}
\end{align}
Since we do not have explicit expressions for $x_c$ and $y_c$, we find an upper bound for
$v_{\alpha}(x_c)$ and a lower bound for $u_{\alpha}(y_c)$. 
Since the functions in Eq.~(\ref{yc}) are convex for $\alpha\geq 4$, we get
\begin{align}
v_{\alpha}(x_c) & \leq\max\{v_{\alpha}(x_c=0),\;v_{\alpha}(x_c=1/3)\}\nonumber\\
& =\max\Big\{\frac{1}{4}\frac{\alpha}{3^{\alpha-2}}\;,\;\frac{\alpha}{4^{\alpha-2}}\Big\}\nonumber
\end{align}
The minimum value of the function $u_{\alpha}(y_c)$ given in Eq.~(\ref{yc}) is obtained at the point
$$
y_c=\frac{4^{1/(\alpha-3)}-1}{1+2\cdot 4^{1/(\alpha-3)}}\;.
$$
Note that this is not necessarily the true value of $y_c$, but rather the value at which the function in Eq.(\ref{yc})
is minimized. It is a straightforward calculation to show that at this value of $y_c$
$$
u_{\alpha}(y_c)>\max\Big\{\frac{1}{4}\frac{\alpha}{3^{\alpha-2}}\;,\;v_\alpha(0)\;,\;v_\alpha(1/3)\Big\}\;.
$$
This completes the proof that $u_\alpha(y)>v_\alpha(x)$ for $\alpha\geq 4$ and for all $x\in[0,1/3]$ and $y\in[0,1/2]$.
\end{proof}

From lemma~\ref{bb} it follows that for $\alpha\geq 4$ the function $g_{\max}^{(\alpha)}(x_1,x_2)$ does not have critical points in the interior of $D_1$. For $2<\alpha<4$, $g_{\max}^{(\alpha)}(x_1,x_2)$ can have critical points. However,
from the second derivatives test and from lemma~\ref{ab}, it follows that the Hessian is positive definite. Therefore,
these critical points are local min and can not be a global max. The global maximum of $g_{\max}^{(\alpha)}(x_1,x_2)$
is therefore obtained at the boundary of $D_1$.

The boundary of $D_1$ is a triangle with 3 sides given by $x_1=0$, $x_1=x_2$, and $x_2=1/3$.
If $x_1=0$ then
$$
\frac{d g_{\max}^{(\alpha)}(0,x_2)}{d x_2}=\frac{x_2}{4}\left(u_\alpha(y)-v_\alpha(x_2)\right) \;.
$$
Therefore, $g_{\max}^{(\alpha)}(0,x_2)$ is convex for $2<\alpha<4$ (see lemma~\ref{ab}), 
and has no critical points for $\alpha\geq 4$ (see lemma~\ref{bb}). Therefore, its global max is obtained
at one of the end points $(0,0)$ or $(0,1/3)$. 

On the side $x_1=x_2\equiv x$ we have
$$
\frac{d}{d x}g_{\max}^{(\alpha)}(x,x)=\frac{x}{2}\left(u_\alpha(y)-v_\alpha(x)\right) \;.
$$
Hence, the same arguments implies that the global max of $g_{\max}^{(\alpha)}(x,x)$ is obtained
at one of the end points $x=0$ or $x=1/3$. Similarly, on the side $x_2=1/3$, the function 
$g_{\max}^{(\alpha)}(x_1,1/3)$ obtains its global maximum at one of the end points $x_1=0$ or $x_1=1/3$.  
Among the three vertices $(0,0),\;(0,1/3),\;(1/3,1,3)$,  we have
$$
g_{\max}^{(\alpha)}(1/3,1/3)> \max\big\{g_{\max}^{(\alpha)}(0,0)\;,\;g_{\max}^{(\alpha)}(0,1/3)\big\}
$$
for all $\alpha>2$.
Hence, on $D_1$, $g_{\max}^{(\alpha)}$ obtains its global max at the point $(x_1,x_2)=(1/3,1/3)$ which is equivalent
to $t_1=t_2=t_3=4/3$.

\subsubsection{Maximizing $f_{\max}^{(\alpha)}$ on the domain $D_2$}
Denote by $x=\sqrt{1-\frac{2}{3}t_1}$ and $y_i=\sqrt{1-\frac{3}{4}t_i}$ for $i=2,3$ .
In these variables, the function $h_{\max}^{(\alpha)}(y_2,y_3)\equiv f_{\max}^{(\alpha)}(4-t_2-t_3,t_2,t_3)$
is given by
\begin{align}
h_{\max}^{(\alpha)}(y_2,y_3)=&\nonumber\\
\frac{1}{\alpha-1}\Big[ & 1-\frac{2}{3^{\alpha+1}}\sum_{i=2,3}\left((1+y_i)^\alpha-\frac{1}{2}(1-2y_i)^\alpha\right)\nonumber\\
& \;\;\;\;\;\;\;\;\;\;\;\;-\frac{1}{4^\alpha}\left((1+x)^\alpha+\frac{1}{3}
\left(1-3x\right)^\alpha\right)
\Big],\nonumber
\end{align}
where in term of the variables $y_2$ and $y_3$, $x=\frac{1}{3}\sqrt{1-8(y_2^2+y_3^2)}$.
In terms of these new variables, the domain $D_2$ is therefore given by 
$$
D_2=\Big\{(y_2,y_3)\Big|y_2^2+y_3^2\leq\frac{1}{8}\;\;,\;\;y_3\geq y_2\geq 0\Big\}\;.
$$ 
We first look at the critical points in the interior of $D_2$. 

The critical points of $h_{\max}^{(\alpha)}(y_2,y_3)$ satisfies the conditions
\begin{align}
& \frac{\partial h_{\max}^{(\alpha)}}{\partial y_2}(y_2,y_3)=\frac{2y_2}{9}\left(v_\alpha(x)-u_\alpha(y_2)\right)=0\nonumber\\
& \frac{\partial h_{\max}^{(\alpha)}}{\partial y_3}(y_2,y_3)=\frac{2y_3}{9}\left(v_\alpha(x)-u_\alpha(y_3)\right)=0\;.\nonumber
\end{align}
Hence, it follows from lemma~\ref{ab} and lemma~\ref{bb} that the global max of $h_{\max}^{(\alpha)}(y_2,y_3)$
is obtained on the boundary of $D_2$.

The boundary of $D_2$ consists of the line $y_2=0$, the line $y_1=y_2$, and the curve 
$(y_2,y_3)=(\sin\theta/\sqrt{8},\cos\theta/\sqrt{8})$ with $0\leq\theta\leq\pi/4$. The global maximum of
$h_{\max}^{(\alpha)}$ on the lines $y_2=0$ and $y_1=y_2$ is obtained on one of the endpoints
$(0,1/2)$, $(0,0)$, and $(1/4,1/4)$. The argument follows from lammas~\ref{ab} and~\ref{bb}, 
in the same way as it was used in the analysis of the boundary of $D_1$. Therefore, we focus now
on the curve $(y_2,y_3)=(\sin\theta/\sqrt{8},\cos\theta/\sqrt{8})$ with $0\leq\theta\leq\pi/4$.

Let
$$
h(\theta)\equiv h_{\max}^{(\alpha)}\left(\frac{\sin\theta}{\sqrt{8}},\frac{\cos\theta}{\sqrt{8}}\right).
$$
Note that for these values of $y_2$ and $y_3$, $x=0$. Hence, 
$$
h'(\theta)=\frac{\sin(2\theta)}{72}\left[u_\alpha\left(\frac{\cos\theta}{\sqrt{8}}\right)-
u_\alpha\left(\frac{\sin\theta}{\sqrt{8}}\right)\right]
$$
From lemma~\ref{ab} the function $u_\alpha$ is one-to-one for $2<\alpha<4$. Therefore, the only
critical point in this case is $(y_2,y_3)=(1/4,1/4)$. For $\alpha\geq 4$ it is a simple calculation to verify
that $h(\theta)<h_{\max}^{(\alpha)}(0,0)$. Therefore, since
$$
h_{\max}^{(\alpha)}(0,0)> \max\big\{h_{\max}^{(\alpha)}(1/4,1/4)\;,\;h_{\max}^{(\alpha)}(0,1/2)\big\}
$$
for all $\alpha>2$,
we conclude that on $D_2$, $h_{\max}^{(\alpha)}$ obtains its global max at the point $(y_1,y_2)=(0,0)$ which is equivalent to $t_1=t_2=t_3=4/3$.

\subsection{Calculation of $L^{(\alpha)}$}
In this section we prove that for $0<\alpha<2$, $L^{(\alpha)}=f_{\min}^{(\alpha)}(4/3,4/3,4/3)$, and 
$(4/3,4/3,4/3)$ is the only point of global minimum. From theorem~\ref{lm} it follows that 
$f_{\min}^{(\alpha)}(t_1,t_2,t_3)$ for $0<\alpha<2$ is given
by the exact same expression as $f_{\max}^{(\alpha)}(t_1,t_2,t_3)$ for $\alpha>2$.
Therefore, our proof that $L^{(\alpha)}=f_{\min}^{(\alpha)}(4/3,4/3,4/3)$ follows the exact same steps
used in the calculation of $U^{(\alpha)}$ for $\alpha>2$. The only difference is that the lemmas~\ref{ab} and~\ref{bb}
do not hold for $0<\alpha<2$, and instead we have the following lemma.
\begin{lemma}
If $0<\alpha<2$, $0\leq x\leq1/3$, and $0\leq y\leq 1/2$, then $u_\alpha '(y)>0$ and $v_\alpha '(x)>0$.
\end{lemma}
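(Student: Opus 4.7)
The plan is to replace the explicit differentiation used in lemma~\ref{ab} (which relied on $\alpha > 2$) with an integral representation that is well behaved for all $0 < \alpha < 2$, and then pin down the sign by a symmetry-and-monotonicity comparison. Writing $(1+y)^{\alpha-1} - (1-2y)^{\alpha-1} = (\alpha-1)\int_{1-2y}^{1+y} t^{\alpha-2}\, dt$ and substituting $t = 1+sy$ (using $(1+y) - (1-2y) = 3y$) gives the clean form
\begin{equation*}
u_\alpha(y) = \frac{\alpha}{3^{\alpha-1}} \int_{-2}^{1} (1+sy)^{\alpha-2}\, ds,
\end{equation*}
valid for $0 \le y < 1/2$; the apparent factor $1/(\alpha-1)$ and the $1/y$ singularity both drop out. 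An identical substitution yields
\begin{equation*}
v_\alpha(x) = \frac{\alpha}{4^{\alpha-1}} \int_{-3}^{1} (1+sx)^{\alpha-2}\, ds.
\end{equation*}

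Differentiating under the integral sign, $u'_\alpha(y) = \alpha(\alpha-2) 3^{1-\alpha} \int_{-2}^{1} s(1+sy)^{\alpha-3}\, ds$. Since $0 < \alpha < 2$, the prefactor $\alpha(\alpha-2)$ is strictly negative, so it suffices to show the integral is strictly negative for $0 < y \le 1/2$. Splitting the integral at $s=0$ and flipping the sign on the negative half gives
\begin{equation*}
\int_{-2}^{1} s(1+sy)^{\alpha-3}\, ds = \int_{0}^{1} s(1+sy)^{\alpha-3}\, ds - \int_{0}^{2} u(1-uy)^{\alpha-3}\, du.
\end{equation*}
Because $\alpha-3 < 0$ and $1-uy < 1+uy$ for $u,y > 0$, the function $t \mapsto t^{\alpha-3}$ being strictly decreasing yields $(1-uy)^{\alpha-3} > (1+uy)^{\alpha-3}$; hence $\int_{0}^{1} u(1-uy)^{\alpha-3}\, du$ strictly dominates $\int_{0}^{1} s(1+sy)^{\alpha-3}\, ds$. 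The leftover piece $\int_{1}^{2} u(1-uy)^{\alpha-3}\, du$ is positive (the integrand is positive on $(1,2)$), so the difference is strictly negative, giving $u'_\alpha(y) > 0$. The case $y=0$ follows by continuity (or by the direct Taylor expansion of the integrand in $y$). The proof for $v_\alpha$ is word-for-word the same, with the interval $[-3,1]$ in place of $[-2,1]$ and the bound $0 < x \le 1/3$ ensuring $1-3x \ge 0$.

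The only technical point to watch is the integrability of $(1-uy)^{\alpha-3}$ near the endpoint $uy = 1$: since $\alpha - 3 \in (-3,-1)$ this exponent can be as negative as we like, but for $y \le 1/2$ strictly we have $1-uy \ge 1-2y > 0$ on the full interval $u\in[0,2]$, so every integral in sight converges absolutely; the limiting case $y = 1/2$ is irrelevant to the derivative statement since monotonicity on $[0,1/2)$ plus continuity of $u_\alpha$ (which fails only in the trivial direction of $u_\alpha(1/2) = +\infty$ when $\alpha < 1$) already proves the lemma. The whole argument is thus elementary; the only conceptual step is spotting that the change of variables $t = 1+sy$ turns $u_\alpha$ into an affine-parameter average of $t^{\alpha-2}$, at which point all of the awkward algebra in the proofs of lemmas~\ref{ab} and~\ref{bb} is replaced by a single comparison of $(1\pm uy)^{\alpha-3}$.
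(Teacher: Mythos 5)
Your proof is correct, and it takes a genuinely different route from the paper's. The paper differentiates $u_\alpha$ explicitly, rewrites $u_\alpha'(y)$ as a prefactor times $\frac{1+2\beta y}{(1-2y)^{\beta}}-\frac{1+\beta y}{(1+y)^{\beta}}$ with $\beta=2-\alpha$, checks the sign at the endpoints $y=0$ and $y\to 1/2$, and then rules out interior sign changes by showing the critical-point equation $4(1+y)^{\beta+1}=(1-2y)^{\beta+1}$ has no solution in $(0,1/2)$; the $v_\alpha$ case is asserted to be analogous. You instead absorb the awkward $\frac{1}{(\alpha-1)y}$ factor into the integral representation $u_\alpha(y)=\frac{\alpha}{3^{\alpha-1}}\int_{-2}^{1}(1+sy)^{\alpha-2}\,ds$ (and likewise $v_\alpha(x)=\frac{\alpha}{4^{\alpha-1}}\int_{-3}^{1}(1+sx)^{\alpha-2}\,ds$), differentiate under the integral sign, and settle the sign of $\int_{-2}^{1}s(1+sy)^{\alpha-3}\,ds$ by the pointwise comparison $(1-uy)^{\alpha-3}>(1+uy)^{\alpha-3}$ together with the positivity of the leftover piece on $(1,2)$. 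I checked the representation (the substitution $t=1+sy$ over $[1-2y,1+y]$ is exactly right, and the two cases reduce to the single parameter change $[-2,1]\mapsto[-3,1]$), the sign of the prefactor $\alpha(\alpha-2)$, and the splitting of the integral; all steps are sound, and your treatment of the endpoints ($y=0$ directly, $y=1/2$ and $x=1/3$ via one-sided divergence/monotonicity on the open interval) is no weaker than the paper's, which likewise only computes limits there. What your approach buys: it handles $\alpha=1$ without a separate limiting argument, it replaces the paper's endpoint evaluations plus no-critical-point argument with a single monotone comparison, and it makes the role of the factor $\alpha-2$ (hence the dichotomy with Lemma~\ref{ab}) transparent; the only thing it gives up is the explicit closed form for $u_\alpha'$ that the paper reuses nowhere else. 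Either proof is acceptable; yours is cleaner and genuinely unifies the $u_\alpha$ and $v_\alpha$ cases rather than asserting that "the same arguments" apply.
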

\begin{proof}
A simple calculation gives
$$
u_{\alpha}'(y)=\frac{\alpha}{(\alpha-1)3^{\alpha-1}}\frac{1}{y^2}\left[
\frac{(1+2\beta y)}{(1-2y)^\beta}-\frac{(1+\beta y)}{(1+y)^\beta}\right]\;,
$$
where $\beta\equiv 2-\alpha$. From the assumption of the lemma $0<\beta<2$. Therefore, one can easily check that
$\lim_{y\rightarrow 0}u_\alpha '(y)>0$ and $\lim_{y\rightarrow 1/2}u_\alpha '(y)=+\infty$. 
All that is left to show is that in the domain $0<y<1/2$ the function
$$
w(y)\equiv\frac{1}{(1-\beta)}\frac{1}{y^2}\left[
\frac{(1+2\beta y)}{(1-2y)^\beta}-\frac{(1+\beta y)}{(1+y)^\beta}\right]
$$ 
is positive.  To find its minimum value, we would like to calculate
its critical points. However, the requirement $w'(y)=0$ gives $4(1+y)^{\beta+1}=(1-2y)^{\beta+1}$. Hence, there are no critical points in the domain $(0,1/2)$. That is,, $u_{\alpha}'(y)>0$ for $0<\alpha<2$. Following the same arguments, one can show that $v_{\alpha}'(x)>0$
for $0<\alpha<2$.
\end{proof}

With this lemma replacing lemmas~\ref{ab} and~\ref{bb}, the proof that $L^{(\alpha)}=f_{\min}^{(\alpha)}(4/3,4/3,4/3)$
follows exactly the same steps that appear in the calculation of $U^{(\alpha)}$.
 
\end{appendix}

\end{document}